\numberwithin{equation}{section}
\newcommand{\Larrow}[1]{
\parbox{#1}{\tikz{\draw[->](0,0)--(#1,0);}}
}
\theoremstyle{plain}
\newtheorem{thm}{Theorem}[section]
\newtheorem{lem}[thm]{Lemma}
\newtheorem{cor}[thm]{Corollary}
\newtheorem{prop}[thm]{Proposition}
\theoremstyle{definition}
\newtheorem{defn}[thm]{Definition}
\newtheorem{remk}[thm]{Remark}
\DeclareMathOperator{\Hom}{Hom}
\DeclareMathOperator{\Ob}{Ob}
\DeclareMathOperator{\Aut}{Aut}
\DeclareMathOperator{\FdHilb}{FdHilb}
\title{Fusion Structure from Exchange Symmetry in (2+1)-Dimensions}
\author[Sachin J. Valera]{Sachin J. Valera$^{\dagger}$}
\thanks{$^{\dagger}$\textit{Selmer Center, Department of Informatics, University of Bergen, Norway}}
\begin{document}
\begin{abstract}Until recently, a careful derivation of the fusion structure of anyons from some underlying physical principles has been lacking. In [Shi et al., Ann. Phys., 418 (2020)], the authors achieved this goal by starting from a conjectured form of entanglement
area law for 2D gapped systems. In this work, we instead start with the principle of exchange symmetry, and determine the minimal prescription of additional postulates needed to make contact with unitary ribbon fusion categories as the appropriate {algebraic} framework for modelling anyons. Assuming that 2D quasiparticles are spatially localised, we build a functor from the coloured braid groupoid to the category of finite-dimensional Hilbert spaces. Using this functor, we construct a precise notion of exchange symmetry, allowing us to recover the core fusion properties of anyons. In particular, given a system of $n$ quasiparticles, we show that the action of a certain $n$-braid $\beta_{n}$ uniquely specifies its superselection sectors. We then provide an overview of the braiding and fusion structure of anyons in the usual setting of braided $6j$ fusion systems. By positing the duality axiom of [A. Kitaev, Ann. Phys., 321(1) (2006)] and {assuming} that there are finitely many distinct topological charges, we arrive at the framework of ribbon categories. 
\end{abstract}
\maketitle

\vspace{-4.5mm}
\section{Introduction} 
\label{introduction}
The study and classification of topological phases of matter is a pervasive theme of contemporary physics. Quasiparticles with exotic exchange statistics (called ``anyons'') are a hallmark of two-dimensional topological phases. The experimental realisation and control of anyons is a much sought-after goal, owing especially to proposed schemes for the robust processing of quantum information \cite{kitqc,univmf,tqc}.\\ 

The algebraic theory of anyons (of which various detailed accounts may be found \cite{kitaev,bonderson,simonox,preskill,wangbook}) is considered mature \cite{mtqc,beyond}. It is well-understood that the statistical properties of anyons arise due to the distinguished topology of exchange trajectories in two dimensions. In a given theory, anyons are distinguished by their ``topological charges'' which characterise their mutual statistics. However, it is further expected that these charges possess a fusion structure wherein the `combination' (or \textit{fusion}) of two anyons effectively results in a single anyon that may possibly exist in a superposition of topological charges. In some expositions, fusion is motivated using flux-charge composite toy models. Fusion structure is also readily apparent in 2D spin-lattice models such as the toric code. However, a careful treatment of the emergence of this fusion structure in a general setting is lacking. We therefore seek to provide a ground-up construction of the braiding and fusion structure of anyons. \\ 

\textit{Quantum symmetries} is an umbrella term for the algebraic structures that are used to describe topological quantum matter. Ribbon fusion categories provide the mathematical framework for studying the statistical behaviour of anyons. Often, anyons are introduced through a discussion of identical particles: the same arguments that lead us to conclude that there are only bosons and fermions in three or more spatial dimensions, instead indicate the possibility of fractional statistics in two dimensions. There is an unfortunate gulf between the language of identical particles and that of ribbon categories. Our objective is to clarify the connection between quantum symmetries and the elementary, yet profound principle of \textit{exchange symmetry} in quantum mechanics. Superselection sectors play a key role in our exposition. \\

A series of `assumptions' or postulates \textbf{A1}-\textbf{A3} are given throughout the text. They are proposed as the minimal prescription needed to recover ribbon fusion categories (as an algebraic model for anyons) from exchange symmetry in $(2+1)$-dimensions. Here, \textbf{A2}-\textbf{A3} are presented in terms slightly more simplified than in the main text. The ``associativity condition'' in \textbf{A3} refers to (\ref{brazlioti}). \\

\hspace{-6mm}\fbox{
\parbox{\textwidth}{
\noindent\textbf{A1.} Two-dimensional quasiparticles are \textit{spatially localised} phenomena. \\ 

\noindent\textbf{A2.} (i) The Hilbert space of finitely many quasiparticles is finite-dimensional. \\ \phantom{a}\quad \hspace{1.5mm} (ii) A theory of anyons has finitely many distinct topological charges.\\

\noindent\textbf{A3.} For any topological charge $q$, there exists a dual charge $\bar{q}$ such that a \\ \phantom{a}\quad \hspace{1.5mm} certain associativity condition is satisfied with respect to their fusion.} \\
}\\[3mm]

\noindent The \textit{localisation condition} \textbf{A1} is a relevant physical consideration. Less satisfyingly, \textit{finiteness assumption} \textbf{A2} appears to be prescribed for mathematical convenience. Some physical motivation is provided for \textit{Kitaev's duality axiom} \textbf{A3} in \cite{kitaev}. The main results of this paper are presented in Sections \ref{exches2} and \ref{superbraidsec}, where we show that \textbf{A1} and \textbf{A2}(i) are sufficient to recover the core braiding and fusion structure of 2D quasiparticles. In Section \ref{anyonsec}, we outline how \textbf{A2}(ii) and \textbf{A3} are required to make contact with ribbon fusion categories as algebraic models for anyons.

\subsection{Relation to existing work}\label{relshisec} In attempting to derive fusion structure from some underlying physical principles, our work is similar in spirit to \cite{shikokim} where the authors show that such structure may be recovered from the entanglement area law
\begin{equation}S(A)=\alpha l - \gamma \label{EAL}\end{equation}
where $S(A)$ is the von Neumann entropy of a simply-connected region $A$, $l$ is the perimeter of $A$ and $\gamma$ is a constant correction term (which the authors also show to be equal to $\ln\mathcal{D}$, where $\mathcal{D}$ is the total quantum dimension of the anyon theory).
\vspace{-2mm}
\def\arraystretch{1.4}
{\small
\begin{table}[H]
\centering
\begin{tabular}{|c|c|c|} 
\hline
  & \textbf{Our approach} & \textbf{Approach in \cite{shikokim}} \\ \hline 
 \textit{Physical principle} & Exchange symmetry & Entanglement area law \\ \hline
 \textit{Construction} & \makecell{Local representations of \\ coloured braid groupoid} & Information convex sets \\ \hline
\end{tabular}
\vspace{1mm}
\label{comparitable}
\end{table}
}
\def\arraystretch{1}
\vspace{-4mm}
\noindent While the construction in \cite{shikokim} may be more fundamental, the narrative of exchange symmetry might be more familiar to the majority of readers. $F$ and $R$ symbols can be recovered from our construction, and we are able to arrive at the usual formalism (of unitary ribbon fusion categories) for modelling theories of anyons. Ultimately, the two approaches will offer different insights and will appeal to different audiences. However, we suggest that they might be viewed as complementing one another. By assuming (\ref{EAL}) it follows that \textbf{A2}(i) implies \textbf{A2}(ii) \cite[Theorem 4.1]{shikokim}, and that for any topological charge $q$ there exists a \textit{unique} dual charge $\bar{q}$ such that they will fuse to the vacuum in a \textit{unique} way \cite[Proposition 4.9]{shikokim}. Combining the two approaches, we arrive at an alternative to \textbf{A1}-\textbf{A3}:\footnote{The authors of \cite{shikokim} advocate for using two local entropic constraints \cite[A0-A1]{shikokim} in lieu of (\ref{EAL}). }\\

\hspace{-6mm}\fbox{
\parbox{\textwidth}{
\noindent\textbf{P1.} Two-dimensional quasiparticles are \textit{spatially localised} phenomena. \\ 

\noindent\textbf{P2.} The Hilbert space of finitely many quasiparticles is finite-dimensional. \\

\noindent\textbf{P3.} The system of quasiparticles satisfies entanglement area law (\ref{EAL}).
}
}

\subsection{Outline of paper} In Section \ref{prelimsec}, we recap the notion of superselection rules and identical particles. This is followed by a discussion of the difference between particle exchanges in two and three spatial dimensions.  In Section \ref{exches3}, we formulate exchange symmetry via the action of the motion group of a many-particle system, and relate this to the boson-fermion superselection rule for fundamental particles.\\ 

In Section \ref{exches2}, we consider the action of braiding on a system of 2D quasiparticles. The localisation condition \textbf{A1} means that this action is generally not given by a representation of the braid group; instead, it is given by a local representation of the ``coloured'' braid groupoid. This action is described in Section \ref{quab}, and we discuss its interpretation as a functor in Appendix \ref{cbgactappx}. The heart of our construction is presented in Section \ref{exches2dsec}, where we adapt the definition of exchange symmetry from Section \ref{exches3} to formulate an appropriate commutator via the braiding action. This gives rise to a notion of exchange symmetry on all subsystems of quasiparticles. In Section \ref{supsec2dsec}, we see how the associated superselection sectors of subsystems fit together to describe the Hilbert space of the whole system. \\

In Section \ref{superbraidsec}, we present our main results. We show that the superselection sectors of an $n$-quasiparticle system correspond to the eigenspaces under the action of an $n$-braid $\beta_{n}$ which we call the \textit{superselection braid} (Theorem \ref{main1}). We recover the core fusion structure amongst these superselection sectors by showing that they exhibit the same statistical behaviour as quasiparticles, allowing us to identify them as such (Theorem \ref{mainthm2}). The associativity and commutativity of fusion is deduced in Corollary \ref{fucoma}. We prove several braid identities pertaining to $\beta_{n}$ and see that this braid encodes the structure of all fusion trees for an $n$-quasiparticle fusion space (Theorem \ref{ssbrecur}). We finally show that $\beta_{n}$ is the unique braid (up to orientation) whose action specifies the superselection sectors of an $n$-quasiparticle system (Theorem \ref{sidethm1}). \\

In Section \ref{anyonsec}, we review the braiding and fusion structure from Section \ref{superbraidsec} within the usual setting of braided $6j$ fusion systems, and present the additional postulates required to make contact with the framework of ribbon fusion categories. In Section \ref{ssbremk1of2}, we observe some $R$-matrix identities that follow from our construction: these reveal some information about the spectrum of $\beta_{n}$, and provide an ansatz for the monodromy operator which is consistent with the categorical ribbon relation.\\

In Section \ref{future}, we give a concise summary of our exposition, and speculate on a possible extension of our construction.

\section{Preliminaries}
\label{prelimsec}
\subsection{Superselection rules and identical particles}
\label{SSRIP}
\noindent Consider a system with Hilbert space $\mathcal{H}$. A \textit{superselection rule} (SSR) is given by a normal operator $\hat{J}:\mathcal{H}\to\mathcal{H}$ where
\begin{equation} [\hat{O},\hat{J}]=0 \label{ssrdefcomm}\end{equation}
for \textit{all} observables $\hat{O}$ of the system. Suppose that $\mathcal{H}'$ and $\mathcal{H}''$ are any two distinct \textit{superselection sectors} (eigenspaces of $\hat{J}$). Then (\ref{ssrdefcomm}) tells us that for any \mbox{$\ket{\psi'}\in\mathcal{H}'$}, $\ket{\psi''} \in\mathcal{H}''$ and any observable $\hat{O}$ on $\mathcal{H}$, we have
\begin{equation} \bra{\psi'}\hat{O}\ket{\psi''}=0 \end{equation}
The defining feature of SSRs is that they preclude the observation of relative phases between states from distinct superselection sectors: let $\ket{\psi}=\alpha\ket{\psi'}+\beta\ket{\psi''}$ and \mbox{$\ket{\psi_{_{\theta}}}=\alpha\ket{\psi'}+e^{i\theta}\beta\ket{\psi''}$} be  normalised states. We have 
\begin{equation}\langle \hat{O} \rangle_{\psi}=\langle \hat{O} \rangle_{\psi_{_{\theta}}}=tr(\hat{O}\hat{\rho}) \ \ \text{for all} \ \hat{O},\theta\end{equation}
where $\hat{\rho}=|\alpha|^{2}\ket{\psi'}\bra{\psi'}+|\beta|^{2}\ket{\psi''}\bra{\psi''}$ (i.e.\ if superpositions $\psi_{_{\theta}}$ were to exist, we would be incapable of physically distinguishing them from a statistical mixture). \\ \\
Examples of superselection observables\footnote{SSRs for which $\hat{J}$ is an observable.} include spin, mass\footnote{Bargmann's mass SSR arises through demanding the Galilean covariance of the Schr\"{o}dinger equation: this only pertains to nonrelativistic systems, since Galilean symmetry is superseded by Poincar\'{e} symmetry in special relativity.} and electric charge. 
\mbox{Notably}, the spin SSR concerns the superposition of integer and half-integer spins: by the spin-statistics theorem, this is equivalent to the boson-fermion SSR. These two equivalent SSRs are sometimes referred to as the univalence SSR. \\  \\
The \textit{intrinsic} properties of a particle may be characterised as corresponding to \mbox{quantum} numbers with an associated SSR. Two particles are \textit{identical} if all of their \mbox{intrinsic} properties match exactly e.g.\ all electrons are identical. 

\subsection{Particle exchanges}
\label{pexchgs}

Consider the exchanges of $n$ identical particles\footnote{It will be assumed that particles are point-like.} on a connected $m$-manifold $\mathcal{M}$ for $m\geq2$. The homotopy classes of exchange trajectories in $\mathcal{M}$ form a group $G_{n}(\mathcal{M})\cong\pi_{1}(\mathcal{U}_{n}(\mathcal{M}))$ under composition (the fundamental group of the $n$th unordered configuration space of $\mathcal{M}$). We will call this the \textit{motion group}. We are interested in two cases for $\mathcal{M}$. Firstly, we have $G_{n}(\mathbb{R}^{d})\cong S_{n}$ (the symmetric group) for $d\geq3$. Here, a tangle\footnote{We call two successive exchanges of the same orientation on a pair of adjacent particles a \textit{tangle}.} is homotopic to $0$ tangles and exchanges are insensitive to orientation (Figure \ref{nulltangle}).\vspace{-3mm}
\vspace{3mm}
\begin{figure}[H]\centering \includegraphics[width=0.75\textwidth]{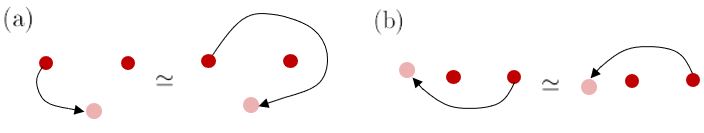} \caption{Exchange trajectories in $\mathbb{R}^{d}$ for (a) a clockwise tangle (right), and \mbox{(b) single exchanges}. When $d\geq3$, deformations `$\simeq$'  lift the strands through the extra spatial dimension(s).} \label{nulltangle}\end{figure}
\vspace{-5mm}
\noindent Secondly, for a 
surface $\mathcal{S}$ we have $G_{n}(\mathcal{S})\cong B_{n}(\mathcal{S})$ (the surface braid group). 
Given any $n$ points in (the interior of) $\mathcal{S}$, we can take some disc $D\subset\mathcal{S}$ such that all $n$ points lie inside $D$. We know that $G_{n}(\mathbb{D}^{2})\cong B_{n}$ where $\mathbb{D}^{2}$ is the $2$-disc and
\begin{equation}B_{n}=\left\langle\begin{array}{c|c}
      \sigma_{1},\ldots,\sigma_{n-1} & \begin{array}{c}
	\sigma_{i}\sigma_{i+1}\sigma_{i}=\sigma_{i+1}\sigma_{i}\sigma_{i+1} \\
	\sigma_{i}\sigma_{j}=\sigma_{j}\sigma_{i} \ , \ |i-j|\geq2
\end{array} \end{array}\right\rangle\end{equation}
is the Artin braid group. We will denote the identity element by $e$. The braid relations for $B_{n}$ thus also hold in $B_{n}(\mathcal{S})$ \cite{guaschi}. When considering particle exchanges on a surface $\mathcal{S}$, we henceforth restrict our attention to $B_{n}(\mathbb{D}^{2})$.\vspace{1mm}
\begin{figure}[H]\centering \includegraphics[width=0.75\textwidth]{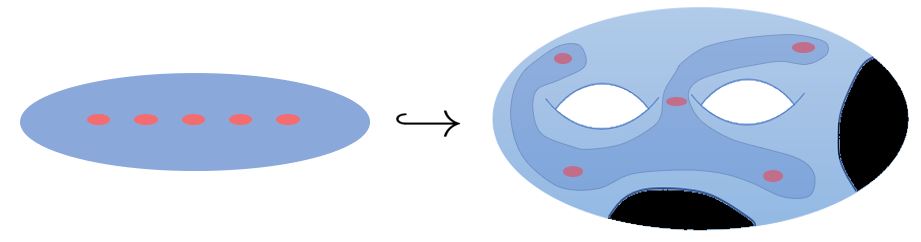} \caption{Particles are considered as lying in some disc $D\subset\mathcal{S}$. Since we are only interested in the topology of exchange trajectories and $B_{n}(\mathbb{D}^{2})\cong B_{n}(D)$, we can restrict our attention to particles in $\mathbb{D}^{2}$.} \label{discinc}\end{figure}\vspace{-2mm}
\begin{remk}In particular, this means that what we learn about the exchange statistics of \mbox{particles} on a disc is also applicable to particles on surfaces with arbitrary \mbox{topology}. \label{arbtopremk}\end{remk}
\vspace{-2mm}
\begin{figure}[H]\centering \includegraphics[width=0.52\textwidth]{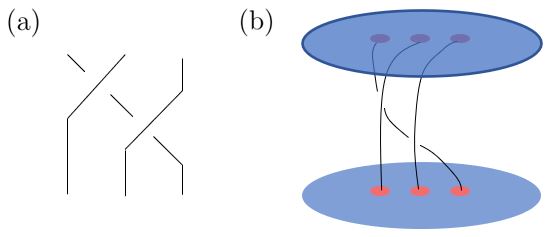} \caption{A braid diagram with $n$ strands will be interpreted as a worldline diagram for $n$ particles on a disc. \textit{We will let the time axis run downwards}. The above diagram depicts this for the $3$-braid $\sigma_{2}\sigma_{1}$.} \label{braidworldlineex}\end{figure}\vspace{2mm}

\vspace{-3mm}
\section{Exchange Symmetry in Three or More Spatial Dimensions}
\label{exches3}
A permutation of $n$ identical particles will be indistinguishable from the original configuration: this is called \textit{exchange symmetry} and may be concisely expressed by
\begin{equation}[\hat{O},\rho(g)]=0 \label{exsym1}\end{equation}
for all observables $\hat{O}$ on $\mathcal{H}$ (the $n$-particle Hilbert space), and all $g$ in the $n$-particle motion group $G$ where $\rho:G\to U(\mathcal{H})$ is the unitary linear representation describing the evolution in $\mathcal{H}$ under the action of $G$. It is easy to see that if (\ref{exsym2}) holds for all generators $g_{i}$ of $G$, then (\ref{exsym1}) follows. 
\begin{equation}[\hat{O},\rho(g_{i})]=0 \label{exsym2}\end{equation}

\noindent Recall that $S_{n}$ is the motion group of $n$ particles in $\mathbb{R}^{d}$ for $d\geq3$. We write
\begin{equation}S_{n}=\left\langle\begin{array}{c|c}
      s_{1},\ldots,s_{n-1} & \begin{array}{c}
	s_{i}^{2}=e \\
	s_{i}s_{i+1}s_{i}=s_{i+1}s_{i}s_{i+1} \\
	s_{i}s_{j}=s_{j}s_{i} \ , \ |i-j|\geq2
\end{array} \end{array}\right\rangle\end{equation}
If $\dim(\mathcal{H})=1$, it is clear that $\rho$ can only be one of $\rho^{\pm}$ where
\begin{equation}\begin{split}\rho^{\pm} : S_{n} &\to U(1)  \\ s_{i} &\mapsto\pm{1} \end{split}\end{equation}
If two identical particles are exchanged and their wavefunction is scaled by $+1$, they are called \textit{bosons}; if their wavefunction is scaled by $-1$, they are called \textit{fermions}.\\

\noindent Letting $\dim(\mathcal{H})>1$, it is consistent to expect that statistical evolutions determined by higher-dimensional representations of the symmetric group should be possible. Such exchange statistics are referred to as parastatistics. However, `paraparticles' have never been observed in nature, and all known fundamental particles may be classified as being either a boson or a fermion. Indeed, the classification of identical particles as being either bosons or fermions is sometimes included as a postulate of quantum mechanics (called the \textit{symmetrisation postulate}). If this postulate is relaxed then it can still be shown (under the pertinent constraints) that the boson-fermion classification will hold \cite{dop1, dop2, mug07}.\\

\noindent In order for (\ref{exsym1}) to be consistent with the symmetrisation postulate, we must levy some restrictions on $\rho$ when $G=S_{n}$ and $\dim(\mathcal{H})>1$. The eigenvalues of $\rho(s_{i})$ belong to a nonempty subset of $\{\pm1\}$. We respectively denote the corresponding eigenspaces (one of which is possibly zero-dimensional) by $\mathcal{H}_{i}^{\pm}$. Since each such eigenspace defines a superselection sector and the $n$ particles are identical (and are thus either all bosons or all fermions by the postulate), $\rho$ must be such that $\mathcal{H}_{i}^{\pm}=\mathcal{H}_{j}^{\pm}$ for all $i,j$. We thus have $\mathcal{H}=\mathcal{H}^{+}\oplus\mathcal{H}^{-}$ (i.e. the subscripts are dropped). Under this restriction, we may thus recover the boson-fermion SSR from (\ref{exsym1}).

\begin{remk}For a system of $n$ bosons or fermions, there is typically no subspace describing a subsystem of $k<n$ particles. This is implicit in the structure of Fock space\footnote{As a consequence of the mass SSR, note that the sectors of Fock space correspond to a SSR for the particle number operator in the nonrelativistic limit.} (here $\mathcal{H}^{(k)}_{(\pm)}$ denotes the space of (anti)symmetric states for $k$ identical particles):
\begin{equation}\mathscr{H}_{\pm}=\mathcal{H}^{(0)}\oplus\mathcal{H}^{(1)}\oplus\mathcal{H}^{(2)}_{\pm}\oplus\mathcal{H}^{(3)}_{\pm}\oplus\ldots\end{equation} 
E.g. $\mathcal{H}^{(2)}_{+}\not\subset\mathcal{H}^{(3)}_{+}.$ For instance, states such as $\frac{1}{\sqrt{2}}(\ket{01}-\ket{10})\in\mathcal{H}_{-}^{(2)}$ do not describe a physical entanglement, since the subsystem for an individual particle is physically inaccessible \cite{zanardi}. This is in contrast to anyonic systems which have a well-defined description of state spaces for particle subsystems (since anyons are \textit{localised} phenomena). Nonetheless, there exist circumstances under which some notion of distinguishability amongst $n$ identical bosons or fermions may be recovered: for instance, when their wavefunctions have (approximately) disjoint compact support. This can happen if the particles are far apart, or separated by sufficiently strong potentials.\end{remk}  

\section{Exchange Symmetry in Two Spatial Dimensions}
\label{exches2}
\subsection{Quasiparticles and braiding}
\label{quab}
Although there are no fundamental particles in two spatial dimensions, it is well-known that various two-dimensional systems are theoretically capable of supporting localised excitations with fractional statistics \cite{mooreread,xwenfqhe,parafermions,weylspinliquids}: these emergent phenomena are known as \textit{quasiparticles}\footnote{We will use the terms `quasiparticle' and `particle' interchangeably.}; they have no internal degrees of freedom and may thus be considered as \textit{identical}. The localised nature of these excitations is instrumental in the emergence of fusion structure. \\ \\
\fbox{
\parbox{\textwidth}{
\textbf{A1.} Two-dimensional quasiparticles are \textit{spatially localised} phenomena.}} \\ \\

Recall that $B_{n}$ is the motion group of $n$ particles on a disc. Then for a two-quasiparticle system with Hilbert space $\mathcal{V}$, the action of the motion group is is given by a unitary linear representation 
\begin{equation}\rho:B_{2}\to U(\mathcal{V})\label{saskexch}\end{equation}
The eigenvalues $\{e^{iu_{_{X}}}\}_{X}$ of $\rho(\sigma_{1})$ lie in $U(1)$, and we have the corresponding decomposition $\mathcal{V}=\bigoplus_{X}\mathcal{V}_{X}$ where eigenspaces $\mathcal{V}_{X}$ define superselection sectors by exchange symmetry as expressed in (\ref{exsym1}).\footnote{It is assumed that the superselection sectors are finite-dimensional, and that the number of distinct superselection sectors is finite. This assumption is later codified as \textbf{A2} in Section \ref{finduel}.} The possibly arbitrary exchange phase $e^{iu_{X}}$ is what earns \textit{any}ons their namesake \cite{wilczek}. 
\begin{remk}Now consider an $n$-particle system for $n\geq2$. \textbf{A1} permits us to consider the Hilbert space \mbox{associated} with a subsystem of $k$ adjacent quasiparticles (where $2\leq k\leq n$). Consequently, the action of the \mbox{motion} \textit{subgroup} $B_{k}$ on any such subsystem will be independent of the rest of the \mbox{system}. The description of the superselection sectors and exchange statistics given by the action of $B_{2}$ on some pair of quasipartilces is thus a property intrinsic to said pair.\label{a1cons1}\end{remk}
\noindent Consider a $2$-quasiparticle subsystem (of particles labelled $q_{i}$ and $q_{i+1}$ located at the $i^{th}$ and ${i+1}^{th}$ positions respectively) of an $n$-quasiparticle system. Denote the Hilbert space of this subsystem by $\mathcal{V}^{\{q_{i}, q_{i+1}\}}$ where $\{q_{i},q_{i+1}\}$ is an unordered set. Following Remark \ref{a1cons1}, (\ref{saskexch}) defines a fixed action
\begin{equation}\rho_{\{q_{i},q_{i+1}\}}:B_{2}\to U(\mathcal{V}^{\{q_{i},q_{i+1}\}})\label{subfixie}\end{equation}
on $q_{i}$ and $q_{i+1}$, and we write the eigenspace decomposition $\mathcal{V}^{\{q_{i},q_{i+1}\}}=\bigoplus_{X}\mathcal{V}^{\{q_{i},q_{i+1}\}}_{X}$ for $\rho_{\{q_{i},q_{i+1}\}}(\sigma_{1})$. We label the quasiparticles from \mbox{$1$ to $n$} and let $S_{\{1,\ldots, n\}}$ be the set whose \mbox{elements} are all possible permutations of the string $12\ldots n$. Given some $s\in S_{\{1,\ldots,n\}}$ we write $s=q_{1}\ldots q_{n}$ where $q_{i}$ is the $i^{th}$ character of string $s$. We denote the Hilbert space for quasiparticles $q_{1}\ldots q_{n}$ (in that order) by $V^{q_{1}\ldots q_{n}}$ or $V^{s}$. E.g.\ $V^{q_{1}\ldots q_{i}q_{i+1}\ldots q_{n}}$ and $V^{q_{1}\ldots q_{i+1}q_{i}\ldots q_{n}}$ are the state spaces assigned to the system in the initial and final time-slices of Figure \ref{subsysdiag} respectively. 

\begin{figure}[H]\centering \includegraphics[width=0.4\textwidth]{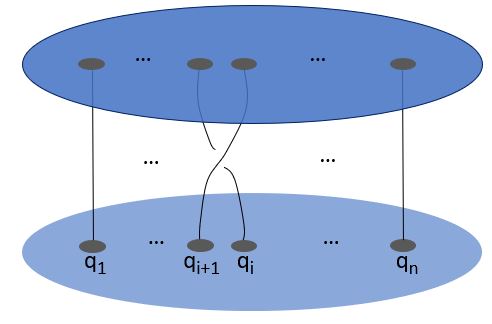} \caption{The clockwise exchange of quasiparticles $q_{i}$ and $q_{i+1}$.} \label{subsysdiag}\end{figure}

\vspace{3.5mm}

\noindent  Let $\rho_{s}\big\vert_{V^{s}}(\sigma_{i})$ be the unitary linear transformation describing the action of braid $\sigma_{i}\in B_{n}$ on the \mbox{$n$-quasiparticle} system (as shown in Figure \ref{subsysdiag}). For $n>2$, 
\begin{equation}V^{q_{1}\ldots q_{i}q_{i+1}\ldots q_{n}}\cong V^{q_{1}\ldots q_{i+1}q_{i}\ldots q_{n}} \cong \bigoplus_{X}\mathcal{V}^{\{q_{i},q_{i+1}\}}_{X}\otimes\bar{V}^{(s)}_{X} \label{diwalimazurka}\end{equation}
where $\bar{V}^{(s)}_{X}$ denotes the state space for the rest of the system when $q_{i}$ and $q_{i+1}$ are in superselection sector $X$. The spaces $V^{q_{1}\ldots q_{i}q_{i+1}\ldots q_{n}}$ and $V^{q_{1}\ldots q_{i+1}q_{i}\ldots q_{n}}$ may be identified under the action of the subgroup $\langle \sigma_{1},\ldots,\sigma_{i-2},\widehat{\sigma}_{i-1},\sigma_{i},\widehat{\sigma}_{i+1},\sigma_{i+2},\ldots\sigma_{n-1}\rangle$, but are only equivalent up to isomorphism under the action of $B_{n}$. This is because the action of $B_{n}$ on the system will \mbox{generally} depend upon the order of the quasiparticles for $n>2$. E.g.\ the action of $\sigma_{1}\in B_{3}$ on $V^{123}$ will differ from its action on $V^{231}$ (unless $\rho_{\{1,2\}}$ and $\rho_{\{2,3\}}$ are the same). We must therefore distinguish between the spaces $\{V^{s}\}_{s\in S_{\{1,\ldots, n\}}}$ in order to consider the action of braiding on the whole system.  
\begin{equation}\rho_{s}\big\vert_{V^{s}}(\sigma_{i}^{\pm1})=\bigoplus_{X}\left[\rho^{X}_{\{q_{i}, q_{i+1}\}}(\sigma_{1}^{\pm1})\otimes \text{id}_{\bar{V}^{(s)}_{X}}\right]\label{submotextrap}\end{equation}
where $\rho^{X}_{\{q_{i}, q_{i+1}\}}$ is the subrepresentation given by restricting $\rho_{\{q_{i}, q_{i+1}\}}$ to $\mathcal{V}^{\{q_{i}, q_{i+1}\}}_{X}$. 

\vspace{5mm}
\begin{defn}
$\rho_{s}(\sigma_{i}^{\pm1})$ denotes the action of (anti)clockwise exchanging $q_{i}$ and $q_{i+1}$ on an $n$-particle Hilbert space. It is therefore \textit{necessary that} $u\in S_{\{1,\ldots,n\}}$ contains the substring $q_{i}q_{i+1}$ or $q_{i+1}q_{i}$ for any $V^{u}$ on which $\rho_{s}(\sigma_{i}^{\pm1})$ is defined. Following from (\ref{submotextrap}), that is
\begin{equation}\rho_{s}\big\vert_{V^{u}}(\sigma_{i}^{\pm1})=\bigoplus_{X}\left[\rho^{X}_{\{q_{i}, q_{i+1}\}}(\sigma_{1}^{\pm1})\otimes \text{id}_{\bar{V}^{(u)}_{X}}\right]\label{submotextrap2}\end{equation}
\label{babylonburns}
\end{defn}




\vspace{3mm}
The above tells us that the right way to think about the action of braiding on an $n$-quasiparticle system is as follows: let $\{V^{s}\}_{s}$ be defined as above and let $b(s)\in S_{\{1,\dots n\}}$ be the obvious permutation\footnote{E.g. $\sigma_{i}^{\pm1}(q_{1}\ldots q_{i}q_{i+1}\ldots q_{n})=q_{1}\ldots q_{i+1}q_{i}\ldots q_{n}$ \ . That is, $b(s)$ is the string obtained by reading off the labels of the endpoints of braid $b$ when its starting points are labelled (left-to-right) by the characters of $s$.} of $s$ for any $b\in B_{n}$ . We construct an action of the braids $b\in B_{n}$ as linear transformations between spaces $\{V^{s}\}_{s}$. This action is defined through a collection of functions $\{\rho_{s}\}_{s}$ such that \ref{b0}-\ref{b5} hold for any $s\in S_{\{1,\dots n\}}$ and for all $b,b_{1},b_{2}\in B_{n}.$

\newpage

\fbox{
\parbox{\textwidth}{
\begin{enumerate}[start=0,label=\textbf{(B\arabic*)}]
\item The domain of $\rho_{s}$ is the braid group $B_{n}$  \label{b0}
\item The image of $b$ under $\rho_{s}$ is a linear transformation
\begin{equation}\rho_{s}(b):\bigoplus_{u\in\mathcal{U}_{s,b}}V^{u}\to\hspace{-2mm}\bigoplus_{s'\in S_{\{1,\ldots,n\}}}\hspace{-3mm}V^{s'}\end{equation}
where the elements $u\in\mathcal{U}_{s,b}\subseteq S_{\{1,\ldots,n\}}$ index the direct summands \\ $\{V^{u}\}_{u}\subseteq\{V^{s'}\}_{s'}$ that constitute the domain of $\rho_{s}(b)$. We have 
\begin{equation}\mathcal{U}_{s,e}:=S_{\{1,\ldots,n\}}\label{lasulaa}\end{equation}
\label{b1}
\vspace{-4mm}
\item For any $u\in\mathcal{U}_{s,b}$, we have linear isomorphism
\begin{equation}\rho_{s}\big\vert_{V^{u}}(b):V^{u}\xrightarrow{\sim}V^{b(u)}\end{equation}
and if $u'\notin\mathcal{U}_{s,b}$ then $\rho_{s}(b)$ is undefined on $V^{u'}$.\vspace{3mm}
\label{b2}
\item Given $b$ such that $b=b_{2}b_{1}$, then for any $u$ such that  $u\in\mathcal{U}_{s,b_1}$ and\\
$b_{1}(u)\in\mathcal{U}_{b_{1}(s),b_{2}}$, we have
\begin{equation}\rho_{s}\big\vert_{V^{u}}(b_{2}b_{1})=\rho_{b_{1}(s)}\big\vert_{V^{b_{1}(u)}}(b_{2})\circ\rho_{s}\big\vert_{V^{u}}(b_{1})\end{equation}
\label{b3}
\item $\rho_{s}\big\vert_{V^{u}}(b)$ is a unitary transformation i.e.\ for $u\in\mathcal{U}_{s,b}$ the map $\rho_{s}\big\vert_{V^{u}}(b)$ \\
has Hermitian adjoint
\begin{equation}\left(\rho_{s}\big\vert_{V^{u}}(b)\right)^{\dagger}=\rho_{b(s)}\big\vert_{V^{b(u)}}(b^{-1})\end{equation}
where
\begin{subequations}\begin{align}
\rho_{b(s)}\big\vert_{V^{b(u)}}(b^{-1}) \ \circ \ \rho_{s}\big\vert_{V^{u}}(b) &= \text{id}_{V^{u}}=\rho_{s}\big\vert_{V^{u}}(e) \\
\rho_{s}\big\vert_{V^{u}}(b) \ \circ \ \rho_{b(s)}\big\vert_{V^{b(u)}}(b^{-1}) &= \text{id}_{V^{b(u)}}=\rho_{b(s)}\big\vert_{V^{b(u)}}(e)
\end{align}\end{subequations}
\label{b4}
\item $\rho_{s}\big\vert_{V^{u}}(\sigma_{i}^{\pm1})$ is defined as in Definition \ref{babylonburns} for $u\in\mathcal{U}_{s,\sigma_{i}^{\pm1}}$
\label{b5}
\end{enumerate}
}
}

\vspace{3mm}

\noindent Let us unpack some details. Firstly, what constitutes $\mathcal{U}_{s,b}$? \ref{b4} tells us that $\rho_{s}\big\vert_{V^{u}}(b)$ is invertible\footnote{In fact, it tells us that $\rho_{s}\big\vert_{V^{u}}(b)$ is a diagonalisable, norm-preserving map for any $u\in\mathcal{U}_{s,b}$.}, whence we must have 
\begin{equation}u\in\mathcal{U}_{s,b}\iff b(u)\in\mathcal{U}_{b(s),b^{-1}}\label{akyonuma}\end{equation} 
and therefore
\begin{equation}u\in\mathcal{U}_{s,\sigma_{i}^{\pm1}}\iff \sigma_{i}(u)\in\mathcal{U}_{\sigma_{i}(s),\sigma_{i}^{\mp1}}\label{ukyonamo}\end{equation}
Combining this with \ref{b5}, we deduce that $\mathcal{U}_{s,\sigma_{i}}$ contains all $u\in S_{\{1,\ldots,n\}}$ such that
\begin{enumerate}[label=(\roman*)]
\item $u$ contains the substring $q_{i}q_{i+1}$ or $q_{i+1}q_{i}$
\item $u$ satisfies (\ref{ukyonamo})
\end{enumerate}
That is, $\mathcal{U}_{s,\sigma_{i}}$ contains all $u$ such that $u$ contains the substring $q_{i}q_{i+1}$ or $q_{i+1}q_{i}$, and for which said substring does not begin at the $i-1^{th}$ or $i+1^{th}$ character of $u$.\\

\vspace{-2mm}
\noindent  Clearly, $\mathcal{U}_{s,\sigma_{i}}=\mathcal{U}_{s,\sigma_{i}^{-1}}$. \ref{b3} tells us that if $u\in\mathcal{U}_{s,b_{1}}$ and $b_{1}(u)\in\mathcal{U}_{b_{1}(s),b_{2}}$, then $u\in\mathcal{U}_{s,b_{1}b_{2}}$. One can check that \ref{b3} together with (\ref{ukyonamo}) yields (\ref{akyonuma}) as required. Also, by combining \ref{b3} with our knowledge of $\mathcal{U}_{s,\sigma_{i}^{\pm1}}$, we can\mbox{ find $\mathcal{U}_{s,b}$.}\footnote{Note that in this construction of $\mathcal{U}_{s,b}$, one considers all braid words of minimal length for $b$.}
 
\begin{remk}\textbf{(Well-definedness and existence)}\\
We know that for $u\in\mathcal{U}_{s,b}$, the map $\rho_{s}\big\vert_{V^{u}}(b)$ may be parsed into a composition of maps of the form in (\ref{submotextrap2}). When $n\geq3$, there exist braids $b$ for which there is more than one way to write $b$ as a product of generators (i.e. as a braid word). This results in $\rho_{s}\big\vert_{V^{u}}(b)$ being given by distinct compositions. In order for the action $\{\rho_{s}\}_{s}$ to be well-defined, we require that all distinct compositions for a given $\rho_{s}\big\vert_{V^{u}}(b)$ are equal.\footnote{Of course, there are cases where two distinct compositions may `automatically' be equal by commutativity of constituent maps.} This intricate requirement is known as a \textit{coherence condition}: we later see that it is fulfilled by demanding that matrix representations for maps of the form (\ref{submotextrap2}) satisfy the so-called \textit{hexagon equations} (see Remark \ref{cocoremk}). For the current purposes of our construction, we will just assume that such (nontrivial) actions (satisfying this coherence condition) exist. 
\end{remk}
\vspace{3mm}
 
\noindent For any map $\rho_{s}\big\vert_{V^{u}}(b)$, we have 
\begin{equation} \rho_{s}\big\vert_{V^{u}}(b)=\rho_{s}\big\vert_{V^{u}}(b\cdot e)=\rho_{s}\big\vert_{V^{u}}(b)\circ \rho_{s}\big\vert_{V^{u}}(e) \end{equation}
whence it is clear that (\ref{lasulaa}) must hold and that $\rho_{s}\big\vert_{V^{u}}(e)=\text{id}_{V^{u}}$. Also note that we always have $s\in\mathcal{U}_{s,b}$, and so we may write
\begin{equation}\rho_{s}\big\vert_{V^{s}}:B_{n}\to\text{Hom}\left(V^{s},\bigoplus_{s'\in S_{\{1,\ldots,n\}}}V^{s'}\right)\end{equation} 
\noindent where $\rho_{s}\big\vert_{V^{s}}(b):V^{s}\xrightarrow{\sim}V^{b(s)}$ is a unitary linear transformation.\\

\noindent Take any $b\in B_{n}$ whose image under the epimorphism $\eta: B_{n}\to S_{n}$ (whose kernel is the normal subgroup $PB_{n}$ of $n$-strand pure braids) is a permutation of the form
\begin{equation*}\begin{pmatrix}
1 & \cdots & i-1 & i & i+1 & i+2 & \cdots & n \\ b(1) & \cdots & b(i-1) & j & j+1 & b(i+2) & \cdots & b(n)
\end{pmatrix}\end{equation*}
or
\begin{equation*}\begin{pmatrix}
1 & \cdots & i-1 & i & i+1 & i+2 & \cdots & n \\ b(1) & \cdots & b(i-1) & j+1 & j & b(i+2) & \cdots & b(n)
\end{pmatrix}\end{equation*}
Then for all $u\in\mathcal{U}_{s,\sigma_{i}}\cap\mathcal{U}_{b(s),\sigma_{j}}$,
\begin{equation}\rho_{s}\big\vert_{V^{u}}(\sigma_{i}^{\pm1})=\rho_{b(s)}\big\vert_{V^{u}}(\sigma_{j}^{\pm1})\label{transhit}\end{equation}

\vspace{6mm}

\noindent The above construction for the ``action'' $\{\rho_{s}\}_{s}$ of $n$-braids on the spaces $\{V^{s}\}_{s}$ can be thought of as a \textit{unitary linear representation of the braid groupoid for $n$ distinctly coloured strands}. A further discussion of this statement is provided in Appendix \ref{cbgactappx}.

\subsection{Exchange symmetry for $n$ quasiparticles}
\label{exches2dsec}

\noindent Recall that superselection sectors arise from exchange symmetry as in (\ref{exsym1}). A subtle but crucial point in this equation is that the $n$-particle Hilbert space does not depend on the order of the particles. This necessity becomes clearer when we try to write down a (\textit{naive}) version of (\ref{exsym1}) compatible with the braiding action described above: 
\[ [\hat{O},\rho_{s}(b)]=0 \text{ \ for all \ } s\in S_{\{1,\ldots, n\}}  \text{ \ and all \ } b\in B_{n} \] 
For starters, the image of $\rho_{s}(b)$ could be in any one of the spaces $\{V^{s}\}_{s}$, so the space of observables should be defined on the $n$-particle Hilbert space ``modulo ordering''. Let us denote such a space by $\mathcal{V}^{[n]}$ where $[n]:=\{1,\ldots,n\}$ is an unordered set. This also makes sense physically, since we should not have different sets of observables depending on the order of the particles (by indistinguishability). This also excludes observables defined on subsystems (which is desirable as we want to consider the exchange symmetry mechanism local to all $n$ quasiparticles). However, in order for the commutator to be well-defined, the braiding action must also be defined on $\mathcal{V}^{[n]}$.\\

\noindent Altogether, the correct adaptation of (\ref{exsym1}) should be given by a commutator of the form
\begin{equation}[\hat{O},\rho_{[n]}(g)]=0\label{truqpexchsym_intro}\end{equation}
for all $n$-particle observables $\hat{O}$ defined on $\mathcal{V}^{[n]}$ and for all $g\in\mathcal{E}_{n}\leq B_{n}$, where
\begin{equation}\rho_{[n]}:\mathcal{E}_{n}\to U(\mathcal{V}^{[n]})\label{vengahell_intro}\end{equation}
is some unitary linear representation.\\

\noindent At first, this formulation of exchange symmetry appears rather abstract. In order to obtain a better understanding of what is meant by (\ref{truqpexchsym_intro})-(\ref{vengahell_intro}), we will outline their construction from the action $\{\rho_{s}\}_{s}$. Take $\mathcal{E}_{n}$ to be the subset of $n$-braids such that for any $g\in\mathcal{E}_{n}$, $b\in B_{n}$ and $s\in S_{\{1,\ldots,n\}}$, we have
\begin{equation}\rho_{b(s)}(g)\cdot\rho_{s}(b)\ket{\psi}=e^{iu_{Q}}\ket{\psi} \label{heurozone_intro}\end{equation}
where $V^{s}=\bigoplus_{Q} V^{s}_{Q}$ is the eigenspace decomposition under the unitary operator $\rho_{s}(g)$ with $\rho_{s}(g)\ket{\psi}=e^{iu_{Q}}\ket{\psi}$ for any $\ket{\psi}\in V^{s}_{Q}$. Equation (\ref{heurozone_intro}) demands that $V^{s}_{Q}$ is the $e^{iu_{Q}}$-eigenspace of $\rho_{s}(g)$ for all $s$. Since the $e^{iu_{Q}}$-eigenspace (of the action of $g\in\mathcal{E}_{n}$) is stable under the action of all $n$-braids, it is independent of the order of the particles: we thus denote it by $\mathcal{V}^{[n]}_{Q}$ where $\mathcal{V}^{[n]}=\bigoplus_{Q}\mathcal{V}^{[n]}_{Q}$. The action of $g$ on $\mathcal{V}^{[n]}$ is denoted by $\rho_{[n]}(g)$ where
\begin{equation}\rho_{[n]}(g)=\sum_{Q}e^{iu_{Q}}\hat{P}_{Q}\end{equation}
and where $\hat{P}_{Q}$ is a normalised projector onto $\mathcal{V}^{[n]}_{Q}$. We will see that $\mathcal{E}_{n}$ is a subgroup generated by a single $n$-braid i.e. 
\begin{equation}\mathcal{E}_{n}=\langle \beta_{n} \rangle \leq B_{n} \label{lonelyexchgrp}\end{equation}
and we will therefore call $\beta_{n}\in B_{n}$ the \textit{superselection braid}. The $n$-quasiparticle Hilbert space ``modulo ordering'' may therefore be understood as the representation space in (\ref{vengahell_intro}), which in turn is constructed through the action $\{\rho_{s}\}_{s}$ of $n$-braids on the spaces $\{V^{s}\}_{s}$.\footnote{Recall from (\ref{submotextrap2}) that the action $\{\rho_{s}\}_{s}$ can be formulated in terms of the pairwise action (\ref{subfixie}). The $n$-fold exchange symmetry mechanism (\ref{truqpexchsym_intro}) may thus be thought of as emerging from the pairwise exchange symmetries among its constituents.} From the above, it is clear that $\mathcal{V}^{[n]}_{Q}\cong V^{s}_{Q}$ for any $s$. \\

\noindent Since the action of the superselection $n$-braid does not depend on the order of the particles, the braid itself should not favour any single particle over another. This hints that the braid should realise ${n \choose 2}$ exchanges (i.e.\ each pair is exchanged once).\\ By the innate symmetry of the representation space $\mathcal{V}^{[n]}$, we expect that the braid word $\beta_{n}$ should also satisfy several internal symmetries. Indeed, we will subsequently see that these properties are satisfied, and that the superselection braid is given by
\begin{equation}\beta_{n}=\sigma_{1}\sigma_{2}\ldots\sigma_{n-1}\cdot \sigma_{1}\sigma_{2}\ldots\sigma_{n-2}\cdot\ldots\cdot\sigma_{1} \ \ , \quad n\geq2\label{ssbprop_intro}\end{equation}
and $\beta_{1}=e$. Studying the action of this braid reveals the fusion structure amongst quasiparticles and hints at their topological spin structure. This is key in connecting the narrative of exchange symmetry to the framework of braided fusion categories.

\subsection{Superselection sectors for $n$ quasiparticles}
\label{supsec2dsec}
\noindent  Given a system $V^{q_{1}\ldots q_{n}}$ of $n>2$ quasiparticles, note that exchange symmetry (\ref{truqpexchsym_intro}) is defined with respect to all subsystems of $k$ adjacent quasiparticles (where $2\leq k \leq n$) i.e.
\begin{equation}[\hat{O},\rho_{[k]}(\beta_{k})]=0\label{truqpexchsym2}\end{equation}
for all observables $\hat{O}$ on $\mathcal{V}^{[k]}$.\footnote{In the instance of subsystems, $[k]$ denotes the unordered set of labels for the $k$ particles.} We therefore have a hierarchy of exchange symmetries. The next step is to understand how these all fit together. Equation (\ref{truqpexchsym2}) tells us that the eigenspaces $\{\mathcal{V}^{[k]}_{X}\}_{X}$ of $\rho_{[k]}(\beta_{k})$ define superselection sectors.
\ Take the $k$-particle subsystem $V^{q_{1}\ldots q_{k}}$ and write the decomposition into $k$-particle superselection sectors as $\bigoplus_{X}V^{q_{1}\ldots q_{k}}_{X}=V^{q_{1}\ldots q_{k}}$. \\

\noindent ($\mathcal{Q}$) \textit{ How are $\{V^{q_{1}\ldots q_{k}}_{X}\}_{X}$ understood in the context of the full $n$-particle system? } \\ 

\noindent Let $k<n$ and write the decomposition into $n$-particle superselection sectors as $\bigoplus_{Q}V^{q_{1}\ldots q_{n}}_{Q}=V^{q_{1}\ldots q_{n}}$. Suppose the $n$-particle state is in superselection sector $V^{q_{1}\ldots q_{n}}_{Q}$. The most general way to decompose $V^{q_{1}\ldots q_{n}}_{Q}$ with respect to the $k$-particle subsystem is
\begin{equation}V^{q_{1}\ldots q_{n}}_{Q}\cong\bigoplus_{X} V^{q_{1}\ldots q_{k}}_{X} \otimes V^{X,q_{k+1}\ldots q_{n}}_{Q}\label{diwalichoro_intro}\end{equation}
where $V^{X,q_{k+1}\ldots q_{n}}_{Q}$ denotes the state space for the rest of the system when $q_{1},\ldots,q_{k}$ are in superselection sector $X$. \\

\noindent Let us compare (\ref{diwalimazurka}) and (\ref{diwalichoro_intro}) when $i=1$ and $k=2$. In this case, $V^{q_{1}q_{2}}_{X}\cong \mathcal{V}^{\{q_{1},q_{2}\}}_{X}$ and $\bigoplus_{Q} V^{X,q_{k+1}\ldots q_{n}}_{Q}\cong\bar{V}^{(s)}_{X}$. Spaces $V^{q_{1}q_{2}}_{X}$ and $V^{q_{2}q_{1}}_{X}$ may be identified with $\mathcal{V}^{\{q_{1},q_{2}\}}_{X}$ when considered as representation spaces of $B_{2}$, but are distinguished between in the context of a larger system (since we usually need to keep track of the particle ordering) and are thus only considered equivalent up to isomorphism.\\

\noindent We can also partition an $n>3$ particle system into subsystems $V^{q_{1}\ldots q_{k}}$ and $V^{q_{k+1}\ldots q_{n}}$ where we assume $2\leq k \leq n-2$. Denote the superselection sectors of each by $\{V^{q_{1}\ldots q_{k}}_{X}\}_{X}$ and $\{V^{q_{k+1}\ldots q_{n}}_{Y}\}_{Y}$. Suppose the $n$-particle state is in superselection sector $V^{q_{1}\ldots q_{n}}_{Q}$. The most general way to decompose $V^{q_{1}\ldots q_{n}}_{Q}$ with respect to the two subsystems is
\begin{equation}V^{q_{1}\ldots q_{n}}_{Q}\cong\bigoplus_{X,Y} V^{q_{1}\ldots q_{k}}_{X} \otimes V^{XY}_{Q} \otimes V^{q_{k+1}\ldots q_{n}}_{Y}\label{wineup_intro}\end{equation}
The spaces $\{V^{XY}_{Q}\}_{X,Y}$ may be thought of as constraining the superselection sectors of the subsystems by relating them to the $n$-particle superselection sector.\\ If $\dim(V^{XY}_{Q})=d$, this may be interpreted as the superselection sector $Q$ containing superselection sectors $X$ and $Y$ in ``$d$ distinct ways''. We may have $d=0$, but it is also clear that at least one of the spaces $\{V^{XY}_{Q}\}$ must be nonzero.\footnote{This is equivalent to saying at least one of the spaces $\{V^{X,q_{k+1}\ldots q_{n}}_{Q}\}_{X}$ must be nonzero in (\ref{diwalichoro_intro}).} By comparing (\ref{diwalichoro_intro}) and (\ref{wineup_intro}), we see that
\begin{equation}V^{X,q_{k+1}\ldots q_{n}}_{Q}\cong\bigoplus_{Y}V^{XY}_{Q}\otimes V^{q_{k+1}\ldots q_{n}}_{Y}\end{equation}
Analogously to (\ref{diwalichoro_intro}) we can write $V^{q_{1}\ldots q_{n}}_{Q}=\bigoplus_{Y} V^{q_{1}\ldots q_{k},Y}_{Q} \otimes V^{q_{k+1}\ldots q_{n}}_{Y}$ whence it similarly follows that 
\begin{equation}V^{q_{1}\ldots q_{k},Y}_{Q}\cong\bigoplus_{X} V^{q_{1}\ldots q_{k}}_{X} \otimes V^{XY}_{Q}\end{equation}
\noindent In light of the above, it is easy to check that a ``$1$-quasiparticle Hilbert space'' must be canonically isomorphic to $\mathbb{C}$. It is therefore standard practice to omit a $1$-quasiparticle Hilbert space in a decomposition.

\begin{remk}\textbf{(Superselection sectors of subsystems)}\hspace{2mm}\\
\noindent Another salient feature emerges from the hierarchy of superselection sectors in system of $n$ quasiparticles for $n>2$. To illustrate this, consider decomposition (\ref{diwalichoro_intro}). While the spaces $\{V^{q_{1}\ldots q_{k}}_{X}\}_{X}$ still define superselection sectors locally (i.e. with respect to the $k$-particle subsystem), they do not define superselection sectors in the context of the larger system.\footnote{When we look at the whole system ``from afar'' we expect it to be in the ground state. This means that the superselection sector of the whole system should correspond to the vacuum, which later motivates the notion of ``dual charges''.} This is because the $k$-particle exchange symmetry mechanism is superseded by the $n$-particle mechanism. Indeed, the superselection sectors of the subsystem are entangled with the rest of the system in (\ref{diwalichoro_intro}).\footnote{Specifically, when $X$ runs over $>1$ index and at least two of the spaces  $\{V^{X,q_{k+1}\ldots q_{n}}_{Q}\}_{X}$ are nonzero.} Crucially, this means that when we consider the entire system, it is possible to observe linear superpositions over the spaces $\{V^{q_{1}\ldots q_{k}}_{X}\}_{X}$. It is also possible that interactions between the subsystem and the rest of the system induce transitions between superselection sectors of the subsystem.
\label{supersubdestruct}
\end{remk}

\section{The Superselection Braid and Fusion Structure}
\label{superbraidsec}
In Section \ref{exches2dsec}, we outlined the method for determining the superselection sectors using the action $\{\rho_{s}\}_{s}$. The first task is to find the subset $\mathcal{E}_{n}$ of all $n$-braids satisfying (\ref{heurozone_intro}). For any candidate braid $g\in B_{n}$, it suffices to check that (\ref{heurozone_intro}) is satisfied by $b=\sigma_{i}^{\pm1}$ for all $i$. \noindent It will be convenient to define the following notation for braids:
\begin{equation}\sigma_{i_{1}\ldots i_{k-1}i_{k}}:=\sigma_{i_{1}}\ldots\sigma_{i_{k-1}}\sigma_{i_{k}} \ \ , \ \ b_{j}:=\sigma_{12\ldots j} \ \ \text{ for all $j\geq1$, and} \ b_{0}:=e\end{equation}
We argued that a reasonable heuristic for an element of $\mathcal{E}_{n}$ would be that it exchanges each pair of quasiparticles once. Take the ansatz
\begin{equation}\beta_{n}=b_{n-1}b_{n-2}\ldots b_{1} \ \ , \quad n\geq2\label{ssbprop}\end{equation}
E.g.\ $\beta_{2}=\sigma_{1}, \beta_{3}=\sigma_{121}, \beta_{4}=\sigma_{123121}$ etc. We also set $\beta_{1}:=e$. In Theorem \ref{main1}, we will show that $\beta_{n}\in\mathcal{E}_{n}$. Therefore, (the action of) $\beta_{n}$ specifies the superselection sectors; in fact, it does so uniquely up to orientation (Theorem \ref{sidethm1}) which proves (\ref{lonelyexchgrp}) i.e. $\mathcal{E}_{n}=\langle \beta_{n} \rangle\leq B_{n}$. For this reason, we  will refer to $\beta_{n}$ as the \textit{superselection braid}.

\subsection{The superselection braid} 
\label{supselbraidsec}

\begin{thm}{\normalfont\textbf{(Superselection sectors)}}\\ 
We have the eigenspace decomposition $V^{s}=\bigoplus_{Q}V^{s}_{Q}$ under $\rho_{s}(\beta_{n})$ where
\begin{equation}\begin{split}\rho_{s}(\beta_{n}):V^{s}_{Q}&\to V^{\beta_{n}(s)}_{Q} \\
\ket{\Psi}&\mapsto e^{iu_{Q}}\ket{\Psi}\end{split}\ , \ n\geq2\end{equation}
for any $s\in S_{\{1,\ldots,n\}}$.
\label{main1}\end{thm}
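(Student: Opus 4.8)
The plan is to show that the ansatz $\beta_{n}$ lies in $\mathcal{E}_{n}$; by the remark preceding the theorem this reduces to verifying the defining relation (\ref{heurozone_intro}) on the generators $b=\sigma_{i}^{\pm1}$. Producing the eigenspace decomposition itself is cheap. By \ref{b4} the map $\rho_{s}\big\vert_{V^{s}}(\beta_{n}):V^{s}\xrightarrow{\sim}V^{\beta_{n}(s)}$ is unitary, so composing it with the canonical identification $\iota_{s}:V^{\beta_{n}(s)}\xrightarrow{\sim}V^{s}$ (induced by reversing the string $s$) yields an honest unitary operator on $V^{s}$, which is finite-dimensional by \textbf{A2}(i). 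Such an operator is diagonalisable, giving $V^{s}=\bigoplus_{Q}V^{s}_{Q}$. The genuine content of the theorem is that the sectors $Q$ and phases $e^{iu_{Q}}$ can be chosen \emph{independently of the ordering $s$}, and that $\beta_{n}$ acts as the scalar $e^{iu_{Q}}$ on each $V^{s}_{Q}$.

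The one structural ingredient I would rely on is the braid identity
\begin{equation*}\beta_{n}\sigma_{i}=\sigma_{n-i}\beta_{n}\qquad(1\le i\le n-1),\end{equation*}
expressing that $\beta_{n}$ is a word for the half-twist and that conjugating by it reverses generator indices. I would establish this first as a purely combinatorial fact in $B_{n}$, by induction on $n$ using the recursion $\beta_{n}=b_{n-1}\beta_{n-1}$ and the braid relations; geometrically it reflects that $\beta_{n}$ realises the order-reversing permutation.

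Feeding this into the composition axiom \ref{b3}, with domains tracked via $\mathcal{U}_{s,b}$ and (\ref{akyonuma}) and noting $s\in\mathcal{U}_{s,\sigma_{i}}$, gives for every $s$ and every generator
\begin{equation*}\rho_{\sigma_{i}(s)}(\beta_{n})\circ\rho_{s}(\sigma_{i})=\rho_{s}(\beta_{n}\sigma_{i})=\rho_{s}(\sigma_{n-i}\beta_{n})=\rho_{\beta_{n}(s)}(\sigma_{n-i})\circ\rho_{s}(\beta_{n}).\end{equation*}
Since reversing a length-$n$ string carries the adjacent pair at positions $\{i,i+1\}$ to the pair at $\{n-i,n+1-i\}$, the identification $\iota$ intertwines $\rho_{\beta_{n}(s)}(\sigma_{n-i})$ with $\rho_{s}(\sigma_{i})$; applying $\iota$ on the left of the displayed relation then shows that $\rho_{s}(\sigma_{i})$ intertwines the operator $\iota_{s}\circ\rho_{s}(\beta_{n})$ on $V^{s}$ with $\iota_{\sigma_{i}(s)}\circ\rho_{\sigma_{i}(s)}(\beta_{n})$ on $V^{\sigma_{i}(s)}$. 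Hence $\rho_{s}(\sigma_{i})$ maps the $e^{iu_{Q}}$-eigenspace $V^{s}_{Q}$ isomorphically onto the $e^{iu_{Q}}$-eigenspace of $\rho_{\sigma_{i}(s)}(\beta_{n})$, preserving both the label $Q$ and the eigenvalue. As any two orderings of $\{1,\dots,n\}$ are joined by a finite chain of adjacent transpositions, iterating matches the sector decompositions across all $s$; this is precisely (\ref{heurozone_intro}), and the scalar action $\ket{\Psi}\mapsto e^{iu_{Q}}\ket{\Psi}$ on each $V^{s}_{Q}$ is immediate from the definition of the sectors.

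I expect the main obstacle to be bookkeeping rather than conceptual. The delicate points are: making the identification $\iota_{s}:V^{\beta_{n}(s)}\xrightarrow{\sim}V^{s}$ precise enough that ``same sector $Q$, same eigenvalue'' is a well-posed statement across differently-ordered spaces; verifying that $\iota$ genuinely intertwines the generator actions $\sigma_{n-i}$ and $\sigma_{i}$ -- this rests on the fact that the pairwise action (\ref{submotextrap2}) depends only on the unordered pair $\{q_{i},q_{i+1}\}$, and one must confirm that the index reversal $\sigma_{i}\leftrightarrow\sigma_{n-i}$ respects both the position shift and the orientation of the exchange, where an off-by-one or sign slip would break the intertwiner; and checking at each step that the groupoid compositions are defined on a common domain so that \ref{b3} applies.
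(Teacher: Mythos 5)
Your proposal is correct and follows essentially the same route as the paper: both hinge on the half-twist identity $\beta_{n}\sigma_{i}^{\pm1}=\sigma_{n-i}^{\pm1}\beta_{n}$ (the paper's Lemma \ref{main1lem3}, proved by the same induction you sketch), the groupoid composition rule \ref{b3}, and the observation that $\rho_{\beta_{n}(s)}(\sigma_{n-i}^{\pm1})$ coincides with $\rho_{s}(\sigma_{i}^{\pm1})$ because the exchange depends only on the unordered pair of labels (the paper's identity (\ref{transhit}), which is exactly the intertwining property you isolate as the delicate point). Your explicit identification $\iota_{s}$ just makes precise what the paper leaves implicit when it writes $\ket{\Psi}\mapsto e^{iu_{Q}}\ket{\Psi}$ across the differently ordered spaces.
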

\begin{figure}[H]\centering \includegraphics[width=0.28\textwidth]{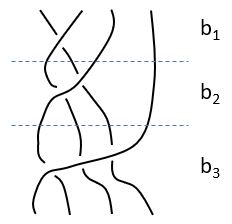} \caption{$\beta_{n}$ has length ${n\choose2}$. The above diagram depicts $\beta_{4}$.} \label{beta4}\end{figure}

\vspace{5mm}
\noindent Let us recap the rest of the construction from Section \ref{exches2dsec}. Theorem \ref{main1} allows us to identify the spaces $\{V^{s}_{Q}\}_{s}$ as the $e^{iu_{Q}}$-eigenspace $\mathcal{V}^{[n]}_{Q}$ under the action of $\beta_{n}$. Write,
\begin{equation}\mathcal{V}^{[n]}=\bigoplus_{Q}\mathcal{V}^{[n]}_{Q}\end{equation}
In particular, this corresponds to a unitary representation
\begin{equation}\rho_{[n]}:\langle\beta_{n}\rangle\leq B_{n}\to U(\mathcal{V}^{[n]})\end{equation}
where
\begin{equation}\begin{split}\rho_{[n]}(\beta_{n}):\mathcal{V}_{Q}^{[n]}&\to\mathcal{V}_{Q}^{[n]}\\
\ket{\varphi}&\mapsto e^{iu_{Q}}\ket{\varphi}\end{split}\end{equation}
That is,
\begin{equation}\rho_{[n]}(\beta_{n})=\sum_{Q}e^{iu_{Q}}\hat{P}_{Q}\end{equation}
where $\hat{P}_{Q}$ is a normalised projector onto $\mathcal{V}^{[n]}_{Q}$. Since the representation space $\mathcal{V}^{[n]}$ is the $n$-quasiparticle Hilbert space (modulo ordering), exchange symmetry is given by
\begin{equation}[\hat{O},\rho_{[n]}(\beta_{n})]=0\label{exsym5}\end{equation}
for all $n$-particle observables $\hat{O}$ on $\mathcal{V}^{[n]}$. The spaces $\{\mathcal{V}^{[n]}_{Q}\}_{Q}$ are superselection sectors of the system, and we have shown by construction that each superselection sector is preserved under the action of of any $n$-braid. It follows that $V^{s}_{Q}$ defines a super-selection sector for any $(s, Q)$. In conclusion, the superselection sectors of an $n$-quasiparticle system are given by the eigenspaces of the action of the braid $\beta_{n}$.


\newpage
\begin{cor}
Given $\ket{\Psi}\in V^{s}_{Q}$ as in Theorem \ref{main1}, 
\begin{equation}\rho_{s}(\beta_{n}^{-1})\ket{\Psi}=e^{-iu_{Q}}\ket{\Psi}\end{equation}
\end{cor}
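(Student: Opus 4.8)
The plan is to deduce this corollary directly from Theorem \ref{main1} together with the unitarity axiom \ref{b4}, without re-running any braid-word analysis. By Theorem \ref{main1}, any $\ket{\Psi}\in V^{s}_{Q}$ is an $e^{iu_{Q}}$-eigenvector of the map $\rho_{s}\big\vert_{V^{s}}(\beta_{n}):V^{s}\xrightarrow{\sim}V^{\beta_{n}(s)}$. The key observation is that $\rho_{s}(\beta_{n}^{-1})$ should be read as the action of $\beta_{n}^{-1}$ on the \emph{target} string $\beta_{n}(s)$, i.e.\ as the inverse map, so I first want to identify $\rho_{s}(\beta_{n}^{-1})$ with $\left(\rho_{s}\big\vert_{V^{s}}(\beta_{n})\right)^{-1}$ via \ref{b4}.

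Concretely, I would proceed as follows. First, apply axiom \ref{b4} with $b=\beta_{n}$ and $u=s$ (using $s\in\mathcal{U}_{s,\beta_{n}}$) to get
\begin{equation}
\rho_{\beta_{n}(s)}\big\vert_{V^{\beta_{n}(s)}}(\beta_{n}^{-1})\circ\rho_{s}\big\vert_{V^{s}}(\beta_{n})=\text{id}_{V^{s}}.
\end{equation}
Next, apply this composite to $\ket{\Psi}$: since $\rho_{s}\big\vert_{V^{s}}(\beta_{n})\ket{\Psi}=e^{iu_{Q}}\ket{\Psi}$ (now viewed as an element of $V^{\beta_{n}(s)}_{Q}$), acting with the inverse must return $\ket{\Psi}$, forcing $\rho_{\beta_{n}(s)}\big\vert_{V^{\beta_{n}(s)}}(\beta_{n}^{-1})$ to send the image vector $e^{iu_{Q}}\ket{\Psi}$ back to $\ket{\Psi}$. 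By linearity this says the image eigenvector has $\beta_{n}^{-1}$-eigenvalue $e^{-iu_{Q}}$. Finally, under the identification of all $\{V^{s}_{Q}\}_{s}$ with the single superselection sector $\mathcal{V}^{[n]}_{Q}$ (justified in the discussion after Theorem \ref{main1}, since each sector is stable under every $n$-braid), the notation $\rho_{s}(\beta_{n}^{-1})\ket{\Psi}$ denotes precisely this inverse action, yielding $\rho_{s}(\beta_{n}^{-1})\ket{\Psi}=e^{-iu_{Q}}\ket{\Psi}$.

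The only genuine subtlety—and the step I would treat most carefully—is the \emph{bookkeeping of which string each map acts on}. The statement writes $\rho_{s}(\beta_{n}^{-1})\ket{\Psi}$ with subscript $s$, whereas the natural domain of $\beta_{n}^{-1}$ is $V^{\beta_{n}(s)}$, so I must make explicit that this is shorthand for the inverse of $\rho_{s}\big\vert_{V^{s}}(\beta_{n})$, consistent with the convention that each $V^{s}_{Q}$ is identified with $\mathcal{V}^{[n]}_{Q}$. Once that identification is invoked, the statement is really the trivial fact that the inverse of a unitary acting as scalar $e^{iu_{Q}}$ on an eigenspace acts as $e^{-iu_{Q}}$; no braid relations or hexagon coherence are needed. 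I therefore expect the proof to be short, with the main (minor) obstacle being stating the domain/codomain identifications cleanly rather than any computation.
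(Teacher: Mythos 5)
Your proof is correct and takes essentially the same route as the paper: both arguments reduce the corollary to \ref{b4} (unitarity) plus Theorem \ref{main1}, with the only real work being the bookkeeping of strings. The paper closes that bookkeeping slightly differently -- it first derives the intermediate identity $\rho_{\tilde{s}}(\beta_{n})\ket{\Psi}=e^{iu_{Q}}\ket{\Psi}$ with $\tilde{s}:=\beta_{n}(s)$ (noting $s\in\mathcal{U}_{\tilde{s},\beta_{n}}$) and then takes the adjoint, which lands directly on the subscript $s$, whereas you obtain the statement with subscript $\beta_{n}(s)$ and then swap subscripts via the identification of $\{V^{u}_{Q}\}_{u}$ with $\mathcal{V}^{[n]}_{Q}$; both steps are justified by the $s$-independence of the eigenvalue in Theorem \ref{main1}.
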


\begin{proof}Let $\tilde{s}:=\beta_{n}(s)$ (i.e.\ string $s$ in reverse order). By Theorem \ref{main1}, 
\begin{align*}\rho_{\tilde{s}}(\beta_{n})\left[\rho_{s}(\beta_{n})\ket{\Psi}\right]&=e^{iu_{Q}}\left[\rho_{s}(\beta_{n})\ket{\Psi}\right] \\
\implies\rho_{\tilde{s}}(\beta_{n})\ket{\Psi}&=e^{iu_{Q}}\ket{\Psi}\\
\implies\left[\rho_{\tilde{s}}(\beta_{n})\right]^{\dagger}\ket{\Psi}&=e^{-iu_{Q}}\ket{\Psi} \end{align*}
where the second line is well-defined since it can be shown that $s\in\mathcal{U}_{\tilde{s},\beta_{n}}$.
\end{proof}

\vspace{4mm}
\noindent In order to prove Theorem \ref{main1}, we will need the braid identity in Lemma \ref{main1lem3} (whose proof is given in Appendix \ref{proofappx1}).\\

\begin{lem}Let $n\geq2$. Then for  $i=1,\ldots,n-1$,
\begin{equation}\beta_{n}\sigma^{\pm1}_{i}=\sigma^{\pm1}_{n-i}\beta_{n} \end{equation}\label{main1lem3}\end{lem}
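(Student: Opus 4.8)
The plan is to prove the equivalent statement that conjugation by $\beta_n$ implements the ``flip'' $\sigma_i\mapsto\sigma_{n-i}$ on the generators, i.e.\ $\beta_n\sigma_i\beta_n^{-1}=\sigma_{n-i}$; this is precisely the classical conjugation action of the half-twist, of which $\beta_n=b_{n-1}b_{n-2}\cdots b_1$ is an instance. First I would dispose of the sign: taking inverses of $\beta_n\sigma_i\beta_n^{-1}=\sigma_{n-i}$ gives $\beta_n\sigma_i^{-1}\beta_n^{-1}=\sigma_{n-i}^{-1}$, so the $-1$ case follows from the $+1$ case and it suffices to prove $\beta_n\sigma_i=\sigma_{n-i}\beta_n$.

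The engine is a single shift identity for the ascending cycle $b_m=\sigma_1\cdots\sigma_m$: for $1\le j\le m-1$,
\[ b_m\sigma_j=\sigma_{j+1}b_m. \]
I would prove this by commuting the trailing $\sigma_j$ leftward past $\sigma_m,\ldots,\sigma_{j+2}$ (which are far from $j$), applying the braid relation $\sigma_j\sigma_{j+1}\sigma_j=\sigma_{j+1}\sigma_j\sigma_{j+1}$ once, and then commuting the resulting leading $\sigma_{j+1}$ past $\sigma_{j-1},\ldots,\sigma_1$ to the front. Applying the flip automorphism $\phi:\sigma_i\mapsto\sigma_{n-i}$ (which preserves the defining relations, hence is a genuine automorphism of $B_n$) to this identity yields the mirror shift $c_m\sigma_j=\sigma_{j-1}c_m$ for $2\le j\le m$, where $c_m:=\sigma_m\sigma_{m-1}\cdots\sigma_1$, at no extra cost.

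Next I would run an induction on $n$ using two factorizations of $\beta_n$ that isolate the ``new'' strand. From the definition, $\beta_n=b_{n-1}\beta_{n-1}$ with $\beta_{n-1}=b_{n-2}\cdots b_1\in\langle\sigma_1,\ldots,\sigma_{n-2}\rangle$; I also need the alternate factorization $\beta_n=\beta_{n-1}\,c_{n-1}$. Granting both, the inductive step splits into two ranges. For $1\le i\le n-2$, use the first factorization, the inductive hypothesis $\beta_{n-1}\sigma_i=\sigma_{n-1-i}\beta_{n-1}$, and the shift identity (legitimate since then $1\le n-1-i\le n-2$) to obtain $\beta_n\sigma_i=b_{n-1}\sigma_{n-1-i}\beta_{n-1}=\sigma_{n-i}b_{n-1}\beta_{n-1}=\sigma_{n-i}\beta_n$. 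For the boundary value $i=n-1$, use the second factorization together with the mirror shift $c_{n-1}\sigma_{n-1}=\sigma_{n-2}c_{n-1}$ and then the hypothesis $\beta_{n-1}\sigma_{n-2}=\sigma_1\beta_{n-1}$, giving $\beta_n\sigma_{n-1}=\beta_{n-1}\sigma_{n-2}c_{n-1}=\sigma_1\beta_{n-1}c_{n-1}=\sigma_1\beta_n$. The base case $n=2$ (only $i=1$) is immediate.

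The main obstacle is exactly the boundary case $i=n-1$, and with it the alternate factorization $\beta_n=\beta_{n-1}c_{n-1}$: the naive recursion $\beta_n=b_{n-1}\beta_{n-1}$ places $\sigma_{n-1}$ outside the subgroup $B_{n-1}$ in which the inductive hypothesis lives, so the ascending shift cannot ``wrap around'' to send index $n-1$ to $1$. I would therefore establish the alternate factorization by a parallel induction carried simultaneously with the main statement: writing $b_{n-1}=b_{n-2}\sigma_{n-1}$ and using the level-$(n-1)$ factorization $\beta_{n-1}=\beta_{n-2}c_{n-2}$ (so that $\beta_{n-2}$ commutes with $\sigma_{n-1}$) together with the level-$(n-1)$ flip action, one evaluates $\beta_{n-1}^{-1}b_{n-1}\beta_{n-1}=c_{n-1}$, which is equivalent to $\beta_n=\beta_{n-1}c_{n-1}$. (Alternatively one may simply invoke the classical equality of the ascending-block word $b_{n-1}\cdots b_1$ with the descending-block word $\sigma_1(\sigma_2\sigma_1)\cdots(\sigma_{n-1}\cdots\sigma_1)$, whose tail is $c_{n-1}$.) Once this factorization is in hand, everything else reduces to routine applications of the two shift identities.
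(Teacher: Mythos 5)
Your proof is correct, and its skeleton matches the paper's: both run an induction on $n$ through the factorisation $\beta_n=b_{n-1}\beta_{n-1}$, powered by the same shift identity $b_m\sigma_j=\sigma_{j+1}b_m$ (Lemma \ref{msinmyrum}(ii) in the paper, proved there by induction; your single ``commute, braid relation, commute'' computation establishes the same thing more directly). Where you genuinely diverge is at the crux you correctly isolate, the boundary case $i=n-1$. The paper disposes of it with the standalone identity $\beta_n\sigma_{n-1}=\sigma_1\beta_n$ (Lemma \ref{msinmyrum}(i)), obtained from the telescoping relation $b_n^2=\sigma_1 b_n b_{n-1}$ together with $\beta_n\sigma_{n-1}=b_{n-1}^2\beta_{n-2}$. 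You instead introduce the descending factorisation $\beta_n=\beta_{n-1}c_{n-1}$ (with $c_m=\sigma_m\cdots\sigma_1=\overleftarrow{b_m}$) and the mirror shift $c_{n-1}\sigma_{n-1}=\sigma_{n-2}c_{n-1}$, and your simultaneous induction for the factorisation checks out: $\beta_{n-1}^{-1}b_{n-1}\beta_{n-1}=\bigl(\beta_{n-1}^{-1}b_{n-2}\beta_{n-1}\bigr)\bigl(c_{n-2}^{-1}\sigma_{n-1}c_{n-2}\bigr)=c_{n-2}\cdot c_{n-2}^{-1}\sigma_{n-1}c_{n-2}=c_{n-1}$, using that $\beta_{n-2}$ commutes with $\sigma_{n-1}$ and that conjugation by $\beta_{n-1}$ flips the generators of $B_{n-1}$ by the inductive hypothesis. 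This descending-block form of $\beta_n$ is essentially the palindrome property the paper only proves later (Lemma \ref{main2lem2}), so your route front-loads a fact the paper defers; in exchange you avoid the $b_n^2$ computation and get the more conceptual picture that $\beta_n$ is the half-twist whose conjugation action is the index flip. Two minor points: your treatment of the inverted generators, by simply inverting $\beta_n\sigma_i\beta_n^{-1}=\sigma_{n-i}$, is cleaner than the paper's (which re-runs the induction with inverted shift identities); and the flip automorphism of $B_n$ carries $b_m\sigma_j=\sigma_{j+1}b_m$ to an identity for the \emph{shifted} word $\sigma_{n-1}\cdots\sigma_{n-m}$ rather than for $c_m$ unless $m=n-1$, so you should either apply the flip of $B_{m+1}$ or note that the only instances you need are $m=n-1$ and $m=n-2$ inside $B_n$ and $B_{n-1}$ respectively, where the statement is exact.
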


\begin{proof}[\underline{Proof of Theorem \ref{main1}}]\hspace{2mm}\\[2mm]
Take $n$-quasiparticle space $V^{s}$ for some chosen $s\in S_{\{1,\ldots,n\}}$. We write the eigenspace decomposition $V^{s}=\bigoplus_{Q}V^{s}_{Q}$ under $\rho_{s}(\beta_{n})$ where
\begin{equation}\begin{split}\rho_{s}(\beta_{n}):V^{s}_{Q}&\to V^{\beta_{n}(s)}_{Q} \\
\ket{\Psi}&\mapsto e^{iu_{Q}}\ket{\Psi}\end{split}\ , \ n\geq2\end{equation}
Then for $1\leq i\leq n-1$,
\begin{equation*}\rho_{s}(\beta_{n}\sigma^{\pm1}_{i})\ket{\Psi}=\rho_{\sigma_{i}(s)}(\beta_{n})\left[\rho_{s}(\sigma^{\pm1}_{i})\ket{\Psi}\right]\end{equation*}
and
\begin{alignat*}{1}\rho_{s}(\beta_{n}\sigma^{\pm1}_{i})\ket{\Psi}&=\rho_{s}(\sigma^{\pm1}_{n-i}\beta_{n})\ket{\Psi} \quad \text{(by Lemma \ref{main1lem3})}\\
&=e^{iu_{Q}}\left[\rho_{\beta_{n}(s)}(\sigma^{\pm1}_{n-i})\ket{\Psi}\right] \end{alignat*}
where $\sigma_{i}(s)$ swaps the $i^{th}$ and $(i+1)^{th}$ characters of $s$, and $\beta_{n}(s)$ reverses the order of the characters in $s$. Then by (\ref{transhit}), we have \[\rho_{\beta_{n}(s)}(\sigma^{\pm1}_{n-i})\ket{\Psi}=\rho_{s}(\sigma^{\pm1}_{i})\ket{\Psi}\] 
It follows that the image of $V^{s}_{Q}$ under $\rho_{s}(\sigma^{\pm1}_{i})$ is the $e^{iu_{Q}}$-eigenspace of $\rho_{\sigma_{i}(s)}(\beta_{n})$, so we write 
\[ \rho_{s}(\sigma^{\pm1}_{i})\left(V^{s}_{Q}\right)=:V^{\sigma_{i}(s)}_{Q} \] 
The result follows.
\end{proof}\vspace{3mm}

\subsection{Fusion structure}
\label{quaf}
A composite collection of quasiparticles will exhibit the same statistical behaviour as a single quasiparticle under exchanges: the scheme under which a collection of quasiparticles is considered as a composite is known as \textit{fusion}. In this section, we will carefully show the emergence of this behaviour by considering the action of the superselection braid.


\vspace{4mm}
\begin{defn}We define $t_{k,l}$ to be the braid in $B_{k+l}$ that clockwise exchanges $k$ strands with $l$ strands. Similarly, we define $u_{k,l}$ to be the braid in $B_{k+l}$ that anticlockwise exchanges $k$ strands with $l$ strands. Clearly,  $t_{k,l}^{-1}=u_{l,k}$. 
\begin{figure}[H]\centering \includegraphics[width=0.55\textwidth]{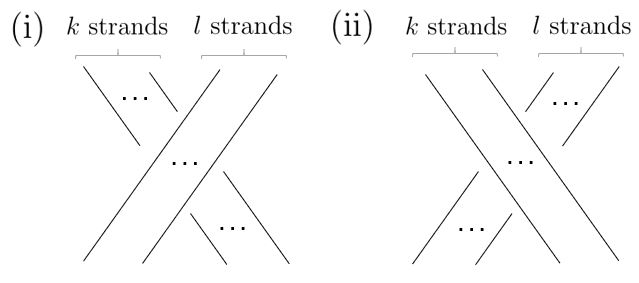} \caption{(i) $t_{k,l}$\ , \ (ii) $u_{k,l}$} \label{t_k,l_braids}\end{figure}\end{defn}

\vspace{4mm}
\noindent For any $a\in\mathbb{N}_{0}$, we have the homomorphism
\begin{equation}\begin{split}r_{a}:B_{n}&\to B_{n+a} \\ \sigma_{i} &\mapsto \sigma_{i+a} \end{split}\end{equation}
where $r_{a_{1}}\circ r_{a_{2}}=r_{a_{1}+a_{2}}$. We also have the anti-automorphism 
\begin{equation}\begin{split}\chi:B_{n}&\to B_{n} \\ \sigma_{i}&\mapsto\sigma_{i}\end{split}\end{equation}
which reverses the order of the generators in a braid word. Let $\overleftarrow{b}:=\chi(b)$. Note that
\begin{equation}\begin{split}t_{k,l}&=r_{0}(\overleftarrow{b_{l}})\cdot r_{1}(\overleftarrow{b_{l}})\cdot\ldots\cdot r_{k-1}(\overleftarrow{b_{l}}) \\
&=r_{l-1}(b_{k})\cdot\ldots\cdot r_{1}(b_{k})\cdot r_{0}(b_{k})\end{split}\label{tikell}\end{equation}\vspace{1mm}
\noindent and that $\overleftarrow{t_{k,l}}=t_{l,k}$. 

\vspace{10mm}
\noindent Consider some $n$-quasiparticle system $V^{s}_{Q}$ in fixed superselection sector $Q$ for some \mbox{$s\in S_{\{1,\ldots,n\}}$ where $n\geq2$.} Partition $s$ into nonempty substrings $m_{1},m_{2}$ i.e. \mbox{$V^{s}_{Q}=V^{m_{1},m_{2}}_{Q}$} and denote the length of string $m_{i}$ by $|m_{i}|$. We write eigenspace \mbox{decompositions} 
\begin{equation} V^{m_{1}}=\bigoplus_{X}V^{m_{1}}_{X} \quad , \quad V^{m_{2}}=\bigoplus_{Y}V^{m_{2}}_{Y}\end{equation}
under $\rho_{m_{1}}(\beta_{|m_{1}|})$ and $\rho_{m_{2}}(\beta_{|m_{2}|})$. Similarly to (\ref{wineup_intro}), we have the decompositions
\begin{subequations}
\begin{align}V^{m_{1},m_{2}}_{Q}&\cong\bigoplus_{X,Y}V^{m_{1}}_{X}\otimes V^{XY}_{Q}\otimes V^{m_{2}}_{Y}\label{ausevy}\\
V^{m_{2},m_{1}}_{Q}&\cong\bigoplus_{X,Y}V^{m_{2}}_{Y}\otimes V^{YX}_{Q}\otimes V^{m_{1}}_{X}\label{ausevy2}
\end{align}\end{subequations}

\newpage
\begin{thm}{\normalfont\textbf{(Fusion)}}\\ 
For an $n$-quasiparticle system $V^{s}_{Q}$ with fixed superselection sector $Q$, consider its decomposition as in (\ref{ausevy}).
Let $(k,l):=(|m_{1}|,|m_{2}|)$ and take $(X,Y)=(x,y)$ such that $V^{xy}_{Q}$ is nonzero. Take arbitrary $\ket{\psi}:=\ket{\psi_{x}}\ket{\psi^{xy}_{Q}}\ket{\psi_{y}}\in V^{m_{1}}_{x}\otimes V^{xy}_{Q}\otimes V^{m_{2}}_{y}$ where we have eigenvalues 
\[\rho_{m_{1}}(\beta_{k})\ket{\psi_{x}}=e^{iu_{x}}\ket{\psi_{x}} \ , \ \rho_{m_{2}}(\beta_{l})\ket{\psi_{y}}=e^{iu_{y}}\ket{\psi_{y}} \ , \ \rho_{s}(\beta_{k+l})\ket{\psi}=e^{iu_{Q}}\ket{\psi} \]
Then,
\vspace{1.25mm}
\begin{enumerate}[label=(\roman*)]
\item $\rho_{s}(t_{k,l})\ket{\psi}=e^{i(u_{Q}-u_{x}-u_{y})}\ket{\psi}$
\vspace{2mm}
\item Eigenspaces are preserved under exchanges i.e.
\begin{equation}\rho_{s}(t_{k,l}):V^{m_{1}}_{x}\otimes V^{xy}_{Q}\otimes V^{m_{2}}_{y}\stackrel{\sim}{\to} V^{m_{2}}_{y}\otimes V^{yx}_{Q}\otimes V^{m_{1}}_{x}\label{yumbo}\end{equation} 
\label{exthm1}
\item $\rho_{m_{2},m_{1}}(t_{l,k})\left[\rho_{m_{1},m_{2}}(t_{k,l})\ket{\psi}\right]=e^{i(u_{Q}-u_{x}-u_{y})}\left[\rho_{m_{1},m_{2}}(t_{k,l})\ket{\psi}\right]$, and so
\begin{equation}\rho_{s}(t_{l,k}\cdot t_{k,l})\ket{\psi}=e^{i2(u_{Q}-u_{x}-u_{y})}\ket{\psi}\end{equation}
\end{enumerate}
\label{mainthm2}\end{thm}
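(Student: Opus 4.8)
The plan is to reduce the whole statement to a single braid-word identity expressing the superselection braid of the full system in terms of those of the two blocks and the exchange braid $t_{k,l}$. Concretely, I would first establish as a lemma the block factorisation of the half-twist,
\[ \beta_{k+l}=t_{k,l}\cdot\beta_{k}\cdot r_{k}(\beta_{l}), \]
where $\beta_{k}$ and $r_{k}(\beta_{l})$ are supported on the disjoint generator sets $\{\sigma_{1},\dots,\sigma_{k-1}\}$ and $\{\sigma_{k+1},\dots,\sigma_{k+l-1}\}$ and hence commute. This is the standard block factorisation of the Garside element; I would prove it by induction on $l$, the base case $l=1$ reading $\beta_{n}=b_{n-1}\beta_{n-1}=t_{n-1,1}\beta_{n-1}$, with the inductive step using the definition (\ref{tikell}) of $t_{k,l}$ together with the commutation relations of Lemma \ref{main1lem3} (equivalently, it is the positive lift of the reduced-word factorisation of the longest element of $S_{n}$, the lengths matching since $\binom{n}{2}=kl+\binom{k}{2}+\binom{l}{2}$). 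Rearranging and commuting the two block braids gives the form I actually use, $t_{k,l}=\beta_{k+l}\,\beta_{k}^{-1}\,r_{k}(\beta_{l})^{-1}$.

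For parts (i) and (ii) I would apply $\rho_{s}$ to this last identity and unfold it with the composition rule \ref{b3}, reading the rightmost factor first. Acting first by $\beta_{k}^{-1}r_{k}(\beta_{l})^{-1}$: since these braids are supported on the two blocks, locality (the block analogue of (\ref{submotextrap2})) makes them act as $\rho_{m_{1}}(\beta_{k}^{-1})\otimes\text{id}\otimes\text{id}$ and $\text{id}\otimes\text{id}\otimes\rho_{m_{2}}(\beta_{l}^{-1})$ on the decomposition (\ref{ausevy}); by the Corollary to Theorem \ref{main1} these scale $\ket{\psi_{x}}$ and $\ket{\psi_{y}}$ by $e^{-iu_{x}}$ and $e^{-iu_{y}}$ and reverse the two block-strings, carrying $\ket{\psi}$ into $V^{\tilde m_{1}}_{x}\otimes V^{xy}_{Q}\otimes V^{\tilde m_{2}}_{y}$. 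Acting next by $\beta_{k+l}$: because the superselection sector $Q$ is intrinsic (Theorem \ref{main1}, each sector being stable under every $n$-braid), the transported vector still lies in sector $Q$, so $\rho(\beta_{k+l})$ contributes the scalar $e^{iu_{Q}}$ and reverses the full string $\tilde m_{1}\tilde m_{2}\mapsto m_{2}m_{1}$. Collecting the three scalars yields the phase $e^{i(u_{Q}-u_{x}-u_{y})}$ of (i), while tracking the domains and codomains through the three maps (landing in the decomposition (\ref{ausevy2}) of $V^{m_{2}m_{1}}_{Q}$) yields exactly the isomorphism (\ref{yumbo}) of (ii).

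Part (iii) is then a formal consequence of (i) applied twice. Writing $\ket{\phi}:=\rho_{m_{1},m_{2}}(t_{k,l})\ket{\psi}\in V^{m_{2}}_{y}\otimes V^{yx}_{Q}\otimes V^{m_{1}}_{x}$ from (ii), this is a joint eigenvector of the relevant block superselection braids (with eigenvalues $e^{iu_{y}},e^{iu_{x}}$) lying in sector $Q$; applying (i) with the roles of $m_{1},m_{2}$ and of $(k,l)$ interchanged gives $\rho_{m_{2},m_{1}}(t_{l,k})\ket{\phi}=e^{i(u_{Q}-u_{y}-u_{x})}\ket{\phi}$, which is the first assertion since $u_{x}+u_{y}=u_{y}+u_{x}$. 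Composing the two exchanges, and noting that $t_{l,k}$ returns the configuration $m_{2}m_{1}$ to $m_{1}m_{2}$, then gives $\rho_{s}(t_{l,k}t_{k,l})\ket{\psi}=e^{i2(u_{Q}-u_{x}-u_{y})}\ket{\psi}$, so the monodromy of the two blocks acts as a scalar as claimed.

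I expect the main obstacle to be the bookkeeping in the block factorisation rather than any deep difficulty: one must verify the half-twist identity carefully and, more delicately, justify the block-locality step, namely that a braid supported on the first $k$ (resp. last $l$) strands acts through the subsystem representation $\rho_{m_{1}}$ (resp. $\rho_{m_{2}}$) on the decomposition (\ref{ausevy}), including the fact that it reverses the internal order of its block. Keeping the string reversals $m_{i}\mapsto\tilde m_{i}$ consistent across the three composed maps, and confirming that the sector label $Q$ (and the subsystem labels $x,y$) are genuinely order-independent, is where the argument must be executed with care.
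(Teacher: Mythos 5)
Your overall strategy is the paper's own: both rest on the block factorisation $\beta_{k+l}=t_{k,l}\left[\beta_{k}\cdot r_{k}(\beta_{l})\right]$ (Theorem \ref{ssbrecur}(ii)), and your part (i) is just the paper's argument with the identity solved for $t_{k,l}$ instead of for $\beta_{k+l}$; your part (iii) (apply (i) twice, to $\ket{\psi}$ and then to its image with the roles of $m_{1},m_{2}$ swapped) is a clean and valid variant of the paper's computation with $\beta_{n}^{2}=t_{l,k}\left[r_{l}(\beta_{k}^{2})\cdot\beta_{l}^{2}\right]t_{k,l}$.

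The genuine gap is in part (ii), and it is exactly the point you flag but do not resolve. After the two block braids have carried $\ket{\psi}$ into $V^{\tilde m_{1}}_{x}\otimes V^{xy}_{Q}\otimes V^{\tilde m_{2}}_{y}$, all that Theorem \ref{main1} tells you about $\rho(\beta_{k+l})$ is that it preserves the superselection sector and acts there by the scalar $e^{iu_{Q}}$; its codomain is therefore only pinned down to the whole sector $V^{m_{2}m_{1}}_{Q}\cong\bigoplus_{X,Y}V^{m_{2}}_{Y}\otimes V^{YX}_{Q}\otimes V^{m_{1}}_{X}$, and ``tracking domains and codomains'' gives you no control over which summands the image meets. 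A priori the image of the $(x,Q,y)$ summand could spread over several $(Y,X)$ summands of (\ref{ausevy2}) — indeed this is why the paper's own proof of (i) only asserts a map into $V^{\tilde s}_{Q}$ and then spends a separate argument on (ii). What is missing is the statement that $\rho(t_{k,l})$ (equivalently, your $\rho(\beta_{k+l})$ step) \emph{intertwines} the block superselection braids of the initial configuration with those of the reordered one, so that it carries their eigenspaces to the corresponding eigenspaces. The paper extracts this from the further factorisations $\beta_{n}=\beta_{l}\cdot t_{k,l}\cdot\beta_{k}$ and $\beta_{n}=r_{l}(\beta_{k})\cdot t_{k,l}\cdot r_{k}(\beta_{l})$ (Theorem \ref{ssbrecur}(iii)--(iv)): running the phase bookkeeping of (i) through each of these forces the first block of the image to lie in $V^{m_{2}}_{y}$ and the last block to lie in $V^{m_{1}}_{x}$ respectively, and intersecting the two constraints gives (\ref{yumbo}). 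Equivalently, you could invoke the commutation relations $\beta_{l}\cdot t_{k,l}=t_{k,l}\cdot r_{k}(\beta_{l})$ and $t_{k,l}\cdot\beta_{k}=r_{l}(\beta_{k})\cdot t_{k,l}$ directly. Either way, this extra braid-identity input must be added to your sketch; without it part (ii) — and hence the eigenvalue bookkeeping you rely on in part (iii) — is not established.
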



\vspace{4mm}
\begin{cor}
\begin{equation}\rho_{s}(u_{k,l})\ket{\psi}=e^{-i(u_{Q}-u_{x}-u_{y})}\ket{\psi}\end{equation}
\label{negtkl}\end{cor}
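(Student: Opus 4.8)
The plan is to reduce the claim to Theorem \ref{mainthm2}(i) by exploiting the relation $t_{k,l}^{-1}=u_{l,k}$ together with the unitarity of the braiding action recorded in \ref{b4}. Relabelling $k\leftrightarrow l$ in $t_{k,l}^{-1}=u_{l,k}$ gives $u_{k,l}=t_{l,k}^{-1}$, so I would first rewrite $\rho_{s}(u_{k,l})=\rho_{s}(t_{l,k}^{-1})$. Writing $s=m_{1}m_{2}$ with $(|m_{1}|,|m_{2}|)=(k,l)$ and setting $s':=m_{2}m_{1}$, note that $t_{l,k}$ sends $s'$ to $s$ as a permutation, i.e.\ $t_{l,k}(s')=s$. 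Hence by the adjoint formula in \ref{b4}, $\rho_{s}(u_{k,l})=\rho_{s}(t_{l,k}^{-1})=\left(\rho_{s'}(t_{l,k})\right)^{\dagger}$, viewing $\rho_{s'}(t_{l,k})$ as the unitary isomorphism $V^{s'}\xrightarrow{\sim}V^{s}$.

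Next I would apply Theorem \ref{mainthm2}(i) to the system $V^{s'}_{Q}=V^{m_{2},m_{1}}_{Q}$, decomposed as in (\ref{ausevy2}), with the roles of $m_{1}$ and $m_{2}$ (and hence of $x$ and $y$) interchanged. The relevant state is $\ket{\psi'}:=\ket{\psi_{y}}\ket{\psi^{yx}_{Q}}\ket{\psi_{x}}\in V^{m_{2}}_{y}\otimes V^{yx}_{Q}\otimes V^{m_{1}}_{x}$; here $V^{yx}_{Q}$ is nonzero because Theorem \ref{mainthm2}(ii) supplies an isomorphism $V^{xy}_{Q}\xrightarrow{\sim}V^{yx}_{Q}$ and $V^{xy}_{Q}$ was assumed nonzero. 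Crucially, the eigenvalue $e^{iu_{Q}}$ is intrinsic to the superselection sector $Q$ and independent of the ordering of the strings, by Theorem \ref{main1}; likewise $u_{x},u_{y}$ depend only on the sectors $x,y$ of $V^{m_{1}},V^{m_{2}}$. Theorem \ref{mainthm2}(i) therefore yields $\rho_{s'}(t_{l,k})\ket{\psi'}=e^{i(u_{Q}-u_{y}-u_{x})}\ket{\psi}$, where the image state is identified with $\ket{\psi}\in V^{m_{1}}_{x}\otimes V^{xy}_{Q}\otimes V^{m_{2}}_{y}$ under the canonical reordering isomorphism.

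Finally I would invert this relation. Since $\rho_{s'}(t_{l,k})$ is unitary with $\rho_{s'}(t_{l,k})\ket{\psi'}=e^{i(u_{Q}-u_{x}-u_{y})}\ket{\psi}$, its adjoint satisfies $\left(\rho_{s'}(t_{l,k})\right)^{\dagger}\ket{\psi}=e^{-i(u_{Q}-u_{x}-u_{y})}\ket{\psi'}$, and combining this with the identity $\rho_{s}(u_{k,l})=\left(\rho_{s'}(t_{l,k})\right)^{\dagger}$ from the first step gives $\rho_{s}(u_{k,l})\ket{\psi}=e^{-i(u_{Q}-u_{x}-u_{y})}\ket{\psi}$ after identifying $\ket{\psi'}$ with $\ket{\psi}$, as required. (Equivalently, one may avoid \ref{b4} and instead observe via \ref{b3} that $\rho_{s'}(t_{l,k})\circ\rho_{s}(u_{k,l})=\rho_{s}(t_{l,k}u_{k,l})=\rho_{s}(e)=\mathrm{id}_{V^{s}}$, so $\rho_{s}(u_{k,l})$ is the inverse of $\rho_{s'}(t_{l,k})$ and carries the reciprocal eigenvalue.) The only point requiring care, rather than genuine difficulty, is the bookkeeping of the index and string swap $x\leftrightarrow y$, $m_{1}\leftrightarrow m_{2}$ when invoking Theorem \ref{mainthm2}(i) for $s'$, and the verification that the phase $u_{Q}$ is unchanged under reordering, which is exactly what Theorem \ref{main1} guarantees.
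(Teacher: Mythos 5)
Your argument is correct and takes essentially the same route as the paper: both rest on the relation $u_{k,l}=t_{l,k}^{-1}$, the unitarity of the braiding action from \ref{b4}, and Theorem \ref{mainthm2}. The only cosmetic difference is that the paper inserts $\left[\rho_{m_{2},m_{1}}(t_{l,k})\right]^{\dagger}\rho_{m_{2},m_{1}}(t_{l,k})$ in front of $\rho_{m_{1},m_{2}}(t_{k,l})\ket{\psi}$ and evaluates using parts (i) and (iii), while you take the adjoint of the eigenrelation for $\rho_{m_{2},m_{1}}(t_{l,k})$ given by part (i) on the reversed string $m_{2}m_{1}$ --- and your bookkeeping of the swapped labels and of the invariance of $u_{Q}$ under reordering is handled correctly.
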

\begin{proof}
\begin{alignat*}{4}
&\left[\rho_{m_{2},m_{1}}(t_{l,k})\right]^{\dagger}\rho_{m_{2},m_{1}}(t_{l,k})\rho_{m_{1},m_{2}}(t_{k,l})\ket{\psi}=\rho_{m_{1},m_{2}}(t_{k,l})\ket{\psi} \\
{\implies}& \rho_{m_{1},m_{2}}(u_{k,l})\left[e^{i2(u_{Q}-u_{x}-u_{y})}\ket{\psi}\right]=e^{i(u_{Q}-u_{x}-u_{y})}\ket{\psi}
\end{alignat*}
\end{proof}
\vspace{4mm}
\noindent Theorem \ref{mainthm2} tells us that the $k$ and $l$-quasiparticle composites $m_{1}$ and $m_{2}$ (in eigenstates of $\rho_{m_{1}}(\beta_{k})$ and $\rho_{m_{2}}(\beta_{l})$ respectively) behave identically to a pair of quasi-particles under exchange: if we fix eigenspaces $V^{m_{1}}_{x}$ and $V^{m_{2}}_{y}$ such that $V^{xy}_{Q}$ is nonzero, then composites $m_{1}$ and $m_{2}$ behave as a pair of quasiparticles in superselection \mbox{sector} $Q$ with exchange phase $e^{i(u_{Q}-u_{x}-u_{y})}$. The eigenspaces of $\rho_{m_{1}}(\beta_{k})$ and $\rho_{m_{2}}(\beta_{l})$ may thus be considered as representing different `types' of \mbox{quasiparticles} (since the exchange phase depends on $x$ and $y$). We will refer to the `type' of a quasiparticle as its (topological) \textit{charge}. If e.g. $k>1$, we say that the collection $m_{1}$ of quasiparticles \textit{fuses} to a quasiparticle of charge $x$.\footnote{ For $k$=1, note that $u_{x}=0$ since the eigenvalue of $\rho_{m_{1}}(\beta_{1})$ is trivial. Let $m_{1}=q_{j}$. In (\ref{yumbo}), we write $x=q_{j}$ i.e. $q_{j}$ `fuses to itself'. Note that $V^{q_{j}}_{q_{j}}=V^{q_{j}}$ since the eigenspace is the whole space, and recall that a $1$-quasiparticle Hilbert space is canonically isomorphic to $\mathbb{C}$.} It follows that the possible $(x,y)$ for which $V^{xy}_{Q}$ is nonzero represent the distinct possible \textit{fusion outcomes} here. \\ \\
Recall from Remark \ref{supersubdestruct} that we can have a coherent superposition of distinct fusion outcomes for an entangled subsystem of quasiparticles. Furthermore, since the eigenspaces of any $\rho_{\Sigma}(\beta_{n})$ (where $\Sigma$ is an unordered set of quasiparticles of cardinality $n$) can be identified with quasiparticle charges, it follows that the superselection sector of a \mbox{system} can be identified with a (composite) quasiparticle of fixed charge. A complete system of quasiparticles thus has fixed total charge (fusion outcome).

\begin{figure}[H]\centering \includegraphics[width=0.65\textwidth]{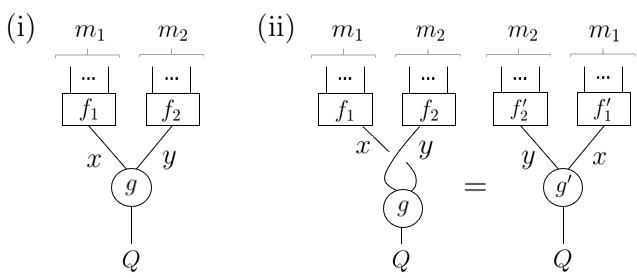} \caption{(i) The fusion diagram graphically depicting an arbitrary state in \mbox{$V^{m_{1}}_{x}\otimes V^{xy}_{Q}\otimes V^{m_{2}}_{y}$} where $f_{1}\in V^{m_{1}}_{x}$, $f_{2}\in V^{m_{2}}_{y}$ and $g\in V^{xy}_{Q}$. \mbox{(ii) Composite} charges $x$ and $y$ are exchanged in superselection sector $Q$, so the fusion state acquires phase $e^{i(u_{Q}-u_{x}-u_{y})}$ relative \mbox{to (i).}} \label{fusdiag1}\end{figure}

\noindent This lends the hitherto abstract factor $V^{xy}_{Q}$ in (\ref{yumbo}) a more tangible interpretation: $V^{m_{1}}_{x}\otimes V^{xy}_{Q}\otimes V^{m_{2}}_{y}$ is the space of states describing the process where collection $m_{1}$ fuses to (a quasiparticle of charge) $x$, collection $m_{2}$ fuses to $y$, and then $x$ and $y$ fuse to $Q$ (see Figure \ref{fusdiag1}(i)). The interpretation of any such tensor decomposition follows analogously. Such Hilbert spaces are thus known as \textit{fusion spaces} and their constituent states are called \textit{fusion states}.\vspace{6mm}

\begin{cor}Fusion is commutative and associative.\label{fucoma}\end{cor}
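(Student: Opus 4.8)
The plan is to establish commutativity and associativity of fusion directly from the braiding structure developed in Theorem \ref{mainthm2}, treating fusion as the assignment sending a pair of charges $(x,y)$ to the collection of superselection sectors $Q$ for which $V^{xy}_{Q}$ is nonzero, with $\dim V^{xy}_{Q}$ recording the multiplicity. Commutativity amounts to showing that $V^{xy}_{Q}\cong V^{yx}_{Q}$ as vector spaces (so the same fusion outcomes arise with the same multiplicities regardless of order). This is essentially read off from part \ref{exthm1} of Theorem \ref{mainthm2}: the exchange map $\rho_{s}(t_{k,l})$ in (\ref{yumbo}) is a linear isomorphism $V^{m_{1}}_{x}\otimes V^{xy}_{Q}\otimes V^{m_{2}}_{y}\xrightarrow{\sim}V^{m_{2}}_{y}\otimes V^{yx}_{Q}\otimes V^{m_{1}}_{x}$, and since the outer tensor factors $V^{m_{1}}_{x}$ and $V^{m_{2}}_{y}$ are common to both sides, the isomorphism forces $V^{xy}_{Q}\cong V^{yx}_{Q}$. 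One should be slightly careful that this really isolates the middle factor rather than shuffling dimension between tensor factors, but because $V^{m_{1}}_{x}$ and $V^{m_{2}}_{y}$ are fixed spaces of fixed dimension, comparing dimensions on both sides yields $\dim V^{xy}_{Q}=\dim V^{yx}_{Q}$, which is exactly commutativity of fusion with multiplicities.

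For associativity, the plan is to compute the multiplicity space of fusing three collections $m_{1},m_{2},m_{3}$ (of charges $x,y,z$) into a total charge $Q$ in the two natural groupings $(m_{1}m_{2})m_{3}$ and $m_{1}(m_{2}m_{3})$, and to exhibit a canonical isomorphism between them. Concretely, I would iterate the decomposition (\ref{wineup_intro}): fusing $m_{1}$ and $m_{2}$ first to an intermediate charge $w$ and then $w$ with $m_{3}$ to $Q$ gives a multiplicity space $\bigoplus_{w}V^{xy}_{w}\otimes V^{wz}_{Q}$, whereas fusing $m_{2}$ and $m_{3}$ first to $w'$ and then $m_{1}$ with $w'$ gives $\bigoplus_{w'}V^{yz}_{w'}\otimes V^{xw'}_{Q}$. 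Both of these are obtained as different ways of decomposing the single space $V^{m_{1},m_{2},m_{3}}_{Q}$ with respect to its three-way partition, so they must be isomorphic; the isomorphism between them is precisely the associator ($F$-symbol). The cleanest argument is that both bracketings are canonically identified with the full space $V^{m_{1}m_{2}m_{3}}_{Q}$ after stripping off the fixed outer factors $V^{m_{1}}_{x}\otimes V^{m_{2}}_{y}\otimes V^{m_{3}}_{z}$, exactly as the iterated use of (\ref{diwalichoro_intro})–(\ref{wineup_intro}) was set up to do; uniqueness of the decomposition of a fixed Hilbert space up to isomorphism then delivers the required natural isomorphism.

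The main obstacle I anticipate is bookkeeping rather than conceptual: one must verify that the two iterated decompositions genuinely describe the \emph{same} ambient space $V^{m_{1}m_{2}m_{3}}_{Q}$ and not merely abstractly isomorphic spaces, so that the comparison is canonical and not just a coincidence of dimensions. This requires invoking the well-definedness (coherence) of the action $\{\rho_{s}\}_{s}$ discussed after \ref{b5}, which guarantees that regrouping the quasiparticles corresponds to genuinely equal (not merely isomorphic) maps up to the coherence data. I would therefore take care to phrase associativity as the statement that the fusion multiplicities satisfy $\sum_{w}\dim V^{xy}_{w}\,\dim V^{wz}_{Q}=\sum_{w'}\dim V^{yz}_{w'}\,\dim V^{xw'}_{Q}$, which follows immediately from equating dimensions of the two decompositions of $V^{m_{1}m_{2}m_{3}}_{Q}$, and note that the underlying isomorphism is the associator whose coherence is ensured by the hexagon/pentagon conditions alluded to in the well-definedness remark. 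In short, both properties reduce to the observation that $V^{m_{1}\ldots m_{r}}_{Q}$ is a single well-defined space admitting these alternative tensor decompositions, with commutativity coming from part \ref{exthm1} of Theorem \ref{mainthm2} and associativity from iterating (\ref{wineup_intro}).
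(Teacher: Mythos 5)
Your proposal is correct, and for associativity it is essentially the paper's argument: the paper also observes that the fusion outcome $Q$ is an eigenspace of $\rho_{s}(\beta_{n})$ defined independently of any bracketing, so that recursively partitioning via Theorem \ref{mainthm2} (organised as a full rooted binary tree) yields alternative tensor decompositions of the \emph{same} space $V^{s}_{Q}$, whence the set of admissible root labels cannot depend on the order of fusion; your dimension-count $\sum_{w}\dim V^{xy}_{w}\dim V^{wz}_{Q}=\sum_{w'}\dim V^{yz}_{w'}\dim V^{xw'}_{Q}$ is just the numerical shadow of that statement, and your caution about well-definedness of the ambient space is exactly the point the paper settles by defining $Q$ via the superselection braid rather than via a tree. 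For commutativity your route differs slightly: the paper reads it off directly from Theorem \ref{main1}, namely that the fusion outcomes are the eigenspaces of $\rho_{[n]}(\beta_{n})$ on $\mathcal{V}^{[n]}$, a space defined modulo ordering, so the outcomes are manifestly order-independent; you instead extract $\dim V^{xy}_{Q}=\dim V^{yx}_{Q}$ from the exchange isomorphism of Theorem \ref{mainthm2}\ref{exthm1} by cancelling the nonzero outer factors $V^{m_{1}}_{x}$ and $V^{m_{2}}_{y}$. Both are valid; the paper's version is more direct (it does not pass through the fusion theorem, which is itself built on Theorem \ref{main1}) and gives order-independence of the sectors themselves, while yours delivers the symmetry of the multiplicity spaces $N^{xy}_{Q}=N^{yx}_{Q}$ in the form the paper later records as (\ref{upperfussym}).
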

\begin{proof}Commutativity follows from Theorem \ref{main1}: the possible fusion outcomes for an $n$-quasiparticle system correspond to the eigenspaces of $\rho_{[n]}(\beta_{n})$ on $\mathcal{V}^{[n]}$ (whence the order of the $n$ quasiparticles is irrelevant). \\
Associativity follows from recursive application of Theorem \ref{mainthm2} i.e.\ further partitioning $m_{1}$ and $m_{2}$ and so on until no further partitions can be made: we will view such a recursive choice of partitions as a \textit{full rooted binary tree with $n$ leaves}. This provides us with a \textit{fusion tree} illustrating the order in which $n$ quasiparticles are fused (see Figure \ref{fustree1}). Since $Q$ corresponds to an arbitrary eigenspace of $\rho_{s}(\beta_{n})$, it follows that the set of possible fusion outcomes (i.e.\ the set of possible labels for the root) does not depend on the order in which fusion occurs.\end{proof}
\vspace{6mm}
\begin{figure}[H]\centering \includegraphics[width=0.85\textwidth]{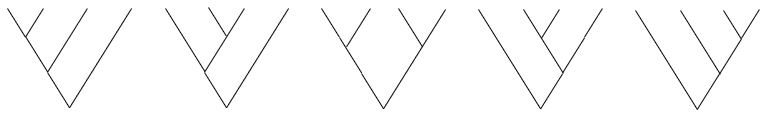} \caption{All possible fusion trees for $4$ particles. For $n$ particles, the number of possible fusion trees is given by $C_{n-1}=\frac{1}{n}{2n-2\choose n-1}$ i.e. the $(n-1)^{th}$ Catalan number.} \label{fustree1} \end{figure}

\newpage
\noindent By the associativity and commutativity of fusion, the charge of an unordered \mbox{collection} $\Sigma$ of quasiparticles can be thought of as a property of any connected region of the system in which solely the excitations in $\Sigma$ are enclosed. This is one of the reasons that quasiparticle charge is called `topological' (as opposed to e.g.\ electric charge which is defined geometrically via the charge density). Similarly to electric charge, we have seen that topological charge may correspond to a superselection rule of a system; but unlike electric charge, we may also observe a superposition of topological charges (for an entangled subsystem).

\vspace{3mm}
\begin{figure}[H]\centering \includegraphics[width=0.3\textwidth]{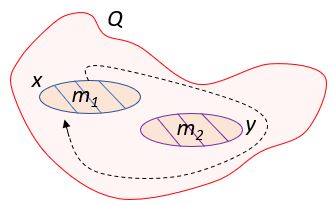}\caption{Winding a quasiparticle collection $m_{1}$ of charge $x$ around collection $m_{2}$ of charge $y$ in a region of total charge $Q$ accumulates statistical phase $e^{i2(u_{Q}-u_{x}-u_{y})}$. This diagram illustrates the same process as on the left-hand side of Figure \ref{fusdiag1}(ii) but with an additional exchange.}\label{windcolls}\end{figure}

\vspace{1mm}
\begin{remk}Take care to note that statistical phases of the form $e^{iu_{Q}}$ are not a property of charge $Q$ alone, but arise as eigenvalues of some $\rho_{s}(\beta_{n})$ i.e.\ the phase also depends on the constituent charges fusing to $Q$. To this end, a better notation for $e^{iu_{Q}}$ would be $\omega^{\Sigma}_{Q}\in U(1)$ where $\Sigma$ is the unordered set of constituent characters of $s$. Nonetheless, we have opted for the former notation for sake of presentation.\label{ssphaseinpdep}\end{remk}

\noindent As indicated by Theorem \ref{mainthm2}, fusion generally does not correspond to a physical process but rather describes how a collection of charges may be considered as a composite charge. Of course, the \textit{measurement} of a fusion outcome is \mbox{physically significant.}\\


\noindent In order to prove Theorems \ref{mainthm2}, we will need the braid identities in Theorem \ref{ssbrecur} (whose proof is given in Appendix \ref{proofappx2}). Theorem \ref{ssbrecur} shows that the superselection braid may be defined recursively.\footnote{Choosing between forms (i)-(iv) at each decision (and permuting the terms in square brackets if desired) parses $\beta_{n}$ into a composition of braids of the form $r_{d}(t_{k,l})$.\ The braid word (\ref{ssbprop}) for $\beta_{n}$ is recovered by choosing (ii) at every iteration with $l=1$.}\textsuperscript{,}\footnote{Note that $\beta_{n}^{-1}$ is given by (i)-(iv) but with a superscript `$-1$' on each $t$ and $\beta$. This is easily seen by inverting (i)-(iv).}

\vspace{3mm}
\begin{thm}{\normalfont\textbf{(Superselection braid by recursion)}}\\ 
Let $n\geq2$. For any positive integers $k, l$ such that $k+l=n$, $\beta_{n}$ is given by
\begin{enumerate}[label=(\roman*)]
\item $\left[\beta_{l}\cdot r_{l}(\beta_{k})\right]t_{k,l}$
\item $t_{k,l}\left[\beta_{k}\cdot r_{k}(\beta_{l})\right]$
\item $\beta_{l}\cdot t_{k,l}\cdot\beta_{k}$
\item $r_{l}(\beta_{k})\cdot t_{k,l}\cdot r_{k}(\beta_{l})$
\end{enumerate}
and $\beta_{1}:=e$. The terms enclosed in square brackets commute.
\label{ssbrecur}\end{thm}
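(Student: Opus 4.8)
\textbf{Proof strategy for Theorem \ref{ssbrecur}.}

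The plan is to establish the four decompositions of $\beta_{n}$ purely as identities in the braid group $B_{n}$, working directly from the defining expression (\ref{ssbprop}), namely $\beta_{n}=b_{n-1}b_{n-2}\cdots b_{1}$ where $b_{j}=\sigma_{12\ldots j}$. The core observation is that each $b_{j}=\sigma_{1}\sigma_{2}\cdots\sigma_{j}$ is the braid that carries the $(j{+}1)$-th strand to the front (or, read the other way, drags the first strand past $j$ others), so a product $\beta_{l}\cdot r_{l}(\beta_{k})$ braids the first $l$ strands among themselves and the last $k$ strands among themselves, leaving only the cross-block exchanges to be supplied by a single copy of $t_{k,l}$. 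I would prove form (i) first and then obtain the others from it by symmetry operations.

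First I would verify form (i), $\beta_{n}=\bigl[\beta_{l}\cdot r_{l}(\beta_{k})\bigr]t_{k,l}$, by induction on $n$ (equivalently on $k$ with $l$ fixed, peeling off one strand at a time). The cleanest route is to use the block structure: I would rewrite $\beta_{n}=b_{n-1}\cdots b_{1}$ and repeatedly apply the far-commutativity relation $\sigma_{i}\sigma_{j}=\sigma_{j}\sigma_{i}$ for $|i-j|\geq2$ together with the representation (\ref{tikell}) of $t_{k,l}$ as a product of shifted $b$'s, $t_{k,l}=r_{l-1}(b_{k})\cdots r_{1}(b_{k})\cdot r_{0}(b_{k})$. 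Concretely, each generator-row $b_{m}$ with $m\geq l$ factors as a ``within the first block'' part times a ``cross-block'' part, and I would migrate all cross-block pieces to the right (past the within-block pieces, which commute with them) to assemble exactly one $t_{k,l}$, while the leftover within-block rows reassemble into $\beta_{l}\cdot r_{l}(\beta_{k})$. The commutation of the two bracketed factors $\beta_{l}$ and $r_{l}(\beta_{k})$ is immediate: $\beta_{l}$ involves only $\sigma_{1},\ldots,\sigma_{l-1}$ while $r_{l}(\beta_{k})$ involves only $\sigma_{l+1},\ldots,\sigma_{n-1}$, and these index sets are separated by at least $2$, so every generator of the one commutes with every generator of the other.

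Once (i) is in hand, I would derive the remaining three forms without further induction. For form (iv), $r_{l}(\beta_{k})\cdot t_{k,l}\cdot r_{k}(\beta_{l})$, I would use Lemma \ref{main1lem3} in its block-shifted avatar together with the identity $\overleftarrow{t_{k,l}}=t_{l,k}$ noted after (\ref{tikell}); the point is that $t_{k,l}$ intertwines the two blocks, so commuting $t_{k,l}$ past $\beta_{l}$ converts it into $r_{l}(\beta_{k})$-flavoured braiding on the swapped block (this is exactly the mechanism behind Lemma \ref{main1lem3}, where $\beta_{n}\sigma_{i}^{\pm1}=\sigma_{n-i}^{\pm1}\beta_{n}$ reflects the strand-reversal symmetry). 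Forms (ii) and (iii) follow by applying the anti-automorphism $\chi$ (which reverses words and fixes generators) and the reversal symmetry $t_{k,l}\leftrightarrow t_{l,k}$: applying $\chi$ to (i), using that $\chi(\beta_{n})=\beta_{n}$ (since $\beta_{n}$ is a palindrome, which itself can be checked from Lemma \ref{main1lem3}) and $\chi(t_{k,l})=t_{l,k}$, interchanges the roles of the two blocks and yields (ii); interpolating $t_{k,l}=t_{k,l}$ between the two within-block factors gives (iii) once one shows the cross-term can be slid into the middle.

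The main obstacle I anticipate is the bookkeeping in the base step (i): making precise the claim that the rows $b_{m}$ ``factor'' into commuting within-block and cross-block pieces, and that reassembly produces exactly $\beta_{l}$, $r_{l}(\beta_{k})$, and one clean $t_{k,l}$ with no residual generators. The danger is an off-by-one in the index ranges (which row first reaches across the block boundary at position $l$) and in correctly tracking which shifted copy $r_{d}(b_{k})$ of (\ref{tikell}) each cross-block fragment lands in. I would manage this by fixing the convention that strands $1,\ldots,l$ form the left block and $l{+}1,\ldots,n$ the right, drawing the braid diagram for a small case such as $n=4$, $(k,l)=(2,2)$ to fix signs and indices, and then promoting the picture to a formal induction on $k$ where the inductive step removes the single outermost row $b_{n-1}$ and invokes the far-commutativity relations to reposition it.
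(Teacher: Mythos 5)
Your proposal is correct and follows essentially the same route as the paper: form (i) is established by an inductive braid-word computation that sorts the generators of $\beta_{n}=b_{n-1}\cdots b_{1}$ into the two within-block factors plus a single copy of $t_{k,l}$ (the paper organises this as a double induction on $k$ and $l$ anchored on the reversed factorisation $\beta_{k}=r_{k-2}(b_{1})\cdots r_{0}(b_{k-1})$ of Lemma \ref{main2lem1}), form (ii) follows by applying $\chi$ together with the palindrome property $\beta_{n}=\overleftarrow{\beta_{n}}$ and $\overleftarrow{t_{k,l}}=t_{l,k}$, and forms (iii)--(iv) follow from the intertwining relations $\beta_{l}\,t_{k,l}=t_{k,l}\,r_{k}(\beta_{l})$ and $t_{k,l}\,\beta_{k}=r_{l}(\beta_{k})\,t_{k,l}$. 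One minor correction: those intertwining relations are a property of $t_{k,l}$ itself (the paper's Lemma \ref{fizko}, which is graphically immediate), not a block-shifted form of Lemma \ref{main1lem3}, and passing $\beta_{l}$ through $t_{k,l}$ yields $r_{k}(\beta_{l})$ rather than $r_{l}(\beta_{k})$.
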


\begin{proof}[\underline{Proof of Theorem \ref{mainthm2}}]\hspace{2mm}\\[2mm]
Let $\tilde{v}$ denote the reverse of a string $v$.

\vspace{1.5mm}
\begin{enumerate}[label=(\roman*)]
\item Using Theorem \ref{ssbrecur}(ii),
\begin{equation*}\begin{split}
\rho_{s}(\beta_{n})\ket{\psi}&=\rho_{\tilde{m}_{1},\tilde{m}_{2}}(t_{k,l}) \ \rho_{m_{1},m_{2}}\left([\beta_{k}\cdot r_{k}(\beta_{l})]\right)\ket{\psi} \\
&=\rho_{\tilde{m}_{1},\tilde{m}_{2}}(t_{k,l})\left[e^{i(u_{x}+u_{y})}\ket{\psi}\right]
\end{split}\end{equation*}
Recalling that $\rho_{s}(\beta_{n})\ket{\psi}=e^{iu_{Q}}\ket{\psi}$, we deduce that
\begin{equation*}\begin{split}
\rho_{\tilde{m}_{1},\tilde{m}_{2}}(t_{k,l}):V^{\tilde{m}_{1}}_{x}\otimes V^{xy}_{Q}\otimes V^{\tilde{m}_{2}}_{y}&\to V^{\tilde{s}}_{Q} \\
\ket{\phi}&\mapsto e^{i(u_{Q}-u_{x}-u_{y})}\ket{\phi}
\end{split}\end{equation*}

\vspace{1mm}
\item We know that
\begin{equation}\rho_{s}(t_{k,l}): V^{m_{1}}_{x}\otimes V^{xy}_{Q}\otimes V^{m_{2}}_{y} \to V^{m_{2}m_{1}}_{Q}\label{famalamuto}\end{equation}
where $V^{m_{2},m_{1}}_{Q}$ has decomposition (\ref{ausevy2}). We wish to show that the range of (\ref{famalamuto}) is restricted as in (\ref{yumbo}). Using Theorem \ref{mainthm2}(i) and Theorem \ref{ssbrecur}(iii),
\begin{equation*}\begin{split}
\rho_{s}(\beta_{n})\ket{\psi}&=\rho_{m_{2},\tilde{m}_{1}}(\beta_{l})\ \rho_{\tilde{m}_{1},m_{2}}(t_{k,l})\ \rho_{m_{1},m_{2}}(\beta_{k})\ket{\psi} \\
&=\rho_{m_{2},\tilde{m}_{1}}(\beta_{l})\left[e^{i(u_{Q}-u_{y})}\ket{\psi}\right]
\end{split}\end{equation*}
and since $\rho_{s}(\beta_{n})\ket{\psi}=e^{iu_{Q}}\ket{\psi}$, we deduce that
\begin{equation} \rho_{\tilde{m}_{1},m_{2}}(t_{k,l}): V^{\tilde{m}_{1}}\otimes V^{xy}_{Q}\otimes V^{m_{2}}_{y}\to \bigoplus_{X}V^{m_{2}}_{y}\otimes V^{yX}_{Q}\otimes V^{\tilde{m}_{1}}_{X} \label{gloopy1}\end{equation}
Similarly, by using Theorem  \ref{mainthm2}(i) and Theorem \ref{ssbrecur}(iv) we may deduce that
\begin{equation} \rho_{m_{1},\tilde{m}_{2}}(t_{k,l}): V^{m_{1}}\otimes V^{xy}_{Q}\otimes V^{\tilde{m}_{2}}_{y}\to \bigoplus_{Y}V^{\tilde{m}_{2}}_{Y}\otimes V^{Yx}_{Q}\otimes V^{m_{1}}_{x} \label{gloopy2}\end{equation}
Combining (\ref{gloopy1}) and (\ref{gloopy2}), the result follows.

\vspace{4mm}
\item By identities (i) and (ii) of Theorem \ref{ssbrecur},
\begin{equation}\beta_{n}^{2}=t_{l,k}\left[r_{l}(\beta_{k}^{2})\cdot\beta^{2}_{l}\right] t_{k,l}\end{equation}
whence
\begin{equation*}\begin{split}\rho_{s}(\beta_{n}^{2})\ket{\psi}&=e^{i2(u_{x}+u_{y})}\left[\rho_{m_{2},m_{1}}(t_{l,k})\cdot \rho_{m_{1},m_{2}}(t_{k,l})\ket{\psi}\right] \\ 
\implies \rho_{m_{2},m_{1}}(t_{l,k})\left[\rho_{m_{1},m_{2}}(t_{k,l})\ket{\psi}\right]&=e^{i2(u_{Q}-u_{x}-u_{y})}\ket{\psi} \\
\implies \rho_{m_{2},m_{1}}(t_{l,k})\left[\rho_{m_{1},m_{2}}(t_{k,l})\ket{\psi}\right]&=e^{i(u_{Q}-u_{x}-u_{y})}\left[\rho_{m_{1},m_{2}}(t_{k,l})\ket{\psi}\right]\end{split}\end{equation*}
where we used parts (i) and (ii) of Theorem \ref{mainthm2} in the third and first lines respectively.
\end{enumerate}
\end{proof}

Given the fusion space $V^{s}=\bigoplus_{Q}V^{s}_{Q}$ (where $s=q_{1}\ldots q_{n}\in S_{\{1,\ldots,n\}}$ and $Q$ indexes the superselection sectors), fix a fusion tree (as in Figure \ref{fustree1}): each of the $n-1$ fusion vertices\footnote{By ``fusion vertices'', we mean vertices in the fusion tree with two or more incident edges i.e.\ any vertex that is not a leaf. Leaves correspond to initial quasiparticles.} corresponds to an eigenspace of $\rho_{s(v)}(\beta_{|s(v)|})$, where for a fusion vertex $v$ we let $s(v)$ denote the substring of $s$ given by the leaves descending from $v$, and $|s(v)|$ denotes the length of $s(v)$. Note that $2\leq |s(v)| \leq n$. \\

We thus label each fusion vertex $v$ with an eigenspace of $\rho_{s(v)}(\beta_{|s(v)|})$ (recall that such a label represents a fixed topological charge and is called a `fusion outcome' in this context). Such a labelling is called \textit{admissible} if the corresponding fusion subspace of $V^{s}$ has nonzero dimension. Note that the root label corresponds to the superselection sector of the system. Observe that fixing a fusion tree specifies a decomposition of $V^{s}$ in terms of the eigenspaces of $\{\rho_{s(v)}(\beta_{|s(v)|})\}_{v}$ . We write such a decomposition in the form yielded by recursive application 
of (\ref{ausevy}) e.g.\ a fusion tree of the form illustrated in Figure \ref{fustree2} specifies the decomposition \begin{equation}V^{q_{1}q_{2}q_{3}q_{4}}\cong\bigoplus_{X_{1},X_{2},Q}V^{q_{1}q_{2}}_{X_{1}}\otimes V^{X_{1},q_{3}}_{X_{2}}\otimes V^{X_{2},q_{4}}_{Q}\end{equation}
\vspace{-2mm}
\begin{figure}[H]\centering \includegraphics[width=0.2\textwidth]{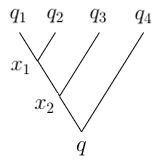} \caption{The labels $x_{1}, x_{2}$ and $q$ correspond to eigenspaces of $\rho_{q_{1}q_{2}}(\beta_{2}), \rho_{q_{1}q_{2}q_{3}}(\beta_{3})$ and $\rho_{q_{1}q_{2}q_{3}q_{4}}(\beta_{4})$ respectively. The triple $(x_{1},x_{2},q)$ of charges is an admissible labelling of the tree if the fusion subspace $V^{q_{1}q_{2}}_{x_{1}}\otimes V^{x,q_{3}}_{x_{2}}\otimes V^{x_{2},q_{4}}_{Q}\subseteq V^{q_{1}q_{2}q_{3}q_{4}}$ is nonzero.} \label{fustree2}\end{figure}

Theorem \ref{ssbrecur} provides a method for parsing $\beta_{n}$ into a composition of braids of the form $r_{d}(t_{k,l})$. Any such parsing involves making a choice of $n-1$ partitions. From any possible sequence of partitions, we can always recover a fusion tree with which the parsing of $\beta_{n}$ is \textit{compatible}. By compatibility, we mean that it is readily apparent how the fusion tree will transform under the action of $\beta_{n}$ i.e.\ $\beta_{n}$ can be parsed into a sequence of braids that each have a well-defined action on the decomposed components of the system. The incoming branches of each fusion vertex in the tree are clockwise exchanged and so the initial fusion tree is sent to its mirror image. The braid $\beta_{n}$ is thus compatible with all $n$-leaf fusion trees (as expected).

\begin{figure}[H]\centering \includegraphics[width=0.4\textwidth]{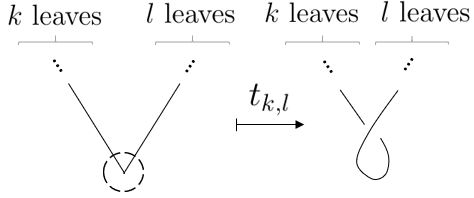} \caption{$t_{k,l}$ clockwise exchanges the incoming branches of a fusion vertex that has $k$ leaves and $l$ leaves stemming from it.} \label{t_k,l}\end{figure}

\begin{remk}Given $\ket{\psi}\in V^{s}_{Q}$ , we know that $\rho_{s}(\beta_{n})\ket{\psi}=e^{iu_{Q}}\ket{\psi}$. It is illuminating to examine how the phase $e^{iu_{Q}}$ arises given a decomposition of $V^{s}_{Q}$ . Consider any admissibly labelled fusion tree in  $V^{q_{1}\ldots q_{n}}_{Q}$ (whence the root has label $Q$). We know that $\rho_{s}(\beta_{n})$ will clockwise exchange the incoming branches of every fusion vertex. For any fusion vertex, the clockwise exchange is given by 
\begin{figure}[H]\centering \includegraphics[width=0.55\textwidth]{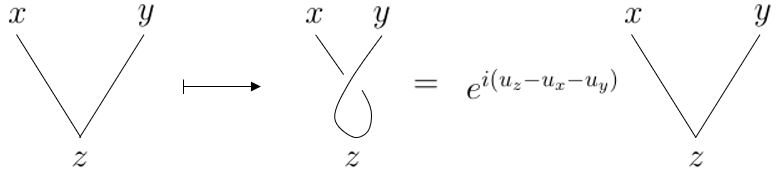} \caption*{} \label{fusvextkl}\end{figure}\vspace{-11mm}
\noindent where the phase evolution follows from Theorem \ref{mainthm2}. It is easy to see that the total phase evolution acquired by clockwise exchanging the incoming branches of every fusion vertex will be $e^{i\left[u_{Q}-(u_{q_{1}}+ \cdots + u_{q_{n}})\right]}$ (phases associated to internal nodes of the tree will cancel). Finally, observe that the $u_{q_{i}}$ are zeroes (since they are arguments of eigenvalues under the action of $\beta_{1}=e$).\label{mainremprec}\end{remk}


\begin{thm}{\normalfont\textbf{(Uniqueness of the superselection braid)}}\\ 
$\beta^{\pm1}_{n}$ are the unique braids under whose action the fusion space decomposes into the superselection sectors of an \mbox{$n$-quasiparticle} system.
\label{sidethm1}\end{thm}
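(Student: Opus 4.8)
The plan is to show that $\beta_n$ (and its inverse) are the \emph{only} braids whose action specifies the superselection sectors, i.e.\ the only generators of a subgroup $\mathcal{E}_n$ satisfying the defining condition (\ref{heurozone_intro}). First I would make precise what it means for a braid $g$ to ``specify the superselection sectors'': by the construction in Section \ref{exches2dsec}, $g\in\mathcal{E}_n$ iff the eigenspace decomposition of $\rho_s(g)$ is independent of $s$, equivalently iff each eigenspace of $\rho_s(g)$ is stable under the action of \emph{all} $n$-braids. The natural strategy is therefore to characterise which braids $g$ have the property that $\rho_{b(s)}(g)\circ\rho_s(b)$ acts as a scalar on each eigenspace for every $b\in B_n$ and every $s$, and to show this forces $g\in\{\beta_n,\beta_n^{-1}\}$.

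My first concrete step would be to re-read the proof of Theorem \ref{main1}: the key mechanism there is Lemma \ref{main1lem3}, $\beta_n\sigma_i^{\pm1}=\sigma_{n-i}^{\pm1}\beta_n$, combined with the transport identity (\ref{transhit}). So I would reverse-engineer the requirement: a braid $g$ specifies superselection sectors precisely when conjugation by $g$ together with (\ref{transhit}) guarantees that the $\sigma_i$-action preserves $g$'s eigenspaces for all $s$. Concretely, the condition that the image of an eigenspace $V^s_Q$ under $\rho_s(\sigma_i^{\pm1})$ again lands in an eigenspace of the \emph{same} eigenvalue of $\rho_{\sigma_i(s)}(g)$ should translate into a braid relation of the form $g\,\sigma_i^{\pm1} = \sigma_{\pi(i)}^{\pm1}\,g$ for some index permutation $\pi$, valid for all $i$. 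The heart of the argument is then purely combinatorial/group-theoretic: determine all $g\in B_n$ for which there exists a permutation $\pi$ of $\{1,\dots,n-1\}$ with $g\sigma_i = \sigma_{\pi(i)}g$ for every $i$, and with the sign of each exchange preserved.

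The main technical work is to solve this conjugation equation in $B_n$. I would argue that the induced permutation $\eta(g)\in S_n$ (via the projection $\eta\colon B_n\to S_n$ of the excerpt) must conjugate the Coxeter generators $s_i$ to generators $s_{\pi(i)}$, which restricts $\eta(g)$ to a graph automorphism of the $A_{n-1}$ Dynkin diagram; this forces $\eta(g)$ to be either the identity or the longest element $w_0$ (order reversal), matching the fact that $\beta_n$ reverses the strand order. Ruling out the identity case (where $g$ would be a pure braid commuting with all $\sigma_i$, hence central, and one checks the centre of $B_n$ cannot specify nontrivial sectors) leaves $\eta(g)=w_0$. Then, using that $\beta_n$ already realises $w_0$, I would consider $g\beta_n^{-1}\in PB_n$ and show it must be central in $B_n$; since the centre of $B_n$ is generated by the full twist $\beta_n^2=(\sigma_1\cdots\sigma_{n-1})^n$, and demanding that the resulting eigenvalue structure still define genuine (nontrivial) superselection sectors independent of $s$, the only admissible central factors are trivial, giving $g=\beta_n$ up to the orientation choice $g=\beta_n^{-1}$.

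I expect the \textbf{main obstacle} to be the last step: controlling the pure-braid ambiguity. Knowing $\eta(g)=w_0$ only determines $g$ up to a pure braid $p=g\beta_n^{-1}$, and I must show $p$ contributes nothing beyond an overall orientation. The delicate point is that a central element (a power of the full twist) acts as a global scalar and so does \emph{not} refine the eigenspace decomposition, whereas a non-central pure braid factor would generically make the eigenspaces of $\rho_s(g)$ depend on $s$ and thus violate the defining property of $\mathcal{E}_n$; turning this ``generically'' into a rigorous statement — that \emph{no} non-central $p$ can conspire with $\beta_n$ to keep all eigenspaces $s$-independent — is the crux. I anticipate leaning on Lemma \ref{main1lem3} and (\ref{transhit}) to show that any deviation from $\beta_n$ breaks the exact intertwining $g\sigma_i^{\pm1}=\sigma_{n-i}^{\pm1}g$ that made Theorem \ref{main1} work, and hence breaks $s$-independence, thereby closing the argument.
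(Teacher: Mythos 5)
Your route is genuinely different from the paper's: the paper (Appendix \ref{ssbfbs}) argues at the level of fusion trees, using the phase bookkeeping of Theorem \ref{mainthm2} and Remark \ref{mainremprec} to show that any candidate $\Lambda_n$ must be a composition of branch exchanges $r_d(t_{k,l}^{\pm1})$ that exchanges the incoming branches of every fusion vertex exactly once with a uniform orientation (otherwise the induced phase depends on internal vertex labels rather than only on the root label $Q$), and then invokes the recursion of Theorem \ref{ssbrecur} to identify all such braids with $\beta_n^{\pm1}$. Your plan instead tries to solve a conjugation equation in $B_n$. The Dynkin-diagram step is fine as group theory (the induced permutation must map consecutive integers to consecutive integers, forcing the identity or the order reversal $w_0$), but two other steps are genuine gaps. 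The first is the passage from the representation-level condition (eigenspaces of $\rho_s(g)$ stable under every $\rho_s(b)$ with matching eigenvalues) to the exact braid-group identity $g\sigma_i^{\pm1}=\sigma_{\pi(i)}^{\pm1}g$: this is the \emph{converse} of the implication used in the proof of Theorem \ref{main1}, and you give no argument for it; it would need faithfulness of the action or a quantification over all theories, neither of which you set up.

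The second gap is fatal to your proposed endgame: the intertwining relation cannot separate $\beta_n$ from its odd powers. Iterating Lemma \ref{main1lem3} gives $\beta_n^{2k+1}\sigma_i^{\pm1}=\sigma_{n-i}^{\pm1}\beta_n^{2k+1}$ for every $k$, and $\rho_s(\beta_n^{2k+1})$ acts on $V^s_Q$ as $e^{i(2k+1)u_Q}$, which is exactly as $s$-independent and internal-label-independent as $e^{iu_Q}$. So the claim that ``any deviation from $\beta_n$ breaks the exact intertwining'' is false, and your argument, carried through, terminates with $g\in\{\beta_n^{2k+1}\}_{k\in\mathbb{Z}}$ together with the central even powers --- whose dismissal is likewise too quick, since $\beta_n^{2}$ acts as $e^{2iu_Q}$ on sector $Q$ and therefore does, generically, decompose the space into the superselection sectors. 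What actually pins down the exponent in the paper is the structural requirement that each pair of incoming branches be exchanged \emph{exactly once} (so the telescoping phase is exactly $e^{\pm iu_Q}$), anchored by the base case $\Lambda_2=\sigma_1^{\pm1}$ and propagated through Theorem \ref{ssbrecur}; your framework has no mechanism to enforce this, so you would need to import that fusion-tree argument anyway.
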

\noindent A proof of Theorem \ref{sidethm1} is outlined in Appendix \ref{ssbfbs}.

\section{Theories of Anyons}
\label{anyonsec}
This section primarily serves to connect the narrative of Section \ref{superbraidsec} with the standard formalism in the literature, by outlining the additional postulates (\textbf{A2}-\textbf{A3}) required to make contact with the usual algebraic theory of anyons. Our presentation thus omits various details, and is not intended as an introduction. For a more detailed treatment, we refer the reader to \cite{kitaev,bonderson,simonox,preskill,wangbook}. In relation to additional insights arising from consideration of the superselection braid, we highlight Section \ref{ssbremk1of2}. 

\subsection{Labels and finiteness}
\label{finduel}

In any standard theory of anyons, it is assumed that there are finitely many distinct topological charges. A theory of anyons thus comes equipped with a finite set of labels $\mathfrak{L}$ whose cardinality is called the \textit{rank} of the theory. 
It is also assumed that the representation space in (\ref{subfixie}) is finite which immediately tells us that $\dim(\mathcal{V}^{\{a,b\}}_{c})$ is finite for any $a,b,c\in\mathfrak{L}$ (from which it easily follows that a fusion space for finitely many quasiparticles is finite-dimensional). We package these two assumptions into the finiteness assumption \textbf{A2} below.

\begin{defn}Given fusion space $V^{ab}_{c}$ for any $a,b,c\in\mathfrak{L}$ , we write $N^{ab}_{c}:=\dim(V^{ab}_{c})$. The quantities $\{N^{ab}_{c}\}_{a,b,c\in\mathfrak{L}}$ are called the \textit{fusion coefficients} of the theory.\end{defn}

\noindent Since $\dim(\mathcal{V}^{\{a,b\}}_{c})=\dim(V^{ab}_{c})=\dim(V^{ba}_{c})$ we have the symmetry 
\begin{equation}N^{ab}_{c}=N^{ba}_{c} \quad \text{for all } a,b,c\in\mathfrak{L}\label{upperfussym}\end{equation}
which is consistent with the commutativity of fusion from Corollary \ref{fucoma}. The quantity $N^{ab}_{c}$ may be thought of as counting `the distinct number of ways charges $a$ and $b$ can fuse to charge $c$'. Note that $\dim(V^{ab})=\sum_{c\in\mathfrak{L}}N^{ab}_{c}$ and that if $N^{ab}_{c}=0$ then $a$ and $b$ cannot fuse to $c$. Consider $V^{abc}_{d}$ for any $a,b,c,d\in\mathfrak{L}$. By associativity of fusion (Corollary \ref{fucoma}), the decompositions of a fusion space must be isomorphic 
\begin{equation}\bigoplus_{e}V^{ab}_{e}\otimes V^{ec}_{d}\cong \bigoplus_{f}V^{af}_{d}\otimes V^{bc}_{f}\label{associso}\end{equation}
and so the fusion coefficients satisfy the associativity relation
\begin{equation}\sum_{e\in\mathfrak{L}}N^{ab}_{e}N^{ec}_{d}=\sum_{f\in\mathfrak{L}}N^{af}_{d}N^{bc}_{f}\end{equation} \\
\fbox{
\parbox{\textwidth}{ 
\noindent\textbf{A2.} A theory of anyons has finitely many distinct topological charges and all fusion coefficients are finite. }}

\vspace{5mm}
Any label set will include the \textit{trivial} label (denoted by $0$) which represents (the topological charge of) the vacuum: the fusion of any charge with the vacuum yields the original charge i.e.\ $N^{0q}_{r}\propto\delta_{qr}$ for any $q,r\in\mathfrak{L}$. Since we always have the freedom to insert the trivial charge anywhere, we must have
\begin{equation}\dim(V^{ab}_{c})=\dim(V^{a0b}_{c})=\dim(V^{0ab}_{c})=\dim(V^{ab0}_{c})\label{poindex}\end{equation}
Associativity and (\ref{poindex}) tell us that $N^{a0}_{a}N^{ab}_{c}=N^{ab}_{c}N^{0b}_{b}=N^{ab}_{c}$ and so $N^{a0}_{a}=N^{b0}_{b}=1$ for all $a,b\in\mathfrak{L}$. Thus, 
\begin{equation}N^{q0}_{r}=N^{0q}_{r}=\delta_{qr} \quad \text{for any }q,r\in\mathfrak{L}\end{equation}
Following the presentation in \cite{kitaev}, write $V^{a0}_{a}=\text{span}_{\mathbb{C}}\{\ket{\alpha_{a}}\}$ and $V^{0b}_{b}=\text{span}_{\mathbb{C}}\{\ket{\beta_{b}}\}$. The relation between the spaces in (\ref{poindex}) is characterised by trivial isomorphisms
\begin{equation}\begin{split}\alpha_{q}:\mathbb{C}&\to V^{q0}_{q} \\ z&\mapsto z\ket{\alpha_{q}} \end{split} \quad\quad\quad\quad \begin{split}\beta_{q}:\mathbb{C}&\to V^{0q}_{q} \\ z&\mapsto z\ket{\beta_{q}}\end{split}\label{alphetiso}\end{equation}
e.g.\ $V^{ab}_{c}\stackrel{\alpha_{a}}{\xrightarrow{\sim}}V^{a0}_{a}\otimes V^{ab}_{c}$ and $V^{ab}_{c}\stackrel{\beta_{b}}{\xrightarrow{\sim}}V^{ab}_{c}\otimes V^{0b}_{b}$. By associativity we see that $\alpha_{a}$ and $\beta_{b}$ are related (see Remark \ref{cocoremk} and Appendix \ref{penthexapx}). Braiding with the vacuum must be trivial i.e.\ using the same notation as in (\ref{subfixie}),
\begin{equation}\rho_{\{q,0\}}(\sigma_{1}^{\pm1})=1\quad\text{ for all $q\in\mathfrak{L}$}\label{trivbr1}\end{equation} 

\vspace{5mm}
\subsection{Braided $6j$ fusion systems}
We write orthonormal bases 
\begin{equation}V^{ab}_{c}=\text{span}_{\mathbb{C}}\{\ket{ab\to c;\mu}\}_{\mu} \ \ , \ \ V^{ba}_{c}=\text{span}_{\mathbb{C}}\{\ket{ba\to c;\mu}\}_{\mu}\label{fusbrba}\end{equation}
of fusion states given any $a,b,c\in\mathfrak{L}$ where $1\leq\mu\leq N^{ab}_{c}$ for $N^{ab}_{c}\neq0$.

\begin{figure}[H]\centering \includegraphics[width=0.1\textwidth]{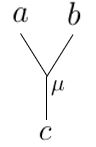} \vspace{-2mm}\caption{A graphical depiction of the fusion state $\ket{ab\to c;\mu}$. We will implicitly assume that our fusion vertices are normalised.} \label{fusstate1}\end{figure}

\noindent The \textit{dual space} of a fusion space has natural interpretation as a `\textit{splitting space}' i.e.\
\vspace{-2mm}
\begin{equation}\begin{gathered}\centering \includegraphics[width=0.7\textwidth]{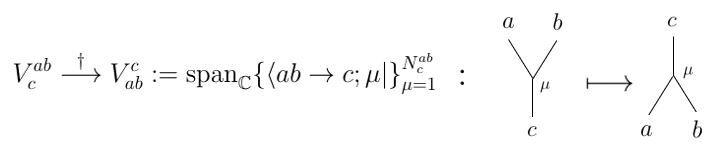}\label{dualconj}\end{gathered}\end{equation}
for any $a,b,c\in\mathfrak{L}$. Fusion coefficients may thus also be thought of `splitting' coefficients. Given an orthonormal basis, we can use the graphical calculus to express the inner product and completeness relation on $V^{ab}$ :
\begin{equation}\begin{gathered}\centering \includegraphics[width=0.65\textwidth]{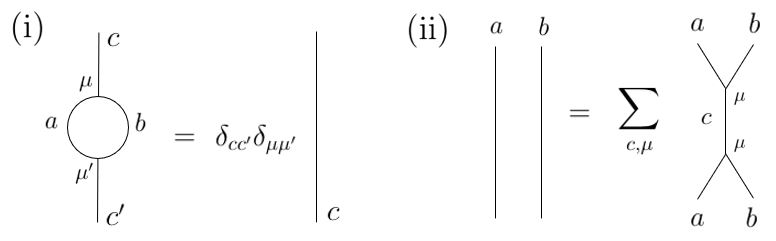}\label{innercomp}\end{gathered}\end{equation}

\vspace{4mm}
\noindent The \textit{$R$-matrices} of a theory are given by a matrix representation of the unitary operators from (\ref{subfixie}), typically in an eigenbasis: given any $a,b\in\mathfrak{L}$ we have the eigenspace decomposition $\mathcal{V}^{\{a,b\}}=\bigoplus_{Q\in\mathfrak{L}}\mathcal{V}^{\{a,b\}}_{Q}$ under $\rho_{\{a,b\}}$ where 
\begin{equation}\rho_{\{a,b\}}(\sigma_{1}^{\pm1})\ket{\psi}=e^{\pm iu_{Q}}\ket{\psi}\end{equation}
for $\ket{\psi}\in\mathcal{V}^{\{a,b\}}_{Q}$ with $Q$ such that $N^{ab}_{Q}\neq0$. We write $R$-matrices
\begin{equation}
R^{ab}_{Q} : V^{ab}_{Q} \stackrel{\sim}{\to} V^{ba}_{Q} \quad , \quad R^{ba}_{Q} : V^{ba}_{Q} \stackrel{\sim}{\to} V^{ab}_{Q}
\end{equation}
where we let
\begin{subequations}\begin{align}R^{ab}_{Q}=R^{ba}_{Q}&=\bigoplus_{j=1}^{N^{ab}_{Q}}[e^{iu_{Q}}]  \label{rdiag1}\\
 R^{ab}:=\bigoplus_{Q\in\mathfrak{L} \ : \ N^{ab}_{Q}\neq0}\left[R^{ab}_{Q}\right] \quad &, \quad R^{ba}:=\bigoplus_{Q\in\mathfrak{L} \ : \ N^{ba}_{Q}\neq0}\left[R^{ba}_{Q}\right]\label{rdiag2}\end{align}\end{subequations}
It is clear that $R^{ab}=R^{ba}$ here.\footnote{R-matrices need not always be diagonal and symmetric in their upper indices. However, our construction has implicitly `fixed a gauge' where this is the case; see Remark \ref{gaugefreedom} and (\ref{rgraphginv}).} Following (\ref{trivbr1}), we have
\begin{equation}R^{q0}_{q}=R^{0q}_{q}=1\label{trivbracu}\end{equation}
for all $q\in\mathfrak{L}$. We let $(R^{-1})^{ab}$ denote the anticlockwise exchange i.e.\
\begin{equation}(R^{ab})^{-1}=(R^{-1})^{ba}\end{equation}

\vspace{6.5mm}
\noindent For an $n$-quasiparticle fusion space $V^{q_{1}\ldots q_{n}}$ (where $q_{1},\ldots,q_{n}\in\mathfrak{L}$) let $\mathscr{D}_{1}$ and $\mathscr{D}_{2}$ be decompositions of this space corresponding to distinct fusion trees. By associativity, we have an isomorphism
\begin{equation}\mathcal{F}:\mathscr{D}_{1}\to\mathscr{D}_{2}\label{calico}\end{equation}
Fixing a basis of fusion states, we see that $\mathcal{F}\in\text{Aut}(V^{q_{1}\ldots q_{n}})$ is a change of basis matrix. Observe that $\mathcal{F}$ is given by any sequence of so-called \textit{F-moves} that transform between decompositions of the form
\begin{figure}[H]\centering \includegraphics[width=0.4\textwidth]{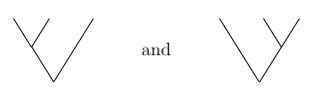} \caption*{} \label{fdecomps}\end{figure}\vspace{-14mm}
\noindent Such transformations are realised by the \textit{F-matrices} of a theory. These are matrices $F^{abc}_{d}\in\text{Aut}(V^{abc}_{d})$ for any $a,b,c,d\in\mathfrak{L}$ where
\begin{equation}F^{abc}_{d}:\bigoplus_{e\in\mathfrak{L}}V^{ab}_{e}\otimes V^{ec}_{d} \xrightarrow{\sim} \bigoplus_{f\in\mathfrak{L}}V^{af}_{d}\otimes V^{bc}_{f}\end{equation}
This is a unitary matrix representing the isomorphism in (\ref{associso}). That is, $F^{abc}_{d}$ transforms between the bases 
\begin{equation}\Big\{\ket{ab\to e;\mu_{1}^{e}}\ket{ec\to d;\mu_{2}^{e}}\Big\}_{e,\mu_{1}^{e},\mu_{2}^{e}} \ \ \text{and} \ \ \ \left\{\ket{af\to d;\nu_{2}^{f}}\ket{bc\to f;\nu_{1}^{f}}\right\}_{f,\nu_{1}^{f},\nu_{2}^{f}}\end{equation}
This change of basis is graphically expressed as
\begin{equation}\centering \includegraphics[width=0.6\textwidth]{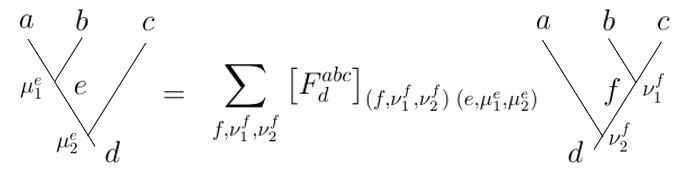} \label{fmatgraph}\end{equation}
\noindent Distinct fusion trees specify distinct bases on the fusion space and are therefore also called \textit{fusion bases}. Since $R^{ab}$ is defined for an eigenbasis of $V^{ab}$, we must fix a fusion basis such that the factors $\{V^{ab}_{Q}\}_{Q\in\mathfrak{L}}$ appear in the decomposition of the fusion space: for any such fusion basis, we say that `$a$ and $b$ are in a \textit{direct fusion channel}'. That is, $R$-matrices can only act on two charges in a direct fusion channel.
\begin{figure}[H]\centering \includegraphics[width=0.3\textwidth]{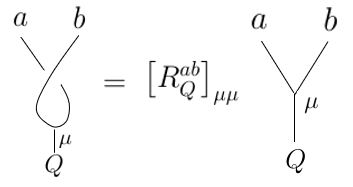} \caption{Charges $a$ and $b$ are in a direct fusion channel with outcome $Q$. The above is a graphical expression of the equation \mbox{$R^{ab}\ket{ab\to Q;\mu}=\left[R^{ab}_{Q}\right]_{\mu\mu}\ket{ab\to Q;\mu}\in\text{span}_{\mathbb{C}}\{\ket{ba\to Q;\mu}\}\subseteq V^{ba}_{Q}$} where the matrix $R^{ab}$ is defined as in (\ref{rdiag1}) and (\ref{rdiag2}).} \label{rchannel}\end{figure}
We may obtain a (possibly non-diagonal) representation of the exchange operator for two adjacent quasiparticles $a$ and $b$ in a system by considering its action with respect to a fusion basis in which $a$ and $b$ are in an \textit{indirect} fusion channel.\footnote{Non-diagonal representations arise since fixing an indirect fusion channel of two charges means that we are not in an eigenbasis of the exchange operator for these charges. Since we are not in an eigenbasis, we cannot apply the R-matrix directly.} Such a representation can be determined by transforming into a fusion basis where the charges are in a \textit{direct} fusion channel, applying the R-matrix and then transforming back to the original fusion basis. Below is the simplest example of such a procedure. \vspace{-7mm}
\begin{figure}[H]\centering \includegraphics[width=0.85\textwidth]{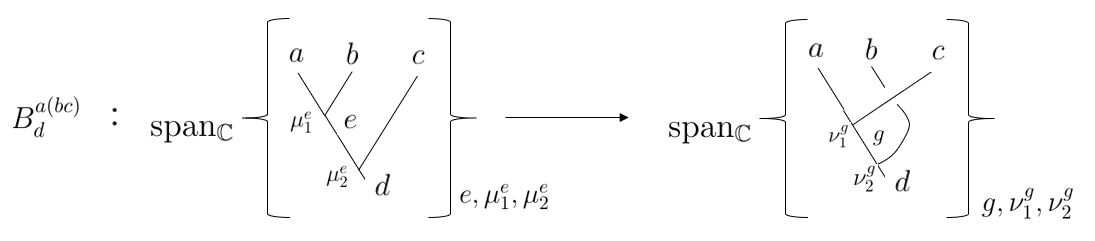} \caption*{} \label{indexchfn}\end{figure}\vspace{-15mm}
\noindent where
\begin{equation}\begin{matrix}
\bigoplus_{e}V^{ab}_{e}\otimes V^{ec}_{d} & \stackrel{\mbox{\normalsize $F^{abc}_{d}$}}{\Larrow{2cm}} & \bigoplus_{f}V^{af}_{d}\otimes V^{bc}_{f} \\
\hspace{-2mm}{B^{a(bc)}_{d}}\Bigg\downarrow &  & \hspace{15mm}\Bigg\downarrow{R^{bc}} \\
\bigoplus_{g}V^{ac}_{g}\otimes V^{gb}_{d} & \stackrel{\mbox{\normalsize $F^{acb}_{d}$}}{\Larrow{2cm}} & \bigoplus_{f}V^{af}_{d}\otimes V^{cb}_{f}
\end{matrix}\end{equation}
That is,
\begin{equation}B^{a(bc)}_{d}=\left(F^{acb}_{d}\right)^{\dagger}R^{bc}F^{abc}_{d}\end{equation}
where 
\begin{equation}R^{bc}=\bigoplus_{f\in\mathfrak{L} \ : \ N^{af}_{d}N^{bc}_{f}\neq0}R^{bc}_{f}\end{equation}

A charge $q\in\mathfrak{L}$ such that $\sum_{u\in\mathfrak{L}}N^{qx}_{u}=1$ for all $x\in\mathfrak{L}$ corresponds to an \textit{abelian} anyon (since its exchange statistics with any other charge will always be given by a phase). Otherwise, $q$ corresponds to a \textit{non-Abelian} anyon (since there exists a charge with which its exchange statistics are given by a higher-dimensional unitary transformation). An abelian theory of anyons is one in which there are no non-abelian anyons.
Observe that given a fixed fusion basis and an explicit choice of orthonormal basis for a fusion space of $n$ \textit{identical charges}, we obtain a unitary matrix representation of the braid group $B_{n}$. 
\begin{remk}\textbf{(Gauge freedom)} \\
There is generally some redundancy amongst the $F$ and $R$ symbols\footnote{$F$ and $R$ symbols refer to the entries of $F$ and $R$ matrices. $F$-symbols are also called \textit{$6j$ symbols}.} of a theory: this arises from the $U(N^{ab}_{c})$ freedom when fixing an orthonormal basis on the spaces $\{V^{ab}_{c}\}_{a,b,c\in\mathfrak{L}}$ . A change of basis\footnote{This is not to be confused with a change of \textit{fusion} basis.} is called a \textit{gauge transformation}. We can only attach physical significance to gauge-invariant quantities. \\
Although $R$-symbols are generally gauge-variant, gauge transformations are defined to respect the triviality of braiding with the vacuum (i.e.\ (\ref{trivbracu}) is gauge-invariant by construction). A \textit{monodromy} is a composition 
\begin{equation}R^{ba}\circ R^{ab}=:M^{ab}\label{monod}\end{equation} 
It can be shown that monodromies are gauge-invariant, whence it follows that the action of any pure braid is gauge-invariant. We implicitly fixed a gauge where \mbox{$R^{ab}=R^{ba}$} for all $a,b\in\mathfrak{L}$ in our construction: we will call this the \textit{symmetric gauge}. R-matrices are not necessarily diagonal and symmetric in their upper indices outside of this gauge. Nonetheless, monodromy matrices are always diagonal and symmetric in their upper indices.
\label{gaugefreedom}\end{remk}

\begin{remk}\textbf{(Coherence conditions)}\\
Isomorphisms between fusion spaces must be `compatible' with one another. That is, distinct sequences of isomorphisms (F-moves, R-moves and isomorphisms $\alpha$ and $\beta$ from (\ref{alphetiso})) between two given spaces should correspond to the same isomorphism. Such compatibility requirements are called \textit{coherence conditions}. Remarkably, all coherence conditions are fulfilled if the triangle, pentagon and hexagon equations are satisfied. Some additional details are provided in Appendix \ref{penthexapx}.

\begin{enumerate}[label=(\roman*)]

\item All isomorphisms $\alpha$ and $\beta$ from (\ref{alphetiso}) must be compatible with associativity (F-moves). This coherence condition is fulfilled if the \textit{triangle equations} (\ref{treqns}) are satisfied.

\item Recall the isomorphism $\mathcal{F}$ from (\ref{calico}). It may be possible that multiple distinct sequences of F-moves realise $\mathcal{F}$. Given some basis, the matrix re-presentation of $\mathcal{F}$ must be the same for all such sequences. This coherence condition is fulfilled if all $F$-symbols satisfy the \textit{pentagon equation} (\ref{pent1}).

\item Consider $n$-quasiparticle space $V^{q_{1}\ldots q_{n}}$ where $q_{1},\ldots,q_{n}\in\mathfrak{L}$ and $n\geq3$. Let $s$ and $s'$ be any two distinct permutations of the string $q_{1}\ldots q_{n}$. Let $\mathcal{D}$ and $\mathcal{D}'$ be any decomposition of $V^{s}$ and $V^{s'}$ respectively. It may be possible that multiple distinct sequences of F and R moves realise the isomorphism $\mathcal{B}:\mathcal{D}\to\mathcal{D}'$ corresponding to the action of some $n$-braid. Given some basis, the matrix representation of $\mathcal{B}$ must be the same for all such sequences. This coherence condition is fulfilled if all $F$ and $R$ symbols satisfy the \textit{hexagon equations} (\ref{hexeqs1}).

\end{enumerate}\label{cocoremk}\end{remk} 
\vspace{2mm}

\noindent For each charge in a theory of anyons, there exists a dual charge wihich with it may fuse to the vacuum. More precisely, we incorporate Kitaev's duality axiom from \cite{kitaev}:\\

\fbox{
\parbox{\textwidth}{
\noindent\textbf{A3.} For each $q\in\mathfrak{L}$, there exists some $\bar{q}\in\mathfrak{L}$ and $\ket{\xi}\in V^{q\bar{q}}_{0}$, $\ket{\eta}\in V^{\bar{q}q}_{0}$ such that
\begin{equation}\bra{\alpha_{q}\otimes\eta}F^{q\bar{q}q}_{q}\ket{\xi\otimes\beta_{q}}\neq0\label{brazlioti}\end{equation}
where $\alpha_{q},\beta_{q}$ are as defined in (\ref{alphetiso}). 
}}

\vspace{2mm}
\begin{prop}\cite[Lemma E.3.]{kitaev} For $q\in\mathfrak{L}$, there exists unique $\bar{q}\in\mathfrak{L}$ such that 
\begin{equation} N^{pq}_{0}=N^{qp}_{0}=\delta_{p\bar{q}}\end{equation} \end{prop}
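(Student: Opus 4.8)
The plan is to reduce the proposition to a single rotational symmetry of the fusion coefficients, namely $N^{ab}_{c}=N^{c\bar{b}}_{a}$, and then specialise to $c=0$. The crucial observation is that \textbf{A3} supplies a dual $\bar{q}$ together with $\ket{\xi}\in V^{q\bar{q}}_{0}$ and $\ket{\eta}\in V^{\bar{q}q}_{0}$ \emph{for every} charge $q\in\mathfrak{L}$, so duals (and caps for both orderings $q\bar q$ and $\bar q q$) are available for all labels at once. The real content of the nondegeneracy condition (\ref{brazlioti}) is that the pair $(\xi,\eta)$ exhibits $\bar{q}$ as a genuine two-sided dual of $q$ in the sense of rigidity; granting this, the rotational symmetry follows from bending a leg of a fusion vertex, and the proposition drops out.

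First I would unpack (\ref{brazlioti}) graphically. Via the identification of the dual of a fusion space with a splitting space (\ref{dualconj}), $\ket{\xi}$ and its adjoint furnish a cap $q\bar q\to 0$ and a cup $0\to q\bar q$, and $\ket{\eta}$ does the same for $\bar q q$. The matrix element $\bra{\alpha_{q}\otimes\eta}F^{q\bar{q}q}_{q}\ket{\xi\otimes\beta_{q}}$ is precisely the scalar produced by composing a cup and a cap along the middle strand of $q\bar{q}q$ (a ``snake''), the associator $F^{q\bar q q}_{q}$ relabelling the $e=0$ channel into the $f=0$ channel. Since a single charge is simple, this snake is an endomorphism of the $q$-strand and hence a scalar multiple of the identity; condition (\ref{brazlioti}) says the scalar is nonzero, so after rescaling $\xi$ (or $\eta$) the associated zig-zag identity holds with unit coefficient. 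Upgrading this single nonvanishing zig-zag to \emph{both} zig-zag identities, so that $\bar q$ is literally the dual $q^{*}$, is where the finiteness of \textbf{A2} and the triangle/pentagon coherence of Remark \ref{cocoremk} are needed; this is the step I expect to be the main obstacle.

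With rigidity in hand, bending the $b$-leg of a fusion vertex is an isomorphism $V^{ab}_{c}\xrightarrow{\sim}V^{c\bar{b}}_{a}$ of fusion spaces, so that $N^{ab}_{c}=N^{c\bar{b}}_{a}$ for all $a,b,c\in\mathfrak{L}$. Setting $c=0$ and using the vacuum fusion rule $N^{0r}_{s}=\delta_{rs}$ gives $N^{ab}_{0}=N^{0\bar{b}}_{a}=\delta_{a\bar{b}}$; in particular $N^{pq}_{0}=\delta_{p\bar{q}}$, and the symmetry (\ref{upperfussym}) yields $N^{qp}_{0}=N^{pq}_{0}=\delta_{p\bar{q}}$. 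This delivers both the stated value and the uniqueness at once: the function $p\mapsto N^{pq}_{0}$ is the indicator of the single label $\bar{q}$, so $\bar{q}$ is the unique charge fusing with $q$ to the vacuum, and it does so in exactly one way ($N^{\bar q q}_{0}=1$). As a consistency check, the same formula with $a=q$, $b=p$ gives $N^{qp}_{0}=\delta_{q\bar{p}}$, forcing $\delta_{p\bar q}=\delta_{q\bar p}$ and hence $\overline{\bar q}=q$; this involutivity is reassuring but is not needed for the statement. Everything downstream of rigidity is thus routine bookkeeping with the associativity relation and the vacuum normalisation.
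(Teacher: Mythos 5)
First, note that the paper does not prove this proposition itself: it is imported from \cite[Lemma E.3]{kitaev}, so your attempt has to be measured against that argument. Your general strategy --- read (\ref{brazlioti}) as a nondegenerate ``snake'' pairing, deduce a leg-bending relation among fusion multiplicities, and specialise to the vacuum channel --- is the right circle of ideas, and your endgame (reading $p\mapsto N^{pq}_{0}$ as the indicator function of $\bar q$, uniqueness, the involutivity check) is routine and correct \emph{given} the leg-bending identity. The genuine gap sits exactly where you flag ``the main obstacle'': you never establish that the bending maps are isomorphisms, i.e.\ that $N^{ab}_{c}=N^{c\bar b}_{a}$ holds with \emph{equality}, and you are not entitled to assume it here. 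In the paper the cyclic symmetries (\ref{cycsyms}) and the invertibility of the maps $K^{ab}_{c},L^{ab}_{c}$ of (\ref{legraises}) are derived \emph{after} this proposition and use it as input (the derivation of (\ref{cycsyms}) invokes $N^{a\bar a}_{0}=N^{\bar c c}_{0}=1$), so taking full two-sided rigidity as a premise inverts the logical order and risks circularity. Deferring that step to ``\textbf{A2} and triangle/pentagon coherence'' without carrying it out leaves the entire substance of Lemma E.3 unproven.

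The fix is that you do not need two-sided rigidity or the isomorphism version of leg-bending at all; a one-sided inequality suffices. The single nonzero scalar in (\ref{brazlioti}) shows that the linear map sending a fusion vector in $V^{pq}_{0}$ to its composite with the splitting $\xi^{\dagger}$ (followed by the relevant $F$-move) is injective into $\Hom(p,\bar q)$, because composing further with $\eta$ returns the original vector up to that nonzero scalar. Since distinct charges are simple, $\dim\Hom(p,\bar q)=\delta_{p\bar q}$, whence $N^{pq}_{0}\le\delta_{p\bar q}$. On the other hand $\ket{\eta}$ and $\ket{\xi}$ must be nonzero vectors for (\ref{brazlioti}) to hold, so $N^{\bar q q}_{0}\ge1$ and $N^{q\bar q}_{0}\ge1$. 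Together these give $N^{pq}_{0}=\delta_{p\bar q}$, the mirror argument (or (\ref{upperfussym})) gives $N^{qp}_{0}=\delta_{p\bar q}$, and uniqueness is immediate since the displayed equation determines $\bar q$ as the unique label where the indicator is $1$. This is essentially Kitaev's proof; I recommend replacing the rigidity detour with this injectivity-plus-lower-bound argument, after which your final paragraph goes through unchanged.
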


\noindent This proposition follows from \textbf{A3} and says that any charge has a \textit{unique} dual charge with which it annihilates in a \textit{unique} way. Together with associativity, \textbf{A3} tells us that for any $a,b,c\in\mathfrak{L}$ we have $N^{ab}_{\bar{c}}N^{\bar{c}c}_{0}=N^{a\bar{a}}_{0}N^{bc}_{\bar{a}}$ and so $N^{ab}_{\bar{c}}=N^{bc}_{\bar{a}}$. We thus have 
\begin{equation}N^{ab}_{c}=N^{b\bar{c}}_{\bar{a}}=N^{\bar{c}a}_{\bar{b}}\label{cycsyms}\end{equation}

\vspace{2mm}
\begin{cor}
Any topological charge $q\in\mathfrak{L}$ may realise a superselection sector.
\end{cor}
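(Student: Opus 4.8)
The plan is to translate the phrase ``realise a superselection sector'' into the language of fusion coefficients and then exhibit, for each $q\in\mathfrak{L}$, an explicit quasiparticle system whose total charge can be $q$. Recall from Theorem \ref{main1} (as reinterpreted in this section) that the superselection sectors of an $n$-quasiparticle system $V^{q_{1}\ldots q_{n}}$ are precisely the eigenspaces $V^{q_{1}\ldots q_{n}}_{Q}$ of $\rho_{q_{1}\ldots q_{n}}(\beta_{n})$, indexed by the total charges $Q$, with $\dim V^{q_{1}\ldots q_{n}}_{Q}=N^{q_{1}\ldots q_{n}}_{Q}$. Hence a charge $q$ realises a nonzero superselection sector as soon as we can produce labels $q_{1},\ldots,q_{n}$ with $N^{q_{1}\ldots q_{n}}_{q}\neq 0$; equivalently, $q$ must arise as a fusion outcome of some collection of quasiparticles.

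First I would invoke the preceding proposition (which follows from \textbf{A3}) to attach to $q$ its unique dual $\bar q$, for which $N^{q\bar q}_{0}=N^{\bar q q}_{0}=1$. I then take the three-quasiparticle system with charges $q,\bar q,q$ and compute its fusion coefficient to the total charge $q$ using associativity, i.e.\ the isomorphism (\ref{associso}) underlying Corollary \ref{fucoma}:
\begin{equation*}
N^{q\bar q q}_{q}=\sum_{e\in\mathfrak{L}}N^{q\bar q}_{e}\,N^{eq}_{q}\;\geq\; N^{q\bar q}_{0}\,N^{0q}_{q}=1\cdot1=1,
\end{equation*}
where I have kept only the $e=0$ summand and used $N^{0q}_{q}=1$. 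Thus $N^{q\bar q q}_{q}\geq1>0$, so the eigenspace of $\rho_{q\bar q q}(\beta_{3})$ labelled by the total charge $q$ is nonzero; by the reformulation above, this is a genuine superselection sector realising $q$.

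The part requiring care is the choice of system rather than any hard computation. A two-quasiparticle realisation is \emph{not} available for every $q$: using (\ref{cycsyms}) one has $N^{qb}_{q}=N^{\bar q q}_{\bar b}$, and in the abelian case $\sum_{u}N^{\bar q q}_{u}=1$ forces $\bar b=0$, so $q$ would only be recoverable as the outcome of the trivial pair $\{q,0\}$ (for instance, in a $\mathbb{Z}_{2}$ theory no charge is the product of two nontrivial charges). The charge--anticharge--charge system $q\bar q q$ sidesteps this, since the vacuum channel in $q\bar q$ guarantees $q$ as an outcome regardless of whether $q$ is abelian. Finally I would note the degenerate cases are automatically covered: if $q=\bar q$ the system is $\{q,q,q\}$, and if $q=0$ the bound reads $N^{000}_{0}=1$, consistent with the vacuum being a superselection sector of any system.
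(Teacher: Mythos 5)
Your proof is correct, but it takes a genuinely different route from the paper's. The paper argues by contradiction: assuming $q$ is never a fusion outcome of any pair, it picks an arbitrary $b$, uses the fact that $\bar q$ and $b$ must have at least one fusion outcome $c$, and then applies the cyclic symmetry (\ref{cycsyms}) in the form $N^{\bar q b}_{c}=N^{b\bar c}_{q}$ to exhibit a pair $(b,\bar c)$ fusing to $q$ — contradiction. You instead give a direct, constructive argument: you exhibit the explicit three-quasiparticle system $q\bar q q$ and lower-bound its fusion coefficient via associativity, $N^{q\bar q q}_{q}\geq N^{q\bar q}_{0}N^{0q}_{q}=1$, using the duality proposition for $N^{q\bar q}_{0}=1$ and the unit axiom for $N^{0q}_{q}=1$. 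Both arguments ultimately rest on \textbf{A3} through the preceding proposition (the paper's via the symmetry (\ref{cycsyms}) it derives from duality, yours via the nontriviality of the vacuum channel $V^{q\bar q}_{0}$), and both correctly reduce ``realises a superselection sector'' to ``appears as an admissible total charge of some system.'' What the paper's version buys is brevity and the slightly sharper statement that $q$ already occurs as the outcome of a single \emph{pair}; what yours buys is an explicit witness system, and your side remark correctly diagnoses why a pair of the form $\{q,b\}$ with $b$ nontrivial need not exist (e.g.\ for abelian $q$), which is a point the paper glosses over since its pair $(b,\bar c)$ may well involve the vacuum in low-rank theories. No gaps.
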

\renewcommand\qedsymbol{\mbox{\larger\Lightning}}
\begin{proof}
We know that it is possible for a fusion outcome to realise a superselection sector. Suppose there exists a charge $q\in\mathfrak{L}$ such that it is not a fusion outcome for any pair of charges. For any charge $b$ there exists a charge $c$ such that $N^{\bar{q}b}_{c}\neq0$. By (\ref{cycsyms}) we have $N^{\bar{q}b}_{c}=N^{b\bar{c}}_{q}$ which gives a contradiction.
\end{proof}\renewcommand\qedsymbol{$\square$}

\noindent We see that the duality axiom permits any charge to realise a superselection sector. For this reason, labels are often called topological charges and superselection sectors interchangeably in the literature.

\newpage
\noindent For $a,b,c\in\mathfrak{L}$ we define linear maps $K^{ab}_{c}$ and $L^{ab}_{c}$,
\begin{equation}\begin{gathered}\centering \includegraphics[width=0.85\textwidth]{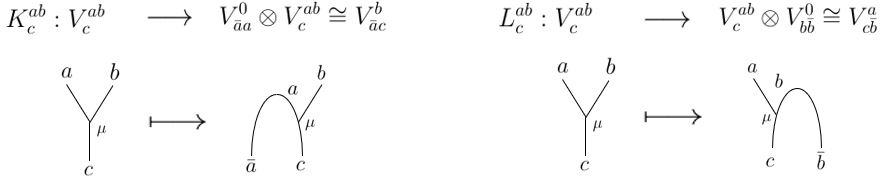}\label{legraises}\end{gathered}\end{equation}
These are clearly invertible (whence $N^{ab}_{c}=N^{\bar{a}c}_{b}=N^{c\bar{b}}_{a}$). Observe that\footnote{The isomorphisms $K^{b\bar{c}}_{\bar{a}}\circ\left(L^{b\bar{c}}_{\bar{a}}\right)^{-1}\circ K^{ab}_{c}$ and $L^{\bar{c}a}_{b}\circ\left(K^{\bar{c}a}_{b}\right)^{-1}\circ L^{ab}_{c}$ correspond to the CPT symmetry of $V^{ab}_{c}$. Indeed, in \cite[Theorem E.6.]{kitaev} it is shown that these two maps are coincide (and are isometries), which is is equivalent to the statement that a unitary fusion category admits a pivotal structure.}
\begin{equation}\begin{gathered}\centering \includegraphics[width=0.5\textwidth]{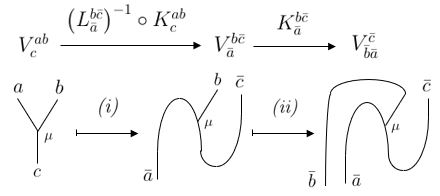}\label{legraisesymms}\end{gathered}\end{equation}
where (i) corresponds to symmetries of the form in (\ref{cycsyms}), and the composition of (i) and (ii) tells us that $N^{ab}_{c}=N^{\bar{b}\bar{a}}_{\bar{c}}$. Together with (\ref{upperfussym}), these identities generate all symmetries of the fusion coefficients. Summarising these, for all $a,b,c\in\mathfrak{L}$ we have 
\begin{subequations}\begin{align}  N^{ab}_{c}&=N^{ba}_{c} \label{syma} \\
 N^{ab}_{c}&=N^{b\bar{c}}_{\bar{a}}=N^{\bar{c}a}_{\bar{b}} \label{symb}\\ 
N^{ab}_{c}&=N^{\bar{b}\bar{a}}_{\bar{c}}\label{symc} \end{align}\end{subequations}  

\vspace{1.5mm}
\begin{defn}Altogether, a finite label set $\mathfrak{L}$ with fusion coefficients, $F$-symbols and $R$-symbols as described above satisfying the triangle, pentagon and hexagon equations is called a \textit{braided $6j$ fusion system}. \end{defn}

\vspace{2mm}
\subsection{Eigenvalues of the superselection braid}
\label{ssbremk1of2}
In Remark \ref{mainremprec}, we examined the action of the superselection braid on any decomposition of the space $V^{s}_{Q}$ (where $s$ is any permutation of some $n$ fixed labels). We know that this action results in the same statistical phase independently of the given permutation or decompositon. Our observations from Remark \ref{mainremprec} look more interesting when recast in terms of R-matrices. Namely, for any choice of labels $1,2,3,4\in\mathfrak{L}$ such that $V^{123}_{4}$ is nonzero, the elements of the table below are equal for any choice of $e,f,g$ such that $N^{12}_{e}N^{e3}_{4}, N^{23}_{f}N^{f1}_{4}$ and $N^{13}_{g}N^{g2}_{4}$ are nonzero and where there exists a choice of gauge such that the relevant $R$-matrices may be written as in (\ref{rdiag1})-(\ref{rdiag2}). 

\renewcommand{\arraystretch}{1.5}
\begin{center}\begin{tabular}{c|c|c|c}
$R^{21}_{e}\otimes R^{e3}_{4}$ & $R^{12}_{e}\otimes R^{e3}_{4}$ & $R^{3e}_{4}\otimes R^{12}_{e}$ & $R^{3e}_{4}\otimes R^{21}_{e}$ \\
\hline
$R^{32}_{f}\otimes R^{f1}_{4}$ & $R^{23}_{f}\otimes R^{f1}_{4}$ & $R^{1f}_{4}\otimes R^{23}_{f}$ & $R^{1f}_{4}\otimes R^{32}_{f}$ \\
\hline
$R^{31}_{g}\otimes R^{g2}_{4}$ & $R^{13}_{g}\otimes R^{g2}_{4}$ & $R^{2g}_{4}\otimes R^{13}_{g}$ & $R^{2g}_{4}\otimes R^{31}_{4}$
\end{tabular}\end{center}\vspace{1mm}
Let $r^{ab}_{c}$ denote the phase $R^{ab}_{c}=r^{ab}_{c}I_{k}$ (where $I_{k}$ is the $k\times k$ identity matrix and $k=N^{ab}_{c}$). Noting that $r^{ab}_{c}=r^{ba}_{c}$ in the fixed gauge, the above equivalences may be expressed as
\begin{equation}r^{12}_{e}r^{e3}_{4}=r^{23}_{f}r^{f1}_{4}=r^{13}_{g}r^{g2}_{4}\label{symssphase}\end{equation}
for any choice of $e,f,g$ as specified above. The identity (\ref{symssphase}) characterises the fact that the statistical evolution under the action of the superselection braid is independent of the fusion basis and order of quasiparticles. However, this identity also has the weakness of being gauge-dependent. We easily obtain a gauge-invariant form of (\ref{symssphase}): writing $M^{ab}_{c}=m^{ab}_{c}I_{k}$ (where $m^{ab}_{c}=m^{ba}_{c}$ is the monodromy phase), 
\begin{equation}m^{12}_{e}m^{e3}_{4}=m^{23}_{f}m^{f1}_{4}=m^{13}_{g}m^{g2}_{4}\label{puressphase}\end{equation}
for any choice of $e,f,g$ as specified above. This gives us the following ansatz: \textit{for every $q\in\mathfrak{L}$ we may assign a quantity $\vartheta_{q}\in U(1)$ such that}
\begin{equation}m^{ab}_{c}=\frac{\vartheta_{c}}{\vartheta_{a}\vartheta_{b}}\quad\text{for all $a,b,c$ such that $N^{ab}_{c}\neq0$}\label{thansatz}\end{equation}
Indeed, this ansatz turns out to be correct (see Section \ref{pivotality}): the quantity $\vartheta_{q}$ is called the topological spin of $q$ and is the phase evolution under a clockwise $2\pi$-rotation of charge $q$. For a system of charges $q_{1},\ldots,q_{n}$ with overall charge $Q$, the gauge-invariant statistical evolution under the action of the pure braid $\beta_{n}^{2}$ is thus given by (\ref{puressphasetwists}) (whose form is consistent with Remark \ref{ssphaseinpdep}).
\begin{equation}\frac{\vartheta_{Q}}{\vartheta_{q_{1}}\cdot\ldots\cdot\vartheta_{q_{n}}}\label{puressphasetwists}\end{equation}

\vspace{2mm}
\subsection{Fusion algebras and their categorification}
\begin{defn}
Let $\mathbb{Z}B$ be a free $\mathbb{Z}$-module with finite basis $B=\{b_{i}\}_{i\in I}$. We equip $\mathbb{Z}B$ with a bilinear product
\begin{equation*}\begin{split}\cdot : \mathbb{Z}B\times\mathbb{Z}B&\to\mathbb{Z}B \\
(b_{i},b_{j})&\mapsto\sum_{k\in I}c^{ij}_{k}b_{k} \ \ , \ c^{ij}_{k}\in\mathbb{N}_{0}
\end{split}\end{equation*}
such that the following hold for all $i,j,k\in I$ :\begin{enumerate}[label=(\roman*)]
\item There exists an element $\mathbbm{1}:=b_{0}\in B$ such that $\mathbbm{1}\cdot b_{i}=b_{i}\cdot\mathbbm{1}=b_{i}$ 
\item $(b_{i}\cdot b_{j})\cdot b_{k}=b_{i}\cdot(b_{j}\cdot b_{k})$ 
\item $\sum_{l\in I}c^{ij}_{l}>0$
\item There exists an involution $i\mapsto i^{*}$ of $I$ such that $c^{ij}_{0}=c^{ji}_{0}=\delta_{i*j}$ 
\end{enumerate}
The unital, associative $\mathbb{Z}$-algebra $\mathcal{A}=(\mathbb{Z}B,\cdot \ )$ satisfying the above is called a \textit{fusion algebra}. If we also have (v) then $\mathcal{A}$ is called a \textit{commutative} fusion algebra.
\begin{enumerate}[resume,label=(\roman*)]
\item $b_{i}\cdot b_{j}=b_{j}\cdot b_{i}$ 
\end{enumerate}
\end{defn}

\vspace{1mm}
\noindent The quantities $\{c^{ij}_{k}\}_{i,j,k\in I}$ act as the structure constants of a fusion algebra. We can also express properties (i),(ii) and (v) in terms of these constants: \mbox{(i) $c^{i0}_{j}=c^{0i}_{j}=\delta_{ij}$,} (ii) $\sum_{p}c^{ij}_{p}c^{pk}_{u}=\sum_{r}c^{ir}_{u}c^{jk}_{r}$ and (v) $c^{ij}_{k}=c^{ji}_{k}$. The structure constants clearly have symmetries of the same form as in (\ref{symb}) (and (\ref{syma}) for a commutative algebra). Observing that the $^{*}$-involution may be extended to an anti-automorphism of $\mathcal{A}$, it easily follows that the structure constants also have symmetry of the form (\ref{symc}). \\ 

\noindent A commutative fusion algebra $\mathcal{A}$ admits a categorification if there exists a braided $6j$ fusion system with label set $\mathfrak{L}$ and a bijection $\phi:B\to\mathfrak{L}$ such that $c^{ij}_{k}=N^{\phi(i)\phi(j)}_{\phi(k)}$ for all $i,j,k\in B$. It is possible for a given $\mathcal{A}$ to admit more than one categorification, although only finitely many\footnote{This result is known as \textit{Ocneanu rigidty}.} (up to gauge equivalence and relabelling). The categorification of $\mathcal{A}$ yields a braided fusion category (whose skeletal data is given by the braided $6j$ fusion system). From a physical perspective, we are only interested in categories for which (there exists a choice of gauge where) all associated $F$ and $R$ symbols are unitary; namely, \textit{unitary} braided fusion categories. \\ 

\subsection{Ribbon structure}
\label{pivotality}
It is known that a unitary braided fusion category admits a unique unitary ribbon structure \cite{eno,galindo}. In terms of the $R$-symbols of the category, this means that for every $q\in\mathfrak{L}$, there exists a quantity $\vartheta_{q}\in U(1)$ such that the ribbon relation (\ref{ribpro}) is fullfilled. This tells us that given a unitary braided $6j$ fusion system, the ansatz (\ref{thansatz}) is correct and has a unique set of solutions. 
\begin{equation}\sum_{\lambda}\left[R^{ba}_{c}\right]_{\mu\lambda}\left[R^{ab}_{c}\right]_{\lambda\nu}=\frac{\vartheta_{c}}{\vartheta_{a}\vartheta_{b}}\delta_{\mu\nu}\label{ribpro}\end{equation}
Physically, $\vartheta_{q}$ is the phase evolution induced by a clockwise $2\pi$-rotation of a charge $q$, and is called its \textit{topological spin}. The topological spins are roots of unity \cite{kitaev,vafa} and are gauge-invariant. The ribbon relation allows us to promote quasiparticle worldlines to worldribbons, or equivalently tells us how to evaluate type-I Reidemeister moves on worldlines (Figure \ref{topospinssconn}).

\begin{figure}[H]\centering \includegraphics[width=0.85\textwidth]{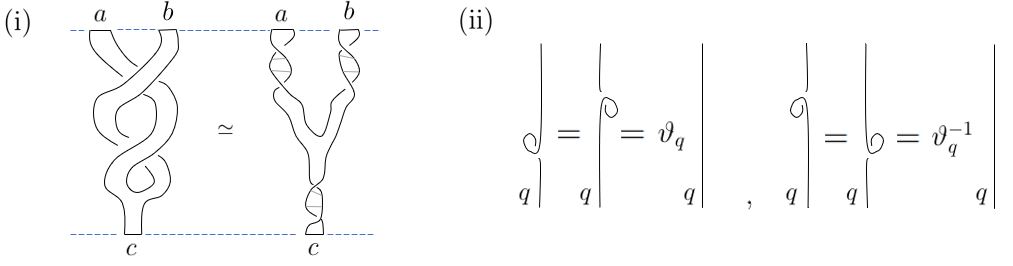} \caption{(i) The ribbon relation illustrated through the deformation of worldribbons. Boundaries are fixed at the initial and final time slices. (ii) Type-I Reidemeister twists correspond to $2\pi$-rotations.} \label{topospinssconn}\end{figure}

\noindent To summarise, the algebraic structure arising from exchange symmetry in two spatial dimensions (under assumptions \textbf{A1}-\textbf{A3}) corresponds to a unitary ribbon fusion category (also called a unitary premodular category). A theory of anyons has all of its data contained in a such a category and is determined (up to gauge equivalence) by the skeletal data of the category \mbox{(fusion} coefficients, $F$-symbols and $R$-symbols). The underlying fusion algebra encodes the \textit{fusion rules} of the theory.\footnote{Note that for the basis $B$ of the fusion algebra for an abelian theory of anyons, $(B,\cdot\ )$ defines an abelian group.} The rank-finiteness theorem for braided fusion categories \cite{rfgxbfc} tells us that there are finitely many theories of anyons for any given rank. Finally, we note that the deduction in Remark \ref{arbtopremk} is verified, for instance, by the toric code modular tensor category which describes quasiparticles on a torus.
 
\subsection{Modularity}
\label{modulasec}

Pursuing a classification of theories of anyons motivates that of unitary ribbon fusion categories \cite{bruillortiz}. Levying a nondegeneracy condition on the braiding results in a unitary \textit{modular tensor category}: the extra structure possessed by such categories makes their classification more tractable \cite{RSW, rk5clas,dgreen}.

\begin{defn}Suppose monodromy operator $M_{xq}$ is the identity for all labels $q$. The label $x$ is then said to be \textit{transparent}. The braiding is called \textit{nondegenerate} if the trivial label is the only transparent one.\end{defn}

\noindent Nondegeneracy can be physically motivated as follows. $R$-matrices of the form $R^{ab}$ where $a\neq b$ are not gauge-invariant, and therefore cannot correspond to measurable quantities. On the other hand, monodromies are gauge-invariant. Since the monodromy of any transparent label is trivial, there is no reason to allow for nontrivial transparent labels in our algebraic models, as they cannot be distinguished from the vacuum in practice. \\ 

\noindent However, nondegeneracy comes at a price. Let $f$ be such that $N^{ff}_{q}\in\{0,1\}$ for all $q$. $R$-matrices of the form $R^{ff}$ are gauge-invariant, and so assuming modularity has the undesirable effect of discarding theories with transparent objects $f$ such that $-1$ is an eigenvalue of $R^{ff}$ (e.g. fermions). Modular tensor categories are thus limited to describing $(2+1)$-dimensional \textit{bosonic} topological orders. Fermions are typically present in systems of interest (e.g. fractional quantum Hall liquids), and so it is desirable to have an algebraic model that is ``almost'' modular i.e. where the only nontrivial transparent object is a fermion. This has led to the development of \textit{spin} modular categories \cite{FMTCs}. 

\vspace{2mm}
\section{Concluding Remarks and Outlook}
\label{future}
The majority of this paper is devoted to considering the action of braiding on quasiparticle systems. To this end, the ``superselection braid'' proved to be central to our exposition. We saw that its action uniquely specified the superselection sectors of a system, illuminated the fusion structure amongst them and suggested the ribbon relation. Using exchange symmetry as our guiding physical principle, we showed that postulates \textbf{A1}-\textbf{A3} suffice to recover unitary ribbon fusion categories as a framework for modelling anyons. Taking into account the results of \cite{shikokim}, we also suggested an alternative set of postulates \textbf{P1}-\textbf{P3} in Section \ref{relshisec}.  \\ 

A \textit{motion group} may be defined in a more general context than that found in \mbox{Section \ref{pexchgs}} in order to describe the `motions' of a (typically disconnected) nonempty submanifold $\mathcal{N}$ in manifold $\mathcal{M}$ \cite{qiumots}. If $\mathcal{M}=\mathbb{R}^{3}$ and $\mathcal{N}$ is given by $n$ disjoint loops then the motion group is the \textit{loop braid group} $\mathcal{L}B_{n}$. Physically, we expect $\mathcal{L}B_{n}$ to play a similar role in describing the exchange statistics of loop-like excitations in $(3+1)$-dimensions to that of the braid group for point-like excitations in $(2+1)$-dimensions \cite{walkerwang}. The next possible generalisation could be to consider the statistics of knotted loops. The representation theory of motion groups and their relation to higher-dimensional TQFTs and topological phases of matter is an active area of research. In the case of loop excitations, various inroads have been made \cite{mesaros,levinloops1,levinloops2,tanti,guz1,guz2}. By formulating exchange symmetry in terms of the local representations of motion groups, the methods presented in this paper might be extended by adapting them to the setting of higher-dimensional excitations.

\newpage



\newpage
\appendix

\section{The Coloured Braid Groupoid and its Action}
\label{cbgactappx}
\begin{defn}A \textit{groupoid} with base $\mathcal{B}$ is a set $G$ with maps $\alpha,\omega:G\to\mathcal{B}$ and a partially defined binary operation $( \cdot , \cdot):G\times G \to G$ such that for all $f,g,h\in G$,
\begin{enumerate}[label=(\roman*)]
\item $gh$ is defined whenever $\alpha(g)=\omega(h)$, and in this case we have $\alpha(gh)=\alpha(h)$ and $\omega(gh)=\omega(g)$.
\item If either of $(fg)h$ or $f(gh)$ is defined then so is the other, and they are equal.
\item For each $g$, there are left and right identity elements respectively denoted by $\lambda_{g},\rho_{g}\in G$, for which we have $\lambda_{g}g=g=g\rho_{g}$. 
\item Each $g$ has an inverse $g^{-1}\in G$ satisfying $g^{-1}g=\rho_{g}$ and $gg^{-1}=\lambda_{g}$.
\end{enumerate}
Note that a group is a groupoid $G$ whose base contains a single element. 
\end{defn}

\noindent Consider the set of all possible $n$-braids where for any braid, each strand is assigned a distinct colour (and we always have the same $n$ colours to choose from). Equivalently, this may be thought of as bijectively assigning a number from $\{1,\ldots,n\}$ to each of the $n$ strands in a given braid. Thus, for any $n$-braid $b\in B_{n}$, there are now $n!$ distincly `coloured' versions of it contained in our set. \\
\noindent Under composition (i.e. stacking of braids), it is clear that our set possesses the structure of a groupoid. In this instance, the base is $\mathcal{B}=S_{\{1,\ldots,n\}}$ yielding the braid groupoid $B_{n}(\mathcal{B})$ \textit{for $n$ distinctly coloured strands}.

\begin{figure}[H]\centering \includegraphics[width=0.32\textwidth]{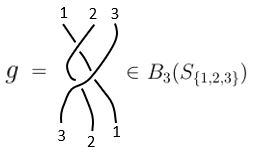}\caption{$\alpha(g)=123$ and $\omega(g)=321$}\label{basemapex}\end{figure}

\begin{remk}
Given any $s\in S_{\{1,\ldots,n\}}$, the subset of all braids $g\in B_{n}(B$) such that $\alpha(g)=\omega(g)=s$ defines a subgroup isomorphic to the pure braid group $PB_{n}$.
\label{chammaked}
\end{remk}

\noindent We can equivalently understand a groupoid $G$ with base $\mathcal{B}$ as a category $G$ whose collection of objects $\Ob(G)$ is given by $\mathcal{B}$, and where for any $x,y\in\mathcal{B}$ we have 
\begin{equation}\Hom(x,y)=\{g\in G : \alpha(g)=x , \omega(g)=y\}\end{equation}
where $g\in\Hom(x,y)$ is a morphism from $x$ to $y$. Note that all morphisms in the category $G$ are isomorphisms (by invertibility). When $\mathcal{B}=S_{\{1,\ldots,n\}}$, Remark \ref{chammaked} is equivalent to saying that there is a group isomorphism $\Aut(s)\cong PB_{n}$ for each $s\in\mathcal{B}$.\\

\noindent The categorical framework is convenient for understanding what is meant by a unitary linear representation of $B_{n}(\mathcal{B})$. In our case, this will be a functor
\begin{equation}Z:B_{n}(\mathcal{B})\to\FdHilb\label{96barsofZ}\end{equation}
where $\FdHilb$ is the category of finite Hilbert spaces, and where the image of any morphism under $Z$ is a unitary linear transformation.\footnote{$\FdHilb$ is equipped with a dagger structure given by the Hermitian adjoint.}\\

\noindent Finally, it is worth mentioning the choice of base $\mathcal{B}$ for $B_{n}(\mathcal{B})$. The coloured braid groupoid $B_{n}(\mathcal{B})$ is defined for a choice of base $\mathcal{B}=S_{\{l_{1},\ldots,l_{n}\}}$ where $l_{i}\in\mathfrak{L}$ (for some set of labels $\mathfrak{L}$). When constructing the action $\{\rho_{s}\}_{s}$ in Section \ref{exches2}, we do not assume any equalities among the representations $\{\rho_{\{i,j\}}\}_{i,j}$ in order to maintain generality. This means that all $n$ strands in any given braid must be distinctly labelled, and explains the choice of base $\mathcal{B}=S_{\{1,\ldots,n\}}$. 
\begin{itemize}
\item Suppose $\rho_{\{1,i\}}=\rho_{\{2,i\}}$ for all $i$. This is equivalent to having $\mathcal{B}=S_{\{1,1,3,4,\ldots,n\}}$ (i.e. $n-1$ colours for $n$ strands, where only $2$ strands have the same colour).
\item Suppose $\rho_{\{i,j\}}$ coincide for all $i,j$. This is equivalent to having $\mathcal{B}=S_{\{1,\ldots,1\}}$ (i.e. all strands have the same colour). In this instance, $B_{n}(\mathcal{B})\cong B_{n}$ and (\ref{96barsofZ}) is a unitary linear representation of $B_{n}$.
\end{itemize}
This suggests that a braided monoidal category $\mathcal{C}$ with $\Ob(\mathcal{C})=\mathfrak{L}$ is a sensible way to model a theory of anyons. Indeed, this is the case (anyons are algebraically modelled using braided \textit{fusion} categories). The key step is to identify the existence of a fusion structure amongst the labels in $\mathfrak{L}$ : the primary objective of this paper is to show how such structure emerges as a direct consequence of exchange symmetry.

\newpage
\section{Proofs}\label{proofsappx}
\subsection{Proofs from Section \ref{supselbraidsec}} 
\label{proofappx1}
\noindent In order to prove Lemma \ref{main1lem3}, we must first show the identities in Lemma \ref{msinmyrum}.
\begin{lem}\hspace{2mm} 
\begin{enumerate}[label=(\roman*)]
\item $\beta_{n}\sigma_{n-1}=\sigma_{1}\beta_{n} \ \ , \ n\geq2$ \label{main1lem1}
\item $b_{n}\sigma_{n-i}=\sigma_{n+1-i}\ b_{n} \ ,  \ \ \, i=1,\ldots,n-1 \ \text{\normalfont where} \ n\geq2$ \label{main1lem2}
\end{enumerate}\label{msinmyrum}\end{lem}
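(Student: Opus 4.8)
The plan is to prove the two identities separately: (ii) by a direct braid-word manipulation, and (i) by first recasting $\beta_{n}$ in a reversed normal form.

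For (ii), I would set $j:=n-i$, so $j$ ranges over $1,\ldots,n-1$ and the claim becomes $b_{n}\sigma_{j}=\sigma_{j+1}b_{n}$. Writing $b_{n}\sigma_{j}=\sigma_{1}\cdots\sigma_{n}\,\sigma_{j}$, I first slide $\sigma_{j}$ leftward past $\sigma_{j+2},\ldots,\sigma_{n}$, each of which commutes with it (indices differing by at least $2$). This exposes the subword $\sigma_{j}\sigma_{j+1}\sigma_{j}$, which the braid relation turns into $\sigma_{j+1}\sigma_{j}\sigma_{j+1}$; the leading $\sigma_{j+1}$ then commutes past $\sigma_{1},\ldots,\sigma_{j-1}$ to the front, leaving exactly $\sigma_{j+1}b_{n}$. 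This step is routine and presents no real obstruction.

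For (i) the difficulty is that pushing $\sigma_{n-1}$ rightward through $\beta_{n}=b_{n-1}b_{n-2}\cdots b_{1}$, or pushing $\sigma_{1}$ in from the left, stalls: $\sigma_{n-1}$ fails to commute with $\sigma_{n-2}$, and $\sigma_{1}$ cannot pass rightward through $b_{n-1}$ at all (this would call for a nonexistent index $\sigma_{0}$). The key is therefore to establish the reversed normal form
\[ \beta_{n}=c_{1}c_{2}\cdots c_{n-1}, \qquad c_{k}:=\sigma_{k}\sigma_{k-1}\cdots\sigma_{1}, \]
i.e.\ that $\beta_{n}$ is palindromic. I would prove this by induction on $n$: from $\beta_{n}=b_{n-1}\beta_{n-1}$ and the hypothesis $\beta_{n-1}=c_{1}\cdots c_{n-2}$, write $b_{n-1}=b_{n-2}\sigma_{n-1}$, commute the freed $\sigma_{n-1}$ rightward past $c_{1},\ldots,c_{n-3}$ (which involve only $\sigma_{1},\ldots,\sigma_{n-3}$), and merge $\sigma_{n-1}c_{n-2}=c_{n-1}$. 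What remains is $b_{n-2}(c_{1}\cdots c_{n-3})$, which equals $c_{1}\cdots c_{n-2}=\beta_{n-1}$ by the inductive hypothesis, yielding $\beta_{n}=c_{1}\cdots c_{n-1}$. I expect this palindromicity to be the main obstacle; it is exactly the point where the naive pushes fail.

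With the reversed form in hand, (i) telescopes cleanly. I first record the mirror of (ii), namely $c_{k}\sigma_{k}=\sigma_{k-1}c_{k}$ for $k\geq2$ (the same one-line braid-relation argument). Then, starting from $\beta_{n}\sigma_{n-1}=c_{1}\cdots c_{n-1}\sigma_{n-1}$, I apply $c_{n-1}\sigma_{n-1}=\sigma_{n-2}c_{n-1}$, then $c_{n-2}\sigma_{n-2}=\sigma_{n-3}c_{n-2}$, and so on; each step leaves the decremented generator adjacent to the next block, so no auxiliary commutation is needed. Passing all the way down to $c_{2}\sigma_{2}=\sigma_{1}c_{2}$ and then using $c_{1}\sigma_{1}=\sigma_{1}c_{1}$, the surviving $\sigma_{1}$ sits at the front, giving $\sigma_{1}c_{1}\cdots c_{n-1}=\sigma_{1}\beta_{n}$. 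The cases $n=2,3$ are checked directly as base cases, and everything outside the palindromicity step is bookkeeping with the two elementary slide identities.
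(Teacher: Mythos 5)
Your proof is correct, but it reaches part (i) by a genuinely different route than the paper. For (ii), your direct word manipulation (slide $\sigma_{j}$ left past the commuting generators, apply the braid relation once to $\sigma_{j}\sigma_{j+1}\sigma_{j}$, slide $\sigma_{j+1}$ to the front) is just an unrolled version of the paper's induction on $n$; the content is the same. For (i), the paper instead first proves the auxiliary identity $b_{n}^{2}=\sigma_{1}b_{n}b_{n-1}$ by a telescoping computation, and then gets $\beta_{n}\sigma_{n-1}=b_{n-1}b_{n-2}\sigma_{n-1}\beta_{n-2}=b_{n-1}^{2}\beta_{n-2}=\sigma_{1}\beta_{n}$ in two lines. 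You instead establish the reversed normal form $\beta_{n}=\overleftarrow{b_{1}}\,\overleftarrow{b_{2}}\cdots\overleftarrow{b_{n-1}}$ (equivalently, that $\beta_{n}$ is a palindrome, which is the paper's Lemma \ref{main2lem2}, proved separately in Appendix \ref{proofappx2}) and then telescope with the mirror identity $\overleftarrow{b_{k}}\sigma_{k}=\sigma_{k-1}\overleftarrow{b_{k}}$. Your induction for the palindrome form is self-contained (it uses only the recursion $\beta_{n}=b_{n-1}\beta_{n-1}$, far commutativity, and $\sigma_{n-1}\overleftarrow{b_{n-2}}=\overleftarrow{b_{n-1}}$), so there is no circularity with the paper's later results. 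The trade-off: the paper's route is shorter for the lemma in isolation, while yours proves Lemma \ref{main2lem2} as a byproduct — a fact the paper needs anyway for Proposition \ref{main2lem3}(ii) and Theorem \ref{ssbrecur} — at the cost of requiring a two-level (strong) induction hypothesis in the palindromicity step, which you should state explicitly.
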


\begin{proof}\hspace{2mm}\\
\begin{enumerate}[label=(\roman*)]
\vspace{-3mm}
\item \phantom{For $n\geq2$,}\vspace{-7mm}
\begin{alignat*}{2} b_{n}^{2}&=b_{n-1}b_{n-2}\sigma_{n}\sigma_{n-1}\sigma_{n}&&\\
&=b_{n-1}b_{n-2}\sigma_{n-1}\sigma_{n}\sigma_{n-1}&&=b_{n-1}^{2}\sigma_{n}\sigma_{n-1}\\
&											    &&=b_{n-2}^{2}(\sigma_{n-1}\sigma_{n-2})(\sigma_{n}\sigma_{n-1})\\
&											    &&=\ldots=b_{1}^{2}\sigma_{21}\sigma_{32}\ldots(\sigma_{n}\sigma_{n-1})=\sigma_{1}b_{n}b_{n-1}\end{alignat*}
whence
\begin{alignat*}{1}\beta_{n}\sigma_{n-1}=b_{n-1}b_{n-2}\sigma_{n-1}\beta_{n-2}&=b_{n-1}^{2}\beta_{n-2}\\
&=\sigma_{1}b_{n-1}b_{n-2}\beta_{n-2}=\sigma_{1}\beta_{n}\end{alignat*}

\vspace{2.5mm}

\item For $n=2$, the identity is simply $\sigma_{121}=\sigma_{212}$. Proceeding by induction, assume that the lemma holds for some $n$. For $2\leq i\leq n$, we have
\begin{alignat*}{1}b_{n+1}\sigma_{n+1-i}=b_{n}\sigma_{n+1}\sigma_{n+1-i}&=b_{n}\sigma_{n+1-i}\sigma_{n+1} \quad \text{(where $n+1-i\in\{1,\ldots,n-1\}$)} \\
&=\sigma_{n+2-i}b_{n}\sigma_{n+1} \quad \text{(by induction hypothesis)}\\
&=\sigma_{n+2-i}b_{n+1}\end{alignat*}
For $i=1$, we show the result directly:
\begin{equation*}b_{n}\sigma_{n-1}=b_{n-2}\sigma_{n-1}\sigma_{n}\sigma_{n-1}=b_{n-2}\sigma_{n}\sigma_{n-1}\sigma_{n}=\sigma_{n}b_{n}\end{equation*}

\end{enumerate}
\end{proof}

\vspace{-4mm}
\begin{proof}[\underline{Proof of Lemma \ref{main1lem3}}]\hspace{2mm}\\[2mm]
Let us first show that
\begin{equation}\beta_{n}\sigma_{i}=\sigma_{n-i}\beta_{n}\label{nimbojimbo}\end{equation}
For $n=2$, (\ref{nimbojimbo}) is simply $\beta_{2}\sigma_{1}=\sigma_{1}^{2}=\sigma_{1}\beta_{2}$. Proceeding by induction, assume that (\ref{nimbojimbo}) holds for some $n$. For $1\leq i\leq n-1$, we have
\begin{alignat*}{1}\beta_{n+1}\sigma_{i}=b_{n}\beta_{n}\sigma_{i}&=b_{n}\sigma_{n-i}\beta_{n} \quad \text{(by induction hypothesis)}\\
														  &=\sigma_{n+1-i}b_{n}\beta_{n} \quad \text{(by Lemma \ref{msinmyrum}\ref{main1lem2})}\\
														  &=\sigma_{n+1-i}\beta_{n+1}
\end{alignat*}
For $i=n$, we want to show $\beta_{n+1}\sigma_{n}=\sigma_{1}\beta_{n+1}$, which is just Lemma \ref{msinmyrum}\ref{main1lem1}. It remains to show that
\begin{equation}\beta_{n}\sigma^{-1}_{i}=\sigma^{-1}_{n-i}\beta_{n}\label{rimbojimbo}\end{equation}
Lemma \ref{msinmyrum} implies 
\begin{subequations}
\begin{equation} \beta_{n}\sigma_{n-1}^{-1}=\sigma_{1}^{-1}\beta_{n} \ \ , \ n\geq2 \label{larkasca}\end{equation}
\begin{equation} b_{n}\sigma_{n-i}^{-1}=\sigma_{n+1-i}^{-1}b_{n} \ ,  \ \ \, i=1,\ldots,n-1 \ \text{\normalfont where} \ n\geq2 \label{larkdesca}\end{equation}
\end{subequations}
Using identities (\ref{larkasca})-(\ref{larkdesca}), the proof of (\ref{rimbojimbo}) follows similarly to that of (\ref{nimbojimbo}).
\end{proof}\vspace{7mm}

\subsection{Proofs from Section \ref{quaf}} 
\label{proofappx2}
In order to prove Theorem \ref{ssbrecur}, we will first need to prove Lemmas \ref{main2lem1}-\ref{fizko} and Proposition \ref{main2lem3}. 

\begin{lem}\begin{equation}\beta_{k}=r_{k-2}(b_{1})\cdot \ldots \cdot r_{1}(b_{k-2})\cdot r_{0}(b_{k-1}) \ , \ k\geq2\end{equation}\label{main2lem1}\end{lem}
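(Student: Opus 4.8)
The plan is to prove the identity by induction on $k$, exploiting the recursive structure of both sides together with the fact (already available to us) that conjugation by $b_{k-1}$ shifts braid generators up by one index. To organise the argument, I would first abbreviate the right-hand side as $P_k := r_{k-2}(b_1)\cdot r_{k-3}(b_2)\cdots r_0(b_{k-1})$, i.e.\ the ordered product $\prod_{j=1}^{k-1}r_{k-1-j}(b_j)$, so that the claim reads $\beta_k = P_k$. The base case $k=2$ is immediate, since both $\beta_2$ and $P_2 = r_0(b_1)$ equal $\sigma_1$.

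For the inductive step I would extract a recursion for $P_k$ mirroring $\beta_k = b_{k-1}\beta_{k-1}$. Splitting off the last factor $r_0(b_{k-1})=b_{k-1}$ and rewriting each remaining index via $r_{k-1-j}=r_1\circ r_{k-2-j}$ (using $r_{a_1}\circ r_{a_2}=r_{a_1+a_2}$ with $k-2-j\geq 0$), the homomorphism property of $r_1$ gives
\[
P_k = \Big[\textstyle\prod_{j=1}^{k-2} r_1\big(r_{k-2-j}(b_j)\big)\Big]\cdot b_{k-1} = r_1(P_{k-1})\cdot b_{k-1},
\]
where $P_{k-1}=\prod_{j=1}^{k-2}r_{k-2-j}(b_j)$ is exactly the right-hand side of the asserted identity for $\beta_{k-1}$. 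The crucial input is the conjugation relation $b_{k-1}\sigma_j b_{k-1}^{-1}=\sigma_{j+1}$ for $1\leq j\leq k-2$, which is precisely Lemma \ref{msinmyrum}\ref{main1lem2} read as a conjugation. Since $\beta_{k-1}$ is a word in $\sigma_1,\ldots,\sigma_{k-2}$ only, applying this relation letter by letter yields $b_{k-1}\beta_{k-1}b_{k-1}^{-1}=r_1(\beta_{k-1})$. Invoking the induction hypothesis $P_{k-1}=\beta_{k-1}$ then gives
\[
P_k = r_1(\beta_{k-1})\cdot b_{k-1} = b_{k-1}\beta_{k-1}b_{k-1}^{-1}\, b_{k-1} = b_{k-1}\beta_{k-1} = \beta_k,
\]
closing the induction.

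I do not expect a genuine obstacle: the entire argument reduces to the single index-shifting conjugation, which Lemma \ref{msinmyrum} already supplies, plus the elementary homomorphism/composition laws for the $r_a$. The only step demanding care is the bookkeeping of the $r$-index shifts when peeling the last factor off $P_k$ and recognising the remainder as $r_1(P_{k-1})$; I would present that reindexing explicitly (as above) rather than gloss it, since it is where an off-by-one error would most easily creep in.
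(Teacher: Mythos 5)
Your proof is correct, and it reaches the paper's key recursion $\beta_{k}=r_{1}(\beta_{k-1})\cdot b_{k-1}$ by a genuinely shorter route. The paper first proves the auxiliary braid identity $b_{n+1}b_{n}=r_{1}(b_{n})\cdot b_{n+1}$ by an explicit chain of braid-word manipulations, then uses it in a second chain to extract $\beta_{k}=r_{1}(\beta_{k-1})\cdot b_{k-1}$, and finally unwinds that recursion to obtain the stated product form. You instead observe that Lemma \ref{msinmyrum}\ref{main1lem2} (in the form $b_{k-1}\sigma_{j}b_{k-1}^{-1}=\sigma_{j+1}$ for $1\leq j\leq k-2$) says that conjugation by $b_{k-1}$ implements $r_{1}$ on any word in $\sigma_{1},\ldots,\sigma_{k-2}$, so $b_{k-1}\beta_{k-1}=r_{1}(\beta_{k-1})\,b_{k-1}$ falls out immediately; combined with your (correctly reindexed) recursion $P_{k}=r_{1}(P_{k-1})\cdot b_{k-1}$ for the right-hand side, the induction closes cleanly. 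What your approach buys is economy: it reuses an already-proven lemma and bypasses the paper's computation of $b_{n+1}b_{n}=r_{1}(b_{n})\cdot b_{n+1}$ entirely, at the mild cost of recasting a monoid homomorphism statement ($r_{1}$) as an inner automorphism restricted to the subgroup $\langle\sigma_{1},\ldots,\sigma_{k-2}\rangle$ — a step you handle correctly by applying the conjugation letter by letter. There is no circularity, since Lemma \ref{msinmyrum} is established independently of the present lemma.
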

\vspace{-5mm}
\begin{proof}
\begin{alignat*}{4}
b_{n+1}b_{n}&=\sigma_{1\ldots n+1}\cdot\sigma_{1\ldots n}\\
&=\sigma_{1\ldots n}\cdot \sigma_{1\ldots n-1}\sigma_{n+1}\sigma_{n}&&=b_{n}b_{n-1}\sigma_{n+1}\sigma_{n}\\
& &&=b_{n-1}b_{n-2}(\sigma_{n}\sigma_{n-1})(\sigma_{n+1}\sigma_{n}) \\
& &&=\ldots=b_{2}b_{1}(\sigma_{32}\cdot\ldots\cdot\sigma_{n,n-1}\cdot\sigma_{n+1,n})\\
& && =\sigma_{12}\sigma_{1}(\sigma_{32}\cdot\ldots\cdot\sigma_{n,n-1}\cdot\sigma_{n+1,n})\\
& && =\sigma_{21}(\sigma_{2}\cdot\sigma_{32}\cdot\ldots\cdot\sigma_{n,n-1}\cdot\sigma_{n+1,n})\\
& && =\sigma_{21}(\sigma_{32}\cdot\sigma_{343}\cdot\sigma_{54}\cdot\ldots\cdot\sigma_{n+1,n})\\
& && =\cdots=(\sigma_{21}\cdot\sigma_{32}\cdot\sigma_{43}\cdot\ldots\cdot\sigma_{n+1,n})\sigma_{n+1}\\
& && =\sigma_{2\ldots n+1}\cdot b_{n+1}=r_{1}(b_{n})\cdot b_{n+1}
\end{alignat*}
from which we see that
\begin{alignat*}{4}
\beta_{k}=b_{k-1}\cdot\ldots\cdot b_{1}&=(b_{k-1}b_{k-2})\cdot b_{k-3}\cdot\ldots\cdot b_{1}\\
&=r_{1}(b_{k-2})\cdot b_{k-1}\cdot b_{k-3}\cdot\ldots\cdot b_{1} \\
&=r_{1}(b_{k-2})\cdot b_{k-2}\cdot b_{k-3}\cdot\ldots\cdot b_{1}\cdot\sigma_{k-1}\\
&=r_{1}(b_{k-2})\cdot\beta_{k-1}\cdot\sigma_{k-1} \\
&=\ldots=r_{1}(b_{k-2})\cdot\ldots\cdot r_{1}(b_{1})\cdot \beta_{2} \cdot (\sigma_{2}\cdot\ldots\cdot\sigma_{k-1})=r_{1}(\beta_{k-1})\cdot b_{k-1}
\end{alignat*}
whence
\begin{alignat*}{4}
\beta_{k}&=r_{1}(\beta_{k-1})\cdot b_{k-1} \\
&=r_{1}(r_{1}(\beta_{k-2})\cdot b_{k-2})\cdot b_{k-1}&&=r_{2}(\beta_{k-2})\cdot r_{1}(b_{k-2})\cdot b_{k-1}\\
& &&=\ldots=r_{k-2}(\beta_{2})\cdot r_{k-3}(b_{2})\cdot\ldots\cdot r_{1}(b_{k-2})\cdot r_{0}(b_{k-1}) \\
\end{alignat*}
\end{proof}
\begin{lem}
\begin{equation}b_{n-1}\overleftarrow{b_{n}}=\overleftarrow{b_{n}}\cdot r_{1}(b_{n-1})\label{finny}\end{equation}
\label{aphex}\end{lem}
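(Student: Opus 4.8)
The plan is to recast (\ref{finny}) as a statement about how the word $\overleftarrow{b_n}$ shifts generator indices under conjugation, and to obtain that shift relation essentially for free by applying the anti-automorphism $\chi$ to an identity already established. Concretely, I would start from Lemma \ref{msinmyrum}\ref{main1lem2}, namely $b_n\sigma_{n-i}=\sigma_{n+1-i}\,b_n$ for $i=1,\ldots,n-1$, and transport it through $\chi$. Since $\chi$ is an anti-automorphism fixing each $\sigma_i$ and sending $b_n$ to $\overleftarrow{b_n}$, applying it to both sides reverses the order of the two factors and gives
\[
\sigma_{n-i}\,\overleftarrow{b_n}=\overleftarrow{b_n}\,\sigma_{n+1-i}, \qquad i=1,\ldots,n-1.
\]
Re-indexing with $j:=n-i$ (so that $j$ ranges over $1,\ldots,n-1$ and $n+1-i=j+1$), this becomes the shift relation
\[
\sigma_j\,\overleftarrow{b_n}=\overleftarrow{b_n}\,\sigma_{j+1}, \qquad j=1,\ldots,n-1. \qquad (\star)
\]

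With $(\star)$ in hand, the remainder is a telescoping computation. Writing $b_{n-1}=\sigma_1\sigma_2\cdots\sigma_{n-1}$, I would push the factor $\overleftarrow{b_n}$ leftward through the word one generator at a time, starting from $\sigma_{n-1}$ and working down to $\sigma_1$. Each application of $(\star)$ moves $\overleftarrow{b_n}$ past a single $\sigma_j$ and raises that generator's index by one, depositing a $\sigma_{j+1}$ to its right. After $n-1$ such steps we arrive at
\[
b_{n-1}\,\overleftarrow{b_n}=\sigma_1\cdots\sigma_{n-1}\,\overleftarrow{b_n}=\overleftarrow{b_n}\,\sigma_2\sigma_3\cdots\sigma_n=\overleftarrow{b_n}\,r_1(b_{n-1}),
\]
the last equality holding since $r_1(b_{n-1})=\sigma_2\sigma_3\cdots\sigma_n$ by definition of $r_1$. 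This is precisely (\ref{finny}).

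There is no genuinely hard step here; the only thing to get right is the bookkeeping around $\chi$. One must confirm that $\chi(b_n)=\overleftarrow{b_n}$ and, crucially, that the index range $i=1,\ldots,n-1$ in Lemma \ref{msinmyrum}\ref{main1lem2} translates exactly to the range $j=1,\ldots,n-1$ needed so that \emph{every} generator $\sigma_1,\ldots,\sigma_{n-1}$ occurring in $b_{n-1}$ may legitimately be pushed through $\overleftarrow{b_n}$. I would also note in passing that all three words live in a common braid group (their generators range up to $\sigma_n$), so that $(\star)$ and the telescoping are valid there. A quick sanity check at $n=2$, where (\ref{finny}) collapses to the single braid relation $\sigma_1\sigma_2\sigma_1=\sigma_2\sigma_1\sigma_2$, confirms that the mechanism behaves as expected.
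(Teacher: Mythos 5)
Your proof is correct, but it takes a genuinely different route from the paper's. The paper proves (\ref{finny}) by a self-contained computation: writing $b_{n-1}\overleftarrow{b_{n}}=b_{n-2}\cdot\sigma_{n-1}\sigma_{n}\sigma_{n-1}\cdot\overleftarrow{b_{n-2}}$, it applies the braid relation to the middle block, pulls the two resulting copies of $\sigma_{n}$ out to either side (they commute with $b_{n-2}$ and $\overleftarrow{b_{n-2}}$), and iterates on the shorter word $b_{n-2}\overleftarrow{b_{n-1}}$ until it reaches $b_{1}\overleftarrow{b_{2}}=\sigma_{1}\sigma_{21}$, then regroups. You instead obtain the shift relation $\sigma_{j}\overleftarrow{b_{n}}=\overleftarrow{b_{n}}\,\sigma_{j+1}$ for $j=1,\ldots,n-1$ essentially for free, by transporting Lemma \ref{msinmyrum}\ref{main1lem2} through the anti-automorphism $\chi$, and then telescope it through the word $b_{n-1}=\sigma_{1}\cdots\sigma_{n-1}$; your re-indexing is right, every generator occurring in $b_{n-1}$ falls in the admissible range, and there is no circularity since Lemma \ref{msinmyrum} is proved earlier and independently of Lemma \ref{aphex}. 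Your version is arguably tidier: it reuses existing machinery rather than redoing a braid-relation computation, and it isolates the conjugation identity $\sigma_{j}\overleftarrow{b_{n}}=\overleftarrow{b_{n}}\,\sigma_{j+1}$, which the paper only establishes later, by hand, as the $j=0$ case of (\ref{twhite}) in the proof of Lemma \ref{fizko}. What the paper's argument buys in exchange is independence: it needs nothing beyond the defining relations of the braid group.
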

\vspace{-4mm}
\begin{proof}
\begin{alignat*}{4}
b_{n-1}\overleftarrow{b_{n}}&=\sigma_{1\ldots n-1}\cdot\sigma_{n\ldots 1} \\
&=b_{n-2}\cdot\sigma_{n-1}\sigma_{n}\sigma_{n-1}\cdot\overleftarrow{b_{n-2}}&&=\sigma_{n}(b_{n-2}\cdot\sigma_{n-1}\cdot\overleftarrow{b_{n-2}}) \sigma_{n} \\
& && = \sigma_{n}(b_{n-2}\cdot\overleftarrow{b_{n-1}})\sigma_{n}\\
& && = \ldots = \sigma_{n\ldots3}(b_{1}\cdot\overleftarrow{b_{2}})\sigma_{3\ldots n}\\
& && = \sigma_{n\ldots3}(\sigma_{1}\sigma_{21})\sigma_{3\ldots n} = (\sigma_{n\ldots3}\sigma_{21})(\sigma_{2}\sigma_{3\ldots n})
\end{alignat*}
\end{proof}

\begin{lem}$\beta_{n}$ is a palindrome i.e.\ $\beta_{n}=\overleftarrow{\beta_{n}}$.\label{main2lem2}\end{lem}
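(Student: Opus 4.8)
The plan is to prove the palindrome property by induction on $n$, reducing the claim at each step to an auxiliary braid identity that is itself established by a parallel induction. The base cases $\beta_{2}=\sigma_{1}$ and $\beta_{3}=\sigma_{121}$ are visibly palindromic. For the inductive step I would apply the anti-automorphism $\chi$ to the defining recursion $\beta_{n}=b_{n-1}\beta_{n-1}$, which is immediate from (\ref{ssbprop}). Since $\chi$ reverses products and fixes each generator, this gives $\overleftarrow{\beta_{n}}=\overleftarrow{\beta_{n-1}}\,\overleftarrow{b_{n-1}}$, and the induction hypothesis $\overleftarrow{\beta_{n-1}}=\beta_{n-1}$ then reduces the claim $\overleftarrow{\beta_{n}}=\beta_{n}$ (written on the right via $\beta_{n}=b_{n-1}\beta_{n-1}$ once more) to showing
\begin{equation} \beta_{m}\,\overleftarrow{b_{m}} = b_{m}\,\beta_{m} \label{palaux} \end{equation}
in the case $m=n-1$.

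The core of the argument is therefore (\ref{palaux}), which I would establish by a separate induction on $m$. The base case $m=1$ reads $\overleftarrow{b_{1}}=b_{1}$, i.e.\ $\sigma_{1}=\sigma_{1}$, since $\beta_{1}=e$. For the step I would use $\beta_{m}=b_{m-1}\beta_{m-1}$ together with $\overleftarrow{b_{m}}=\sigma_{m}\,\overleftarrow{b_{m-1}}$, so that the left-hand side of (\ref{palaux}) becomes $b_{m-1}\beta_{m-1}\sigma_{m}\overleftarrow{b_{m-1}}$. The key observation is that $\beta_{m-1}$ is a word in $\sigma_{1},\ldots,\sigma_{m-2}$, each of which commutes with $\sigma_{m}$; hence $\beta_{m-1}\sigma_{m}=\sigma_{m}\beta_{m-1}$. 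Applying the induction hypothesis $\beta_{m-1}\overleftarrow{b_{m-1}}=b_{m-1}\beta_{m-1}$ and the trivial identity $b_{m-1}\sigma_{m}=b_{m}$ then collapses the expression to $b_{m}b_{m-1}\beta_{m-1}=b_{m}\beta_{m}$, as required.

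The only delicate point — and the main thing to get right — is the index bookkeeping: (\ref{palaux}) is an identity in $B_{m+1}$, and one must confirm that every generator occurring in $\beta_{m-1}$ has index at most $m-2$, so that the commutation with $\sigma_{m}$ is genuinely valid (i.e.\ $|i-m|\geq 2$ for all such $i$). Once (\ref{palaux}) is in hand the palindrome property follows immediately from the reduction above. As an alternative I considered transforming $\overleftarrow{\beta_{n}}=\overleftarrow{b_{1}}\,\overleftarrow{b_{2}}\cdots\overleftarrow{b_{n-1}}$ directly into $\beta_{n}=b_{n-1}\cdots b_{1}$ using Lemma \ref{aphex}, but that route seems to require repeatedly shuffling factors and is less transparent than the commutation argument; I would keep Lemma \ref{aphex} in reserve and rely on the two nested inductions.
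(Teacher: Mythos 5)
Your proof is correct and is essentially the paper's argument: your auxiliary identity $\beta_{m}\overleftarrow{b_{m}}=b_{m}\beta_{m}$ is exactly the relation the paper establishes in the equivalent form $\sigma_{m}\beta_{m}=\beta_{m-1}\overleftarrow{b_{m}}$ (left-multiply by $b_{m-1}$), and both arguments then iterate it to unwind $\beta_{n}$ into $\overleftarrow{b_{1}}\cdot\ldots\cdot\overleftarrow{b_{n-1}}$. The only cosmetic difference is that you prove the auxiliary identity by induction on $m$ via the far-commutation $\beta_{m-1}\sigma_{m}=\sigma_{m}\beta_{m-1}$ (whose index check $i\leq m-2$ you correctly flag), whereas the paper obtains it by a direct telescoping computation using the same commutations.
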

\begin{proof}
\begin{alignat*}{4}
\sigma_{n}\beta_{n}=\sigma_{n}b_{n-1}\beta_{n-1}&=(b_{n-2}\cdot\sigma_{n})\cdot\sigma_{n-1}\beta_{n-1}\\
&=\ldots=(b_{n-2}\cdot\sigma_{n})\cdot (b_{n-3}\cdot\sigma_{n-1})\cdot\ldots\cdot (b_{1}\sigma_{3})\cdot\sigma_{2}\beta_{2} \\
&=(b_{n-2}\cdot\ldots\cdot b_{1})(\sigma_{n}\sigma_{n-1}\cdot\ldots\cdot\sigma_{3})\sigma_{2}\sigma_{1}=\beta_{n-1}\overleftarrow{b_{n}}
\end{alignat*}
whence
\begin{alignat*}{4}
\beta_{n+1}=b_{n}\beta_{n}=b_{n-1}(\sigma_{n}\beta_{n})&=b_{n-1}\beta_{n-1}\overleftarrow{b_{n}}\\
&=b_{n-2}(\sigma_{n-1}\beta_{n-1})\overleftarrow{b_{n}}&&=b_{n-2}\beta_{n-2}\overleftarrow{b_{n-1}}\overleftarrow{b_{n}}\\
& &&=\ldots=b_{2}\beta_{2}\overleftarrow{b_{3}}\cdot\ldots\cdot\overleftarrow{b_{n}}\\
& &&= \sigma_{1}\sigma_{21}\overleftarrow{b_{3}}\cdot\ldots\cdot\overleftarrow{b_{n}}=\overleftarrow{\beta_{n+1}}
\end{alignat*}
\end{proof}
\begin{lem}\hspace{1mm}
\begin{enumerate}[label=(\roman*)]
\item $\sigma_{i}\cdot t_{k,l}=t_{k,l}\cdot r_{k}(\sigma_{i}) \ , \ \  1\leq i \leq l-1 \ , \  k\geq1 \ , \ l>1$
\item $t_{k,l}\cdot\sigma_{i}=r_{l}(\sigma_{i})\cdot t_{k,l} \ , \ \ 1\leq i \leq k-1 \ , \  k>1 \ , \ l\geq1$
\end{enumerate}\label{fizko}\end{lem}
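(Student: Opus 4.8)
The plan is to read off two recursions from the product expansions of $t_{k,l}$ recorded in (\ref{tikell}) and then run parallel inductions, recognising at the end that the two identities are mirror images of one another under the word-reversing anti-automorphism $\chi$. Taking the first factor of each line of (\ref{tikell}) off the front (resp.\ back) and using $r_1\circ r_j=r_{1+j}$ gives
\[
t_{k,l}=\overleftarrow{b_l}\cdot r_1(t_{k-1,l})\ \ (k\geq 2),\qquad t_{k,l}=r_1(t_{k,l-1})\cdot b_k\ \ (l\geq 2).
\]
The first peels off one ``$l$-block over a single strand'' factor and exposes a copy of $t_{k-1,l}$ shifted up by one strand; the second peels off ``a single strand over a $k$-block'' and exposes $r_1(t_{k,l-1})$.

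For (i) I would induct on $k$. The base case $k=1$ is the claim $\sigma_i\overleftarrow{b_l}=\overleftarrow{b_l}\,\sigma_{i+1}$ for $1\leq i\leq l-1$, which I would obtain as the image of Lemma \ref{msinmyrum}\ref{main1lem2} under $\chi$: applying the anti-automorphism $\chi$ (which fixes each $\sigma_j$ and sends $b_l\mapsto\overleftarrow{b_l}$) to $b_l\sigma_{l-i}=\sigma_{l+1-i}b_l$ and relabelling the index yields exactly this. For the inductive step I would write $\sigma_i t_{k,l}=\sigma_i\overleftarrow{b_l}\cdot r_1(t_{k-1,l})$, push $\sigma_i$ through $\overleftarrow{b_l}$ via the base case to reach $\overleftarrow{b_l}\,\sigma_{i+1}\,r_1(t_{k-1,l})$, and then apply $r_1$ to the inductive hypothesis $\sigma_i t_{k-1,l}=t_{k-1,l}\sigma_{i+k-1}$. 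Since $\sigma_{i+1}=r_1(\sigma_i)$ and $r_1$ is a homomorphism, this gives $\sigma_{i+1}r_1(t_{k-1,l})=r_1(t_{k-1,l})\sigma_{i+k}$, and reassembling produces $\sigma_i t_{k,l}=t_{k,l}\sigma_{i+k}=t_{k,l}\,r_k(\sigma_i)$, as desired.

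Identity (ii) then comes for free. Applying $\chi$ to (i) and using $\overleftarrow{t_{k,l}}=t_{l,k}$ turns $\sigma_i t_{k,l}=t_{k,l}\sigma_{i+k}$ into $t_{l,k}\sigma_i=\sigma_{i+k}t_{l,k}$ for $1\leq i\leq l-1$; swapping the names $k\leftrightarrow l$ is precisely $t_{k,l}\sigma_i=r_l(\sigma_i)t_{k,l}$ for $1\leq i\leq k-1$. Alternatively one can prove (ii) directly by the symmetric induction on $l$ using the second recursion above, whose base case $l=1$ is $b_k\sigma_i=\sigma_{i+1}b_k$ --- that is, Lemma \ref{msinmyrum}\ref{main1lem2} verbatim after relabelling.

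I do not anticipate a genuine obstacle: the entire content is bookkeeping of braid-relation commutations, and the homomorphism property of $r_1$ does all the index-shifting automatically. The only place demanding care is the base identity $\sigma_i\overleftarrow{b_l}=\overleftarrow{b_l}\sigma_{i+1}$; proving it \emph{ab initio} would require commuting $\sigma_i$ leftward past $\sigma_l,\ldots,\sigma_{i+2}$, applying the single relation $\sigma_i\sigma_{i+1}\sigma_i=\sigma_{i+1}\sigma_i\sigma_{i+1}$, and commuting the freed $\sigma_{i+1}$ rightward past $\sigma_{i-1},\ldots,\sigma_1$. Deriving it as the $\chi$-image of Lemma \ref{msinmyrum}\ref{main1lem2} avoids even this, so the residual difficulty is merely to keep the index ranges and the domains of the maps $r_a$ mutually consistent throughout the induction.
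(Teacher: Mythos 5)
Your proof is correct and follows essentially the same route as the paper: both arguments commute $\sigma_{i}$ through the factors $r_{j}(\overleftarrow{b_{l}})$ of $t_{k,l}$ one at a time (your induction on $k$ via the recursion $t_{k,l}=\overleftarrow{b_{l}}\cdot r_{1}(t_{k-1,l})$ is a repackaging of the paper's $k$-fold application of its claim (\ref{twhite}), with the homomorphism $r_{1}$ absorbing the index shifts), and both obtain (ii) from (i) by applying the anti-automorphism $\chi$. The one genuine economy in your version is deriving the single-factor identity $\sigma_{i}\overleftarrow{b_{l}}=\overleftarrow{b_{l}}\sigma_{i+1}$ as the $\chi$-image of Lemma \ref{msinmyrum}\ref{main1lem2} rather than reproving it from the braid relations, which spares the casework on $l$ and $i$ that the paper carries out to establish (\ref{twhite}) directly.
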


\begin{proof} (These identities are graphically obvious; see Figure \ref{lem420} below)\hspace{1mm}
\begin{enumerate}[label=(\roman*)]

\item \textit{Claim:}\\
For $l>1$ and $j\geq0$, we have
\begin{equation}\sigma_{i}\cdot r_{j}(\overleftarrow{b_{l}})=r_{j}(\overleftarrow{b_{l}})\cdot\sigma_{i+1} \ \ , \ \ 1+j\leq i \leq (l-1)+j\label{twhite}\end{equation}
For $l=2$, (\ref{twhite}) is simply $\sigma_{1+j}(\sigma_{2+j}\sigma_{1+j})=(\sigma_{2+j}\sigma_{1+j})\sigma_{2+j}$. For $l=3$, 
\begin{equation}\begin{split}
i=1+j  \ : \ \ \sigma_{1+j}(\sigma_{3+j}\sigma_{2+j}\sigma_{1+j})&=\sigma_{3+j}(\sigma_{1+j}\sigma_{2+j}\sigma_{1+j})=(\sigma_{3+j}\sigma_{2+j}\sigma_{1+j})\sigma_{2+j} \\
i=2+j \ : \ \ \sigma_{2+j}(\sigma_{3+j}\sigma_{2+j}\sigma_{1+j})&=\sigma_{3+j}(\sigma_{2+j}\sigma_{3+j}\sigma_{1+j})=(\sigma_{3+j}\sigma_{2+j}\sigma_{1+j})\sigma_{2+j} 
\end{split}\end{equation}
Let $l\geq4$. For $2+j\leq i \leq (l-2)+j$, 
\begin{alignat*}{4}\sigma_{i}\cdot r_{j}(\overleftarrow{b_{l}})=\sigma_{i}\cdot\sigma_{l+j\ldots1+j}&=\sigma_{l+j\ldots i+2}\cdot\sigma_{i}\sigma_{i+1}\sigma_{i}\cdot\sigma_{i-1\ldots 1+j}\\
&=\sigma_{l+j\ldots i+2}\cdot\sigma_{i+1}\sigma_{i}\sigma_{i+1}\cdot\sigma_{i-1\ldots 1+j}\\
&=r_{j}(\overleftarrow{b_{l}})\cdot\sigma_{i+1}\end{alignat*}
For $i=1+j$,
\begin{equation*}\sigma_{1+j}\cdot r_{j}(\overleftarrow{b_{l}})=\sigma_{1+j}\cdot\sigma_{l+j\ldots 1+j}=\sigma_{l+j\ldots 3+j}\cdot\sigma_{1+j}\sigma_{2+j}\sigma_{1+j}=r_{j}(\overleftarrow{b_{l}})\cdot\sigma_{2+j}\end{equation*}
and for $i=(l-1)+j$,
\begin{equation*}\sigma_{(l-1)+j}\cdot r_{j}(\overleftarrow{b_{l}})=\sigma_{(l-1)+j}\sigma_{l+j}\sigma_{(l-1)+j}\cdot\sigma_{(l-2)+j\ldots 1+j}=r_{j}(\overleftarrow{b_{l}})\sigma_{l+j}\end{equation*}
This shows the claim. Recall from (\ref{tikell}) that $t_{k,l}=\left[r_{0}(\overleftarrow{b_{l}})\cdot\ldots\cdot r_{k-1}(\overleftarrow{b_{l}})\right]$. By applying the claim $k$ times for $j=0,\ldots,k-1$ (in increasing order) to $\sigma_{i}\cdot t_{k,l}$ for $1\leq i \leq l-1$, we obtain
\begin{equation}\sigma_{i}\cdot \left[r_{0}(\overleftarrow{b_{l}})\cdot\ldots\cdot r_{k-1}(\overleftarrow{b_{l}})\right]=\left[r_{0}(\overleftarrow{b_{l}})\cdot\ldots\cdot r_{k-1}(\overleftarrow{b_{l}})\right]\cdot r_{k}(\sigma_{i})\end{equation}

\vspace{2mm}
\item Applying anti-automorphism $\chi$ to (i) and relabelling yields the result.

\end{enumerate}
\end{proof}

\begin{prop}Given any positive integers $k, l$ such that $k+l\geq2$, we have\begin{enumerate}[label=(\roman*)]
\item $\beta_{k+l}=\left[r_{l}(\beta_{k})\cdot\beta_{l}\right] t_{k,l}$
\item $\beta_{k+l}=t_{l,k}\left[r_{l}(\beta_{k})\cdot\beta_{l}\right]$
\end{enumerate}
where $r_{l}(\beta_{k})$ and $\beta_{l}$ commute.\label{main2lem3}\end{prop}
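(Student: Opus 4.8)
The plan is to prove (i) by a direct (non-inductive) factorisation argument, and then to obtain (ii) from (i) by applying the reversing anti-automorphism $\chi$. The commutativity of $r_l(\beta_k)$ and $\beta_l$ claimed in the statement is immediate and I would dispatch it first: as words, $\beta_l$ involves only $\sigma_1,\dots,\sigma_{l-1}$ while $r_l(\beta_k)$ involves only $\sigma_{l+1},\dots,\sigma_{l+k-1}$, so every generator of one commutes with every generator of the other by the far-commutation relation $|i-j|\geq 2$.

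For (i), I would start from the expansion in Lemma \ref{main2lem1}, namely $\beta_{k+l}=r_{k+l-2}(b_1)\cdots r_0(b_{k+l-1})$, and observe that its first $k-1$ factors are exactly $r_l(\beta_k)=r_{k+l-2}(b_1)\cdots r_l(b_{k-1})$. This gives $\beta_{k+l}=r_l(\beta_k)\cdot W$ with $W=r_{l-1}(b_k)r_{l-2}(b_{k+1})\cdots r_0(b_{k+l-1})$, so the whole content of (i) reduces to showing $W=\beta_l\,t_{k,l}$. Using $b_{k+i}=b_k\,r_k(b_i)$, each factor splits as $r_{l-1-i}(b_{k+i})=F_iS_i$ where $F_i:=r_{l-1-i}(b_k)$ and $S_i:=r_{k+l-1-i}(b_i)$ (with $S_0=e$). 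A comparison of index ranges shows $S_i$ commutes with $F_{i'}$ whenever $i'>i$ (the generators of $F_{i'}$ all lie at least two below those of $S_i$), so the interleaved product $F_0F_1S_1F_2S_2\cdots F_{l-1}S_{l-1}$ may be sorted into $\bigl(\prod_i F_i\bigr)\bigl(\prod_i S_i\bigr)=t_{k,l}\cdot r_k(\beta_l)$, where the two factors are read off from (\ref{tikell}) and Lemma \ref{main2lem1} respectively. Finally, Lemma \ref{fizko}(i) yields $t_{k,l}\,r_k(\beta_l)=\beta_l\,t_{k,l}$, giving $\beta_{k+l}=r_l(\beta_k)\beta_l\,t_{k,l}$ as required.

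To deduce (ii) I would simply apply $\chi$ to (i). Since $\chi$ reverses products and commutes with each $r_a$, and since $\beta_{k+l}$, $\beta_l$, $\beta_k$ are palindromes by Lemma \ref{main2lem2} while $\overleftarrow{t_{k,l}}=t_{l,k}$, applying $\chi$ to $\beta_{k+l}=r_l(\beta_k)\beta_l\,t_{k,l}$ produces $\beta_{k+l}=t_{l,k}\,\beta_l\,r_l(\beta_k)=t_{l,k}\bigl[r_l(\beta_k)\beta_l\bigr]$, the last step using the commutativity established at the outset.

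The main obstacle is the combinatorial bookkeeping in the reordering of $W$: one must verify precisely that for all $i'>i$ the generator index-ranges of $F_{i'}$ and $S_i$ are separated by at least two, and that the sorting into $\bigl(\prod_i F_i\bigr)\bigl(\prod_i S_i\bigr)$ transposes only commuting adjacent factors while preserving the relative orders of the $F_i$ and of the $S_i$ (so that the two collected products are genuinely $t_{k,l}$ and $r_k(\beta_l)$). Every other step is either a direct reading-off from Lemma \ref{main2lem1} or a single application of Lemma \ref{fizko}, so it is this index count and sorting argument where care is needed.
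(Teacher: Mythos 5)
Your proposal is correct, and for part (i) it takes a genuinely different route from the paper. Both arguments begin identically, peeling off the first $k-1$ factors of the Lemma \ref{main2lem1} expansion of $\beta_{k+l}$ as $r_{l}(\beta_{k})$ and reducing to the identity $r_{l-1}(b_{k})\cdots r_{0}(b_{k+l-1})=\beta_{l}\,t_{k,l}$; but where the paper establishes this by a lengthy double induction (on $l$ for fixed $k$ and on $k$ for fixed $l$, requiring Lemma \ref{aphex} and the auxiliary claim (\ref{urghclaim})), you give a direct factorisation: splitting $r_{l-1-i}(b_{k+i})=F_{i}S_{i}$ with $F_{i}=r_{l-1-i}(b_{k})$, $S_{i}=r_{k+l-1-i}(b_{i})$ and sorting by far-commutation. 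Your index count is right --- the generators of $F_{i'}$ lie in $\{l-i',\dots,l+k-1-i'\}$ and those of $S_{i}$ in $\{k+l-i,\dots,k+l-1\}$, separated by at least $2$ precisely when $i'>i$ --- so the interleaved word collects to $t_{k,l}\cdot r_{k}(\beta_{l})$ (matching (\ref{tikell}) and Lemma \ref{main2lem1}), and a single application of Lemma \ref{fizko}(i) (trivial when $l=1$) converts this to $\beta_{l}\,t_{k,l}$. This is shorter and more transparent than the paper's induction; the only structural difference is that you reach the form $r_{l}(\beta_{k})\cdot t_{k,l}\cdot r_{k}(\beta_{l})$ (the paper's Theorem \ref{ssbrecur}(iv)) first and then convert, whereas the paper derives (iv) only afterwards from (i)--(ii). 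There is no circularity, since Lemma \ref{fizko} is proved independently. Your deduction of (ii) from (i) via the anti-automorphism $\chi$, the palindrome property of Lemma \ref{main2lem2}, $\overleftarrow{t_{k,l}}=t_{l,k}$, and the commutativity of $r_{l}(\beta_{k})$ with $\beta_{l}$ is exactly the paper's argument.
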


\vspace{5mm}
\begin{proof}\hspace{1mm}\begin{enumerate}[label=(\roman*)]
\item By Lemma \ref{main2lem1}, we have
\begin{equation}\beta_{k+l}=r_{k+l-2}(b_{1})\cdot r_{k+l-3}(b_{2})\cdot\ldots\cdot r_{0}(b_{k+l-1})\end{equation}
and
\begin{equation}\begin{split}r_{l}(\beta_{k})&=r_{l}\left(r_{k-2}(b_{1})\cdot r_{k-3}(b_{2})\cdot\ldots\cdot r_{0}(b_{k-1})\right)\\
&=r_{k+l-2}(b_{1})\cdot r_{k+l-3}(b_{2})\cdot\ldots\cdot r_{l}(b_{k-1})\end{split}\end{equation}
whence it suffices to show that 
\begin{equation}r_{l-1}(b_{k})\cdot\ldots\cdot r_{0}(b_{k+l-1})=\left[r_{l-2}(b_{1})\cdot\ldots\cdot r_{0}(b_{l-1})\right]\cdot\left[r_{0}(\overleftarrow{b_{l}})\cdot\ldots\cdot r_{k-1}(\overleftarrow{b_{l}})\right]\label{cruxor1}\end{equation}
where the right-hand side of (\ref{cruxor1}) is $\beta_{l}\cdot t_{k,l}$. We prove (\ref{cruxor1}) by induction. \\ \\
\noindent First, perform induction on $l$ for fixed $k$. The base case $(k,l)=(k,1)$ is 
\begin{equation}r_{0}(b_{k})=r_{0}(b_{1})\cdot\ldots\cdot r_{k-1}(b_{1})\end{equation}
which is clearly true. Now suppose (\ref{cruxor1}) holds for some $l$ given fixed $k$. Then we want to show that (\ref{cruxor1}) also holds for $(k,l+1)$ i.e. 
\begin{equation}r_{l}(b_{k})\cdot\ldots\cdot r_{0}(b_{k+l})=\left[r_{l-1}(b_{1})\cdot\ldots\cdot r_{0}(b_{l})\right]\cdot\left[r_{0}(\overleftarrow{b_{l+1}})\cdot\ldots\cdot r_{k-1}(\overleftarrow{b_{l+1}})\right]\label{cruxor2}\end{equation}
Observe that 
\begin{alignat*}{2}
t_{k,l+1}&=\left[\sigma_{l+1}\cdot r_{0}(\overleftarrow{b_{l}})\right]\cdot\left[\sigma_{l+2}\cdot r_{1}(\overleftarrow{b_{l}})\right]\cdot\ldots\cdot\left[\sigma_{l+k}\cdot r_{k-1}(\overleftarrow{b_{l}})\right]\\
&=\sigma_{l+1,\ldots,l+k}\cdot\left[r_{0}(\overleftarrow{b_{l}})\cdot\ldots\cdot r_{k-1}(\overleftarrow{b_{l}})\right]=r_{l}(b_{k})\cdot t_{k,l}
\end{alignat*}
and so the right-hand side of (\ref{cruxor2}) is 
\begin{alignat*}{3}\beta_{l+1}\cdot t_{k,l+1}=b_{l}\beta_{l}\cdot r_{l}(b_{k}) t_{k,l}&=b_{l}r_{l}(b_{k})\cdot\beta_{l}t_{k,l}\\
&=b_{k+l}\cdot\beta_{l}\cdot t_{k,l}\\
&\stackrel{(\ref{cruxor1})}{=}b_{k+l}\cdot r_{l-1}(b_{k})\cdot\ldots\cdot r_{0}(b_{k+l-1})
\end{alignat*}
where the final equality follows by the induction hypothesis. Thus, in order to show (\ref{cruxor2}), we must show that
\begin{equation}r_{l}(b_{k})\cdot\ldots\cdot r_{0}(b_{k+l})=b_{k+l}\cdot r_{l-1}(b_{k})\cdot\ldots\cdot r_{0}(b_{k+l-1})\label{cruxor3}\end{equation}
under the induction hypothesis. Lemma \ref{msinmyrum}\ref{main1lem2} tells us that $b_{n}\sigma_{i}=\sigma_{i+1}b_{n}$ for any $n\geq2$ and $1\leq i \leq n-1$. Applying this result to the right-hand side of (\ref{cruxor3}), we see that $b_{k+l}$ acts on each $r_{j}$ term by $r_{1}$ as it moves to its right, yielding the left-hand side. This completes the induction on $l$. \\ 

\newpage
\noindent Next, we perform induction on $k$ for fixed $l$. The base case $(k,l)=(1,l)$ is 
\begin{equation}r_{l-1}(b_{1})\cdot\ldots\cdot r_{0}(b_{l})=\left[r_{l-2}(b_{1})\cdot\ldots\cdot r_{0}(b_{l-1})\right]\cdot r_{0}(\overleftarrow{b_{l}})\end{equation}
which we show via repeated application of Lemma \ref{aphex} on the right-hand side.
\begin{alignat*}{4}
&\left[r_{l-2}(b_{1})\cdot\ldots\cdot r_{0}(b_{l-1})\right]\cdot r_{0}(\overleftarrow{b_{l}}) \\
\stackrel{(\ref{finny})}{=}&\left[r_{l-2}(b_{1})\cdot\ldots\cdot r_{1}(b_{l-2})\cdot r_{0}(\overleftarrow{b_{l}})\right]\cdot r_{1}(b_{l-1}) \\
=&\left[r_{l-2}(b_{1})\cdot\ldots\cdot r_{2}(b_{l-3})\right]\cdot\left[r_{1}(b_{l-2})\cdot r_{1}(\overleftarrow{b_{l-1}})\sigma_{1}\right]\cdot r_{1}(b_{l-1})\\
\stackrel{(\ref{finny})}{=}&\left[r_{l-2}(b_{1})\cdot\ldots\cdot r_{2}(b_{l-3})\cdot r_{1}(\overleftarrow{b_{l-1}})\right]\cdot\left[r_{2}(b_{l-2})\sigma_{1}\right]\cdot r_{1}(b_{l-1})\\
=&\left[r_{l-2}(b_{1})\cdot\ldots\cdot r_{3}(b_{l-4})\right]\cdot\left[r_{2}(b_{l-3})\cdot r_{1}(\overleftarrow{b_{l-1}})\right]\cdot\left[r_{2}(b_{l-2})\sigma_{1}\right]\cdot r_{1}(b_{l-1})\\
=&\left[r_{l-2}(b_{1})\cdot\ldots\cdot r_{3}(b_{l-4})\right]\cdot\left[r_{2}(b_{l-3})\cdot r_{2}(\overleftarrow{b_{l-2}})\sigma_{2}\right]\cdot\left[r_{2}(b_{l-2})\sigma_{1}\right]\cdot r_{1}(b_{l-1})\\
\stackrel{(\ref{finny})}{=}&\left[r_{l-2}(b_{1})\cdot\ldots\cdot r_{3}(b_{l-4})\cdot r_{2}(\overleftarrow{b_{l-2}})\right]\cdot\left[r_{3}(b_{l-3})\cdot \sigma_{2}\right]\cdot\left[r_{2}(b_{l-2})\sigma_{1}\right]\cdot r_{1}(b_{l-1})\\
=&\ldots=r_{l-2}(b_{1})\cdot r_{l-3}(\overleftarrow{b_{3}})\cdot\left[r_{l-2}(b_{2})\sigma_{l-3}\right]\cdot\left[r_{l-3}(b_{3})\sigma_{l-4}\right]\cdot\ldots\cdot\left[r_{2}(b_{l-2})\sigma_{1}\right]\cdot r_{1}(b_{l-1})\\
=&\left[r_{l-2}(b_{1})\cdot r_{l-2}(\overleftarrow{b_{2}})\sigma_{l-2}\right]\cdot\left[r_{l-2}(b_{2})\sigma_{l-3}\right]\cdot\left[r_{l-3}(b_{3})\sigma_{l-4}\right]\cdot\ldots\cdot\left[r_{2}(b_{l-2})\sigma_{1}\right]\cdot r_{1}(b_{l-1})\\
\stackrel{(\ref{finny})}{=}&r_{l-2}(\overleftarrow{b_{2}})\cdot\left[r_{l-1}(b_{1})\sigma_{l-2}\right]\cdot\left[r_{l-2}(b_{2})\sigma_{l-3}\right]\cdot\left[r_{l-3}(b_{3})\sigma_{l-4}\right]\cdot\ldots\cdot\left[r_{2}(b_{l-2})\sigma_{1}\right]\cdot r_{1}(b_{l-1})\\
\end{alignat*}

\noindent Observe that $\sigma_{i} r_{i}(b_{l-i})=\sigma_{i}\sigma_{i+1,\ldots,l}=r_{i-1}(b_{l-i+1})$ for $1\leq i<l$, whence
\begin{alignat*}{4}
\left[r_{l-2}(b_{1})\cdot\ldots\cdot r_{0}(b_{l-1})\right]\cdot r_{0}(\overleftarrow{b_{l}})&=r_{l-2}(\overleftarrow{b_{2}})\cdot r_{l-1}(b_{1})\cdot\left[r_{l-3}(b_{3})\cdot\ldots\cdot r_{0}(b_{l})\right]\\
&=\sigma_{l,l-1,l}\cdot\left[r_{l-3}(b_{3})\cdot\ldots\cdot r_{0}(b_{l})\right]\\
&=r_{l-1}(b_{1})\cdot r_{l-2}(b_{2})\cdot\ldots\cdot r_{0}(b_{l})\\
\end{alignat*}
which proves the base case. Now suppose (\ref{cruxor1}) holds for some $k$ given fixed $l$. Then we want to show that (\ref{cruxor1}) also holds for $(k+1,l)$ i.e.

\begin{equation}r_{l-1}(b_{k+1})\cdot\ldots\cdot r_{0}(b_{k+l})=\left[r_{l-2}(b_{1})\cdot\ldots\cdot r_{0}(b_{l-1})\right]\cdot\left[r_{0}(\overleftarrow{b_{l}})\cdot\ldots\cdot r_{k}(\overleftarrow{b_{l}})\right]\label{cruzor1}\end{equation}
Observe that $t_{k+1,l}=t_{k,l}\cdot r_{k}(\overleftarrow{b_{l}})$, and so the right-hand side of (\ref{cruzor1}) is
\begin{alignat*}{4}
\beta_{l}\cdot t_{k+1,l}&=(\beta_{l}\cdot t_{k,l})\cdot r_{k}(\overleftarrow{b_{l}})\\
&\stackrel{(\ref{cruxor1})}{=}\left[r_{l-1}(b_{k})\cdot\ldots\cdot r_{0}(b_{k+l-1})\right]\cdot r_{k}(\overleftarrow{b_{l}})
\end{alignat*}
where the second equality follows by the induction hypothesis. Thus, in order to show (\ref{cruzor1}), we must show that
\begin{equation}r_{l-1}(b_{k+1})\cdot\ldots\cdot r_{0}(b_{k+l})=\left[r_{l-1}(b_{k})\cdot\ldots\cdot r_{0}(b_{k+l-1})\right]\cdot r_{k}(\overleftarrow{b_{l}})\label{excelsiboi}\end{equation}
under the induction hypothesis. For $l=1$, (\ref{excelsiboi}) is
\begin{equation}r_{0}(b_{k+1})=r_{0}(b_{k})\cdot r_{k}(\overleftarrow{b_{1}})\end{equation}
which is clearly true.

\newpage
\noindent\textit{Claim:}
\begin{equation}r_{i-1}(b_{k+l-i})\cdot r_{k+i-1}(\overleftarrow{b_{l-i+1}})=r_{k+i}(\overleftarrow{b_{l-i}})\cdot r_{i-1}(b_{k+l-i+1})\label{urghclaim}\end{equation}
where $1\leq i \leq l-1$ and $l\geq2$. Expanding the left-hand side, we get \begin{equation}\begin{split}\sigma_{i\ldots k+l-1}\cdot\sigma_{k+l\ldots k+i}&=\sigma_{i\ldots k+l}\cdot\sigma_{k+l-1\ldots k+i} \\
&=\sigma_{i\ldots k+l-2}\cdot(\sigma_{k+l-1}\cdot\sigma_{k+l}\cdot\sigma_{k+l-1})\cdot\sigma_{k+l-2\ldots k+i}\\
&=\sigma_{i\ldots k+l-2}\cdot(\sigma_{k+l}\cdot\sigma_{k+l-1}\cdot\sigma_{k+l})\cdot\sigma_{k+l-2\ldots k+i}\\
&=\sigma_{k+l}\cdot(\sigma_{i\ldots k+l}\cdot \sigma_{k+l-2\ldots k+i})
\end{split}\label{eurgh}\end{equation}
It can be shown that
\begin{equation}r_{i-1}(b_{k+l-i+1})\cdot r_{k+i-1}(\overleftarrow{b_{l-i-j+1}})=\sigma_{k+l-j+1}\left(r_{i-1}(b_{k+l-i+1})\cdot r_{k+i-1}(\overleftarrow{b_{l-i-j}})\right)\end{equation}
for $1\leq j\leq l-i$ which we can recursively apply (for $j=2$ to $l-i$) to the parenthesised expression in the last line of (\ref{eurgh}) to obtain
\begin{alignat*}{4}
\sigma_{i\ldots k+l}\cdot \sigma_{k+l-2\ldots k+i}=\sigma_{k+l-1\ldots k+i+1}\cdot\sigma_{i\ldots k+l}
\end{alignat*}
This proves the claim (\ref{urghclaim}). \\ \\ \\
We recursively apply (\ref{urghclaim}) to the right-hand side of (\ref{excelsiboi}) for \mbox{$i=1$ to $l-1$}:
\begin{alignat*}{4}
&\left[r_{l-1}(b_{k})\cdot\ldots\cdot r_{0}(b_{k+l-1})\right]\cdot r_{k}(\overleftarrow{b_{l}})\\
\stackrel{(\ref{urghclaim})}{=}&\left[r_{l-1}(b_{k})\cdot\ldots\cdot r_{1}(b_{k+l-2})\right]r_{k+1}(\overleftarrow{b_{l-1}})\cdot r_{0}(b_{k+l})\\
\stackrel{(\ref{urghclaim})}{=}&\ldots\stackrel{(\ref{urghclaim})}{=}r_{l-1}(b_{k})\cdot r_{k+l-1}(\overleftarrow{b_{1}})\cdot\left[r_{l-2}(b_{k+2})\cdot\ldots\cdot r_{0}(b_{k+l})\right]\\
&\hspace{8mm}=r_{l-1}(b_{k+1})\cdot r_{l-2}(b_{k+2})\cdot\ldots\cdot r_{0}(b_{k+l})
\end{alignat*}
which is the left-hand side\ of (\ref{excelsiboi}). This completes the induction on $k$.\vspace{8mm}

\item Applying the anti-automorphism $\chi$ to (i), we get
\begin{alignat*}{4}\overleftarrow{\beta_{k+l}}&=\overleftarrow{t_{k,l}}\left[\overleftarrow{\beta_{l}}\cdot r_{l}(\overleftarrow{\beta_{k}})\right]\\
&=t_{l,k}\left[\beta_{l}\cdot r_{l}(\beta_{k})\right]\end{alignat*}
where the second line follows by Lemma \ref{main2lem2} and $\overleftarrow{t_{k,l}}=t_{l,k}$. It is clear that $\beta_{l}$ commutes with $r_{l}(\beta_{k})$. The result follows.
\end{enumerate}\end{proof}

\begin{proof}[\underline{Proof of Theorem \ref{ssbrecur}}]\hspace{2mm}\\[2mm]
Expressions (i) and (ii) were already proved in Proposition \ref{main2lem3}. From Lemma \ref{fizko}, it easily follows that for any positive integers\footnote{Lemma \ref{fizko} implies (\ref{dow1}) and (\ref{dow2}) for $l>1$ and $k>1$ respectively. However, it is trivial to see that (\ref{dow1}) and (\ref{dow2}) also hold for $l=1$ and $k=1$ respectively.} $k,l$, we have
\begin{subequations}\begin{align}
\beta_{l}\cdot t_{k,l}&=t_{k,l}\cdot r_{k}(\beta_{l}) \label{dow1} \\
t_{k,l}\cdot\beta_{k}&=r_{l}(\beta_{k})\cdot t_{k,l} \label{dow2}
\end{align}\end{subequations}
\noindent Expressions (iii) and (iv) are implied by (i) and (ii) using either one of (\ref{dow1}),(\ref{dow2}). 
\end{proof}\vspace{5mm}

\begin{figure}[H]\centering \includegraphics[width=0.8\textwidth]{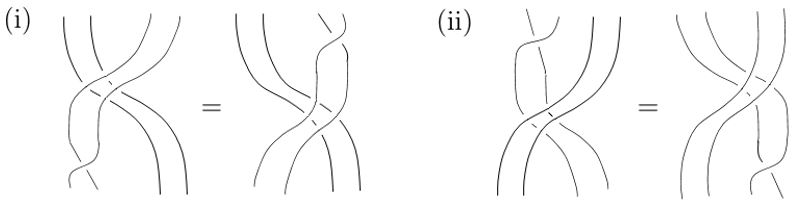}\caption{The identities in Lemma \ref{fizko} are easily seen  through braid isotopy. We illustrate these identities for $(k,l)=(2,2)$.}\label{lem420}\end{figure}

\newpage
\section{Uniqueness of the Superselection Braid}\label{ssbfbs} \vspace{3mm}

\begin{proof}[\underline{Proof of Theorem \ref{sidethm1}}]\hspace{2mm}\\[4mm]
Consider any fusion tree for an $n$-quasiparticle system. Label each of the $(n-1)$ fusion vertices in the tree with an admissible fusion outcome: in particular, the root is assigned label $Q$ corresponding to a superselection sector of the system.\\

\noindent Any superselection braid $\Lambda_{n}$ must be some composition of braids of the form $r_{d}(t_{k,l}^{\pm1})$ (since it must be compatible with the fusion trees). Recall that such braids have associated exchange phase of the form in Theorem \ref{mainthm2}(i) (and Corollary \ref{negtkl}). The statistical phase induced by $\Lambda_{n}$ should not depend on the labels of any internal vertices, and should only depend on the root label $Q$ (since the associated eigenspaces should correspond to the $n$-quasiparticle superselection sectors): we thus denote this phase by $\lambda_{n}(Q)$. We know $\Lambda_{1}=e$ and $\Lambda_{2}$ is uniquely given by $\sigma_{1}$ (up to orientation). \\

\noindent Take an arbitrary fusion vertex $v$ in the tree, and suppose that $\Lambda_{n}$ does not contain the braid that exchanges its incoming branches. This introduces the dependence of $\lambda_{n}(Q)$ on (a) the labels of the immediate children of $v$ (unless they are leaves), and (b) the labels of the parent and sibling of $v$ (unless $v$ is the root). It follows that $\Lambda_{n}$ must either (i) exchange every pair of incoming branches once, or (ii) exchange no branches. Since $\Lambda_{n}$ does not act trivially for $n>1$, it must do the former. \\  By similar considerations, we see that unless the orientation of the branch-exchanging braid acting on a fusion vertex $v$ matches that of the branch-exchanging braids acting on its parent (unless $v$ is the root) and immediate children, then $\lambda_{n}(Q)$ acquires a dependence on some labels other than $Q$. \\ \\
We thus know that $\Lambda_{n}$ must exchange every pair of incoming branches once, and that every such exchange must be oriented the same. By construction, all possible \mbox{superselection} braids have the same associated eigenspaces (namely the super-selection sectors of the system). The above further tells us that all possible super-selection braids whose orientations match have identical associated spectra $\{\lambda_{n}(Q)\}_{Q}$ (while all possible superselection braids of the opposite orientation have identical \mbox{associated} spectra $\{\lambda^{*}_{n}(Q)\}_{Q})$. \\ \\
Next, observe that any $\Lambda_{n}$ must contain the braid that exchanges the incident branches of the root node. Thus, any given $\Lambda_{n}$ of clockwise orientation must be of one or more of the following forms for any $k,l$ such that $n=k+l$ \ :
\begin{enumerate}[label=(\arabic*)]
\item $\left[\Lambda_{l}\cdot r_{l}(\Lambda_{k})\right]t_{k,l}$
\item $t_{k,l}\left[\Lambda_{k}\cdot r_{k}(\Lambda_{l})\right]$
\item $\Lambda_{l}\cdot t_{k,l}\cdot\Lambda_{k}$
\item $r_{l}(\Lambda_{k})\cdot t_{k,l}\cdot r_{k}(\Lambda_{l})$
\end{enumerate}
where for any fixed one of the above four forms, the expressions for all possible $k,l$ must be equal. By Theorem \ref{ssbrecur}, we know that all four forms are equal and are precisely $\Lambda_{n}=\beta_{n}$. \footnote{For $\Lambda_{n}$ anticlockwise, simply append a superscript `$-1$' to each $t$ in (1)-(4). By Theorem \ref{ssbrecur}, they are all equivalent to $\beta_{n}^{-1}$.}
\end{proof}

\newpage
\section{Coherence Identities}\label{penthexapx} 
\begin{enumerate}[label=(\roman*)]

\item The \textit{triangle equations} are given by\vspace{1mm}
\begin{equation}\begin{split}&\begin{gathered}\centering \includegraphics[width=1\textwidth]{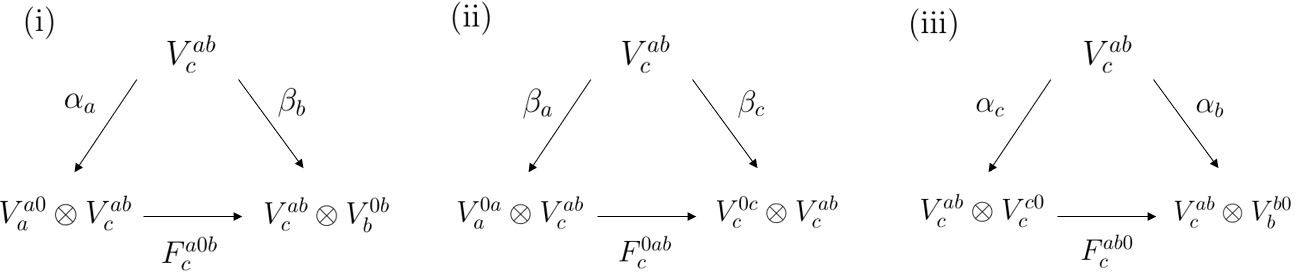}\end{gathered}   \\ 
&\hspace{5mm}\textit{commute for all $a,b,c\in\mathfrak{L}$.}\end{split}\label{treqns}\end{equation}\\
It can be shown that triangle equations (\ref{treqns}) (ii) and (iii) follow as corollaries of fundamental triangle equation (i) and the pentagon equation \cite{kitaev}. \\
Illustrating the fusion trees in (\ref{treqns}),
\begin{figure}[H]\centering \includegraphics[width=0.75\textwidth]{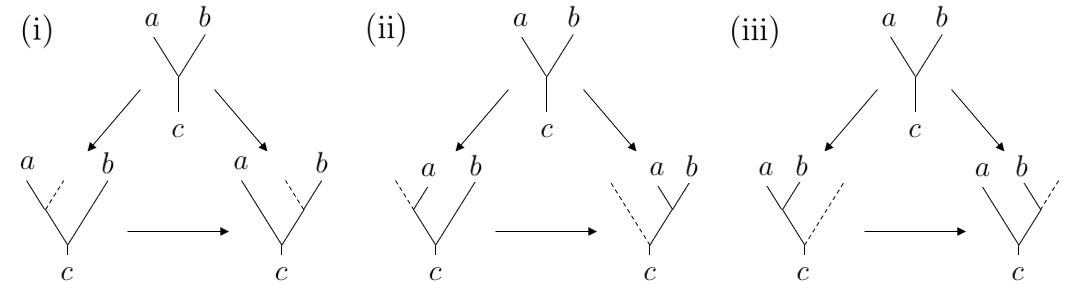} \caption*{} \label{treqns2}\end{figure}
\vspace{-12mm}
\noindent where dashed lines denote the vacuum. Independently of the gauge, symbols $F^{a0b}_{c}, F^{0ab}_{c}$ and $F^{ab0}_{c}$ correspond to the identity map\footnote{In the $6j$ fusion system formalism, this requirement is referred to as the \textit{triangle axiom} \cite{wangbook}.}. Then following (\ref{alphetiso}), it is clear that the triangle equations will be trivially satisfied.
\vspace{3mm}

\item We have the \textit{pentagon equation}\footnote{This has a nice interpretation in terms of associahedra (convex polytopes whose vertices and edges respectively correspond to distinct fusion bases and F-moves between them); see \cite{kitaev}.} :
\begin{equation}\begin{split}&\begin{gathered}\centering \includegraphics[width=1.1\textwidth]{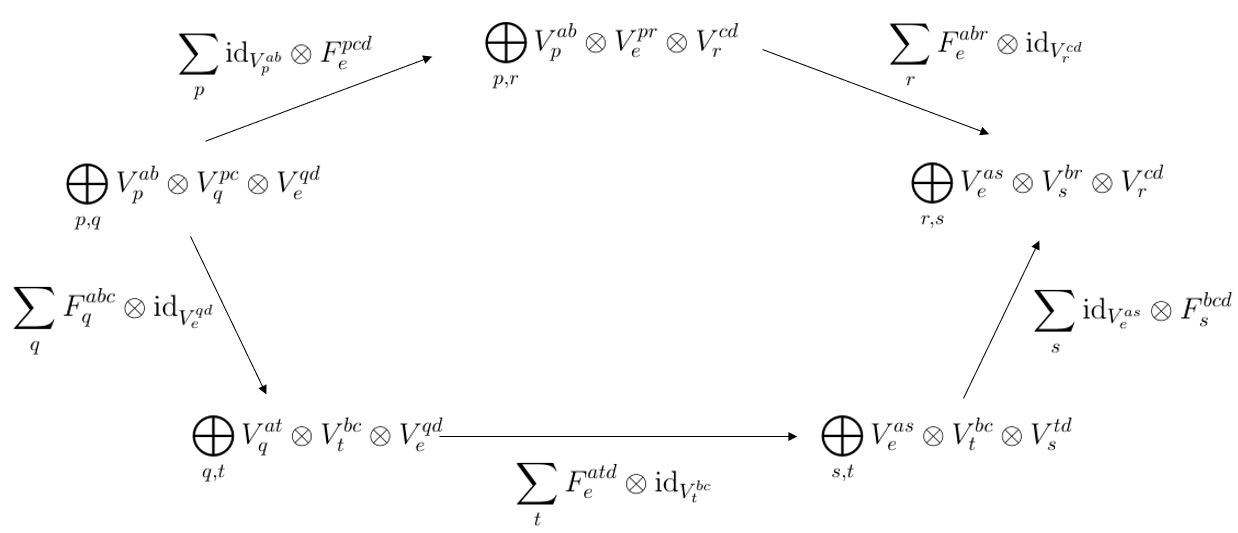}\end{gathered}   \\ 
& \hspace{12mm}\textit{commutes for all $a,b,c,d,e\in\mathfrak{L}$.}\end{split}\label{pent1}\end{equation}
\vspace{5mm}
\noindent Illustrating the fusion trees in (\ref{pent1}),\vspace{-4mm}
\begin{figure}[H]\centering \includegraphics[width=0.5\textwidth]{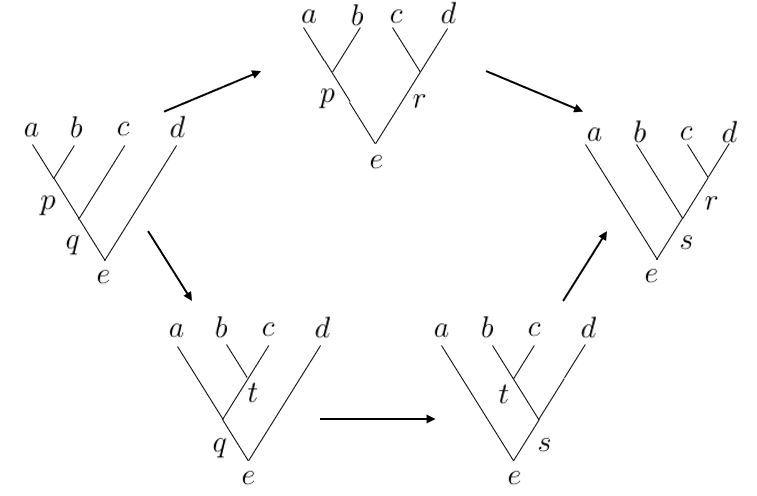} \caption*{} \label{pent2}\end{figure}
\vspace{-8mm}
\noindent The pentagon equation (\ref{pent1}) may be written
\begin{equation}\sum_{p,r}(F^{abr}_{e}\otimes\text{id}_{V^{cd}_{r}})(\text{id}_{V^{ab}_{p}}\otimes F^{pcd}_{e})=\sum_{q,s,t}(\text{id}_{V^{as}_{e}}\otimes F^{bcd}_{s})(F^{atd}_{e}\otimes\text{id}_{V^{bc}_{t}})(F^{abc}_{q}\otimes\text{id}_{V^{qd}_{e}})\label{penteq1}\end{equation}
Fixing the fusion states in the initial and terminal fusion basis, we obtain an entry-wise form of (\ref{penteq1}) which is useful for direct calculations. Fix initial state 
\[\ket{ab\to p;\alpha}\ket{pc\to q;\beta}\ket{qd\to e;\lambda}\]
and terminal state 
\[\ket{as\to e;\rho}\ket{br\to s;\delta}\ket{cd\to r;\gamma}\]
This gives us
\begin{equation}\begin{split}\sum_{\sigma}&\left[F^{abr}_{e}\right]_{(s,\delta,\rho)(p,\alpha,\sigma)}\left[F^{pcd}_{e}\right]_{(r,\gamma,\sigma)(q,\beta,\lambda)}\\
=\sum_{t,\mu,\nu,\eta}&\left[F^{bcd}_{s}\right]_{(r,\gamma,\delta)(t,\mu,\eta)}\left[F^{atd}_{e}\right]_{(s,\eta,\rho)(q,\nu,\lambda)}\left[F^{abc}_{q}\right]_{(t,\mu,\nu)(p,\alpha,\beta)}\label{penteq2}\end{split}\end{equation}
In a \textit{multiplicity-free} theory (a theory where all fusion coefficients are either $0$ or $1$), (\ref{penteq2}) is simply
\begin{equation}\left[F^{abr}_{e}\right]_{sp}\left[F^{pcd}_{e}\right]_{rq}=\sum_{t}\left[F^{bcd}_{s}\right]_{rt}\left[F^{atd}_{e}\right]_{sq}\left[F^{abc}_{q}\right]_{tp}\label{penteq3}\end{equation} 
The pentagon equation is also known as the \textit{Biedenharn-Elliot identity}.

\vspace{12mm}

\item R-matrices are transformations between bases of the form in (\ref{fusbrba}). In the graphical calculus,
\begin{equation}\begin{gathered}\centering \includegraphics[width=0.6\textwidth]{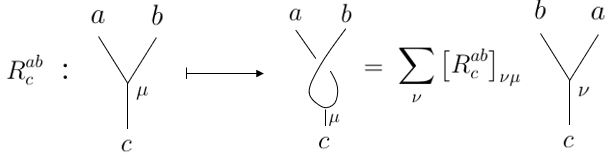}\label{rgraphginv}\end{gathered}\end{equation}
(\ref{rgraphginv}) is the gauge-free description of an R-matrix. Note that the matrix $R^{ab}$ is block-diagonal with block dimensions $\{N^{ab}_{c}\}_{c}$ . 
\newpage
We have the \textit{hexagon equations}\footnote{We roughly sketch the origin of the hexagon equations. Consider the set $F_{n}$ of $n$-leaf fusion trees. Let $\mathscr{F}_{n}$ be the set whose elements are given by those in $F_{n}$ but with all possible permutations of the string $q_{1}\ldots q_{n}$ labelling the leaves (so that $|\mathscr{F}_{n}|=n!\cdot|F_{n}|$). We define a digraph $KR_{n}$ to have vertex set $\mathscr{F}_{n}$ and edges given by all $F$ and (identically oriented) $R$ moves transforming between the elements of $\mathscr{F}_{n}$. Any pair of adjacent vertices will share precisely one edge. In order to have compatibility between all $F$ and $R$ moves, it suffices to demand that the Yang-Baxter equation is satisfied: we thus only need to consider subgraphs of the form $KR_{3}$ i.e.\ the \textit{Franklin graph}. This graph may be drawn as a dodecagon containing six hexagons and three (automatically commutative) quadrilaterals. The Yang-Baxter equation holds if the dodecagon commutes: imposing the hexagon equations ensures that the hexagons commute, and consequently that the dodecagon commutes. We remark that by restricting the edges of $KR_{n}$ to only permit $R$-moves acting on two \textit{leaves} in a direct fusion channel, we obtain the graph corresponding to the $n^{th}$ \textit{permutoassociahedron} \cite{kapranov}.} :
\vspace{-7mm}
\begin{equation}\begin{split}&\begin{gathered}\centering \includegraphics[width=1.05\textwidth]{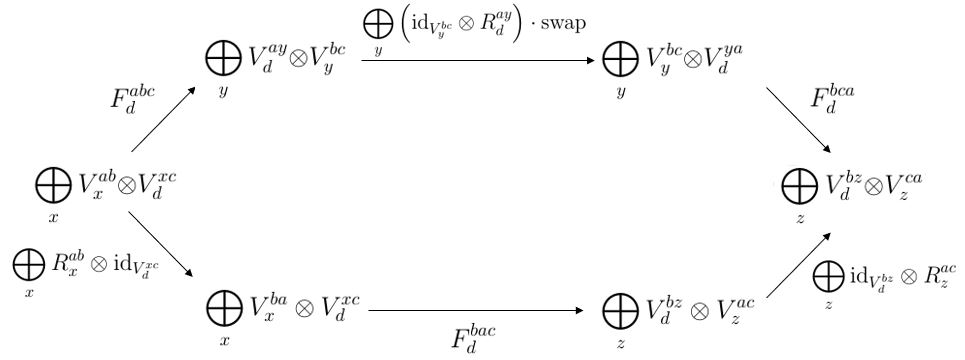}\end{gathered}   \\ 
&\begin{gathered}\centering \includegraphics[width=1.065\textwidth]{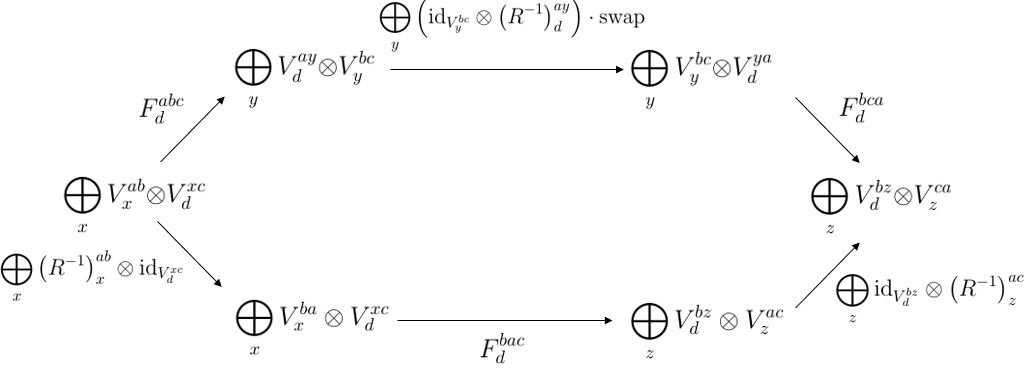}\end{gathered}   \\ 
&\hspace{10mm}\textit{commute for all $a,b,c,d\in\mathfrak{L}$.}\end{split}\label{hexeqs1}\end{equation} 

\vspace{-5mm}
\begin{figure}[H]\centering \includegraphics[width=0.45\textwidth]{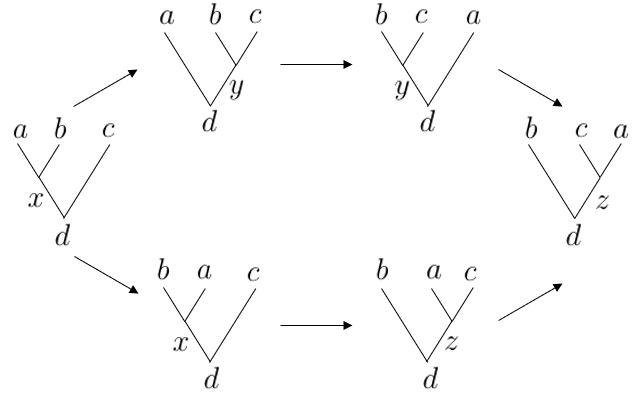} \caption{\noindent An illustration of the fusion trees in (\ref{hexeqs1}).} \label{hexeqn2}\end{figure}

\noindent Note that the only difference between the two hexagon equations is the orientation of the R-moves. Fix initial state $\ket{ab\to x;\alpha}\ket{xc\to d;\lambda}$ and terminal state $\ket{bz\to d;\rho}\ket{ca\to z;\gamma}$ in (\ref{hexeqs1}). This gives us
\vspace{2mm}
\begin{subequations}\begin{align}
\begin{split}\sum_{y,\beta,\mu,\sigma}&\left[F^{bca}_{d}\right]_{(z,\gamma,\rho)(y,\beta,\sigma)}\left[R^{ay}_{d}\right]_{\sigma\mu}\left[F^{abc}_{d}\right]_{(y,\beta,\mu)(x,\alpha,\lambda)} \\
=\sum_{\delta,\epsilon}&\left[R^{ac}_{z}\right]_{\gamma\epsilon}\left[F^{bac}_{d}\right]_{(z,\epsilon,\rho)(x,\delta,\lambda)}\left[R^{ab}_{x}\right]_{\delta\alpha}\end{split} \\
\begin{split} 
\sum_{y,\beta,\mu,\sigma}&\left[F^{bca}_{d}\right]_{(z,\gamma,\rho)(y,\beta,\sigma)}\left[\left(R^{-1}\right)^{ay}_{d}\right]_{\sigma\mu}\left[F^{abc}_{d}\right]_{(y,\beta,\mu)(x,\alpha,\lambda)} \\
=\sum_{\delta,\epsilon}&\left[\left(R^{-1}\right)^{ac}_{z}\right]_{\gamma\epsilon}\left[F^{bac}_{d}\right]_{(z,\epsilon,\rho)(x,\delta,\lambda)}\left[\left(R^{-1}\right)^{ab}_{x}\right]_{\delta\alpha}
\end{split}
\end{align}\end{subequations}
\noindent which in the construction from (\ref{rdiag1})-(\ref{rdiag2}) becomes
\begin{subequations}\begin{align}
\begin{split}
\sum_{y,\beta,\mu}&\left[F^{bca}_{d}\right]_{(z,\gamma,\rho)(y,\beta,\mu)}\left[R^{ay}_{d}\right]_{\mu\mu}\left[F^{abc}_{d}\right]_{(y,\beta,\mu)(x,\alpha,\lambda)} \\
=&\left[R^{ac}_{z}\right]_{\gamma\gamma}\left[F^{bac}_{d}\right]_{(z,\gamma,\rho)(x,\alpha,\lambda)}\left[R^{ab}_{x}\right]_{\alpha\alpha}
\end{split} \\
\begin{split}
\sum_{y,\beta,\mu}&\left[F^{bca}_{d}\right]_{(z,\gamma,\rho)(y,\beta,\mu)}\left[\left(R^{-1}\right)^{ay}_{d}\right]_{\mu\mu}\left[F^{abc}_{d}\right]_{(y,\beta,\mu)(x,\alpha,\lambda)} \\
=&\left[\left(R^{-1}\right)^{ac}_{z}\right]_{\gamma\gamma}\left[F^{bac}_{d}\right]_{(z,\gamma,\rho)(x,\alpha,\lambda)}\left[\left(R^{-1}\right)^{ab}_{x}\right]_{\alpha\alpha}
\end{split}
\end{align}\end{subequations}
and which in a multiplicity-free theory becomes
\begin{subequations}\begin{align}
\begin{split}
\sum_{y}\left[F^{bca}_{d}\right]_{zy}\left[R^{ay}_{d}\right]\left[F^{abc}_{d}\right]_{yx}=\left[R^{ac}_{z}\right]\left[F^{bac}_{d}\right]_{zx}\left[R^{ab}_{x}\right]
\end{split} \\
\begin{split}
\sum_{y}\left[F^{bca}_{d}\right]_{zy}\left[\left(R^{-1}\right)^{ay}_{d}\right]\left[F^{abc}_{d}\right]_{yx}=\left[\left(R^{-1}\right)^{ac}_{z}\right]\left[F^{bac}_{d}\right]_{zx}\left[\left(R^{-1}\right)^{ab}_{x}\right]
\end{split}
\end{align}\end{subequations} 
\end{enumerate}

\end{document}